\newcommand*{\addFileDependency}[1]{
  \typeout{(#1)}
  \@addtofilelist{#1}
  \IfFileExists{#1}{}{\typeout{No file #1.}}
}
\definecolor{Pink}{rgb}{1.0, 0.5, 0.5}
\definecolor{Maroon}{rgb}{0.8, 0.0, 0.0}
\def\boxit#1{\vbox{\hrule\hbox{\vrule\kern6pt\vbox{\kern6pt#1\kern6pt}\kern6pt\vrule}\hrule}}
\newcommand{\B}{\mathbf}
\newcommand{\mA}{\mathcal{A}}
\newcommand{\M}{\mbox}
\newtheorem{theorem}{Theorem}[section]
\newtheorem{lemma}[theorem]{Lemma}
\newenvironment{proof}[1][Proof]{\begin{trivlist}
\item[\hskip \labelsep {\bfseries #1}]}{\end{trivlist}}
\newcommand{\bM}{\mbox{\bf M}}
\newcommand{\bu}{\mbox{\bf u}}
\newcommand{\bv}{\mbox{\bf v}}
\newcommand{\bw}{\mbox{\bf w}}
\newcommand{\bB}{\mbox{\bf B}}
\newcommand{\bC}{\mbox{\bf C}}
\newcommand{\balpha}{\mbox{\boldmath $\alpha$}}
\newcommand{\bbeta}{\mbox{\boldmath $\beta$}}
\newcommand{\bGamma}{\mbox{\boldmath $\Gamma$}}
\newcommand{\bmu}{\mbox{\boldmath $\mu$}}
\newcommand{\bPhi}{\mbox{\boldmath $\Phi$}}
\newcommand{\what}{\widehat}
\newcommand{\diag}{\mathrm{diag}}
\def\t{^\top}
\def\beqn{\begin{eqnarray}}
\def\eeqn{\end{eqnarray}}
\def\beqns{\begin{eqnarray*}}
\def\eeqns{\end{eqnarray*}}
\def\0{{\bf 0}}
\def\A{{\bf A}}
\def\a{{\bf a}}
\def\b{{\bf b}}
\def\C{{\bf C}}
\def\c{{\bf c}}
\def\D{{\bf D}}
\def\d{{\bf d}}
\def\E{{\bf E}}
\def\I{{\bf I}}
\def\t{{\bf t}}
\def\bO{{\bf O}}
\def\bP{{\bf P}}
\def\U{{\bf U}}
\def\V{{\bf V}}
\def\S{{\bf S}}
\def\u{{\bf u}}
\def\v{{\bf v}}
\def\W{{\bf W}}
\def\w{{\bf w}}
\def\X{{\bf X}}
\def\x{{\bf x}}
\def\Y{{\bf Y}}
\def\y{{\bf y}}
\def\Z{{\bf Z}}
\def\z{{\bf z}}
\def\1{{\bf 1}}
\def\J{{\bf J}}
\def\trans{^{\rm T}}
\def\strans{^{*\rm T}}
\newcommand{\bbR}{\mathbb{R}}
\newcommand{\td}{\tilde{d}}
\newcommand{\bbL}{\mathcal{L}}
\newcommand{\mcA}{\mathcal{A}}
\newcommand{\gammaL}{\gamma^l}
\DeclareMathOperator*{\argmin}{arg\,min}
\newcommand{\bs}{\boldsymbol}
\DeclareMathOperator{\Tr}{tr}
\begin{document}
\sloppy

\begin{frontmatter}
\title{Generalized Co-sparse Factor Regression\tnoteref{t1}}
\tnotetext[t1]{For this work, there exists supplementary materials providing all the proofs, reproducible simulation code,  additional plots,  tables showing model evaluation, and the application  data to demonstrate model efficacy.}
\author[1]{Aditya Mishra\corref{mycorrespondingauthor}}
\cortext[mycorrespondingauthor]{Corresponding author}
\ead{amishra@flatironinstitute.org}

\author[2]{Dipak K. Dey}
\author[3]{Yong Chen}
\author[2]{Kun Chen}

\address[1]{Center for Computational Mathematics, Flatiron Institute,  New York, NY 10010, USA}
\address[2]{Department of Statistics, University of Connecticut, Storrs, CT 06269, USA}
\address[3]{Division of Biostatistics,  Department of Biostatistics, Epidemiology and Informatics,  University of Pennsylvania Perelman School of Medicine,  Philadelphia, PA 19104, USA}
\begin{abstract}
Multivariate regression techniques are commonly applied to explore the associations between large numbers of outcomes and predictors. In real-world applications, the outcomes are often of mixed types, including continuous measurements, binary indicators, and counts, and the observations may also be incomplete. Building upon the recent advances in mixed-outcome modeling and sparse matrix factorization, 
\textit{generalized co-sparse factor regression} (GOFAR) is proposed, which utilizes the flexible vector generalized linear model framework and encodes the outcome dependency through a sparse singular value decomposition (SSVD) of the integrated natural parameter matrix. 
To avoid the estimation of the notoriously difficult joint SSVD, GOFAR  proposes both \textit{sequential} and \textit{parallel} unit-rank estimation procedures.
By combining the ideas of alternating convex search and majorization-minimization, an efficient algorithm with guaranteed convergence is developed to solve the sparse unit-rank problem and implemented in the R package \textsf{gofar}. Extensive simulation studies and two real-world applications demonstrate the effectiveness of the proposed approach. 
\end{abstract}

\begin{keyword}
Divide-and-conquer; Integrative analysis; Multivariate learning; Singular value decomposition
\end{keyword}
\end{frontmatter}
\linenumbers

\section{Introduction}\label{sec2:gecure:intro}

Advances in science and technology have led to exponential growth in the collection of different types of large data in various fields, including healthcare, biology, economics, and finance. 
Many problems of interest amount to exploring the association between multivariate outcomes/responses and multivariate predictors/features. For example, in an ongoing project of the Framingham Heart Study \citep{cupples2007framingham},  researchers are interested in understanding the effect of single nucleotide polymorphisms on multiple phenotypes related to cardiovascular disease. 
Some phenotypes are binary,  depicting medical conditions, whereas others, such as  cholesterol levels, are continuous. In the Longitudinal Study of Aging (LSOA)  \citep{stanziano2010review}, it is of interest to understand the association between future health status (memory, depression, cognitive ability) and predictors such as demographics, past medical conditions, and daily activities. Here some outcome measurements, such as memory score, are continuous, while others are of the categorical/indicator type.

As exemplified by the aforementioned problems, the outcome variables collected in real-world studies are often of mixed types. Moreover, the data can be of large dimensionality and may contain a substantial number of missing values. It is apparent that classical multivariate linear regression (MLR) is no longer applicable, and the approach of separately regressing each response using the predictors via a generalized linear model may also perform poorly because it ignores the potential dependency of the mixed outcomes. Our main objective in this paper is thus to tackle the problem of modeling mixed and incomplete outcomes with large-scale data.

Many existing multivariate regression methods focus on continuous outcomes. Principal component regression \citep{jolliffe1982note} and multivariate ridge regression \citep{Hoerl1970,Brown1980} focus on tackling the problem of multicollinearity among predictors. Reduced-rank regression \citep{anderson1951,reinsel1998,bunea2011} achieves dimension reduction and information sharing by assuming that all the responses are related to a small set of latent factors. Sparse \citep{tib1996} multivariate regression models \citep{turlach2005,peng2010,obo2011} take advantage of certain shared sparsity patterns in the association structure. Regularized multivariate models often boil down to matrix approximation problems; see, e.g., singular-value penalized models \citep{yuan2007, negahban2011estimation, koltch2011, chen2012ann}, and sparse matrix factorization models \citep{chen2012jrssb, chen2012sparse, bunea2012joint, MaSun2014,mishra2017sequential}.

Until recently, only a handful of methods have attempted to solve the modeling challenge with non-Gaussian and mixed outcomes. \citet{cox1992response} and \citet{fitzmaurice1995regression} proposed a likelihood-based approach for bivariate responses in which one variable is discrete and the other is continuous. \citet{prentice1991estimating} and \citet{zhao1992multivariate} utilized the generalized estimating equations framework to obtain mean and covariance estimates. \citet{she2011} and \citet{yee2003} studied the reduced-rank vector generalized linear model (RR-VGLM), assuming the outcomes are of the same type and are from an exponential family distribution \citep{jorgensen1987exponential}. Recently, \citet{chenandluo2017} proposed \emph{mixed-outcome reduced-rank regression} (mRRR), extending the RR-VGLM to the more realistic scenario of mixed and incomplete outcomes. However, the method only considered rank reduction, rendering it inapplicable when many redundant or irrelevant variables are present.

Building upon the recent advances in mixed-outcome modeling and sparse matrix factorization, we propose \textit{generalized co-sparse factor regression}, which utilizes the flexible vector generalized linear model framework \citep{she2011,chenandluo2017} and encodes the outcome dependency through an appealing sparse singular value decomposition (SVD) of the integrated natural parameter matrix. The co-sparse SVD structure in our model, i.e., the fact that both the left and the right singular vectors are sparse, implies a flexible dependency pattern between the outcomes and the predictors: on one hand, the model allows a few latent predictors to be constructed from possibly different subsets of the original predictors, and on the other hand, the model allows the responses to be associated with possibly different subsets of the predictors. The model also covers the generalized matrix completion problem under unsupervised learning. Motivated by \citet{chen2012jrssb} and \citet{mishra2017sequential}, we propose computationally efficient divide-and-conquer procedures to conduct model estimation. The main idea is to extract unit-rank components of the natural parameter matrix in either a \textit{sequential} or a \textit{parallel} way, thus avoiding the difficult joint estimation alternative. 
Each step solves a generalized co-sparse unit-rank estimation problem, and these problems differ only in their offset terms, which are designed to account for the effects of other non-targeted unit-rank components. Our model also allows us to deal with the missing values in the same way as in the celebrated matrix completion. To the best of our knowledge, our approach is among the first to enable both variable selection and latent factor modeling in analyzing incomplete and mixed outcomes.

The rest of the paper is organized as follows. We propose a generalized co-sparse factor regression model in Section \ref{sec2:method}. Section \ref{sec:dac} proposes divide-and-conquer estimation procedures, which reduce the problem to a set of generalized unit-rank estimation problems; these are then studied in detail in Section \ref{sec2:computation}. We  study the  large sample property of the estimator in a unit step in Section \ref{sec2:theory}. 
Section \ref{sec2:simulation} shows the effectiveness of the proposed procedures via extensive simulation studies. Two applications, one on the longitudinal study of aging and the other on sound annotation, are presented in Section \ref{sec:gsecure-application}. We provide some concluding remarks in Section \ref{sec:dis}. All the proofs are provided in Supplementary Materials.

\section{Generalized Co-Sparse Factor Regression}\label{sec2:method}
Consider the multivariate regression setup with $n$ instances of independent observations, forming a response/outcome matrix $\Y = [y_{ik}]_{n \times q}   = [\y_1,\ldots,\y_n]\trans \in\mathbb{R}^{n\times q}$, a predictor/feature matrix $\X =[\x_1,\ldots,\x_n]\trans \in \mathbb{R}^{n\times p}$, and a control variable matrix $\Z =  [\z_1,\ldots,\z_n]\trans \in \mathbb{R}^{n\times p_z}$. 
$\Z$ consists of a set of variables that should always be included in the model and are thus not regularized. Depending on the application, we consider experimental input such as age or gender (factor variable) as control variables.

 We assume that each of the response  variables follows a distribution in the exponential-dispersion family \citep{jorgensen1987exponential}. The probability density function of the $i$th entry in the $k$th outcome, $y_{ik}$, is given by 
\begin{align}
f(y_{ik};\theta_{ik}^*,\phi_k^*) = \exp\left\{\frac{y_{ik}\theta_{ik}^*-b_k(\theta_{ik}^*)}{a_k(\phi_k^*)}+c_k(y_{ik};\phi_k^*)\right\},\label{eq:glmmodel}
\end{align}
where $\theta_{ik}^*$ is the natural parameter, $\phi_k^* \in \mathbb{R}^+$ is the dispersion parameter, and $\{a_k(\cdot)$, $b_k(\cdot)$, $c_k(\cdot)\}$ are functions determined by the specific distribution; see Table 1 in  Supplementary Materials for more details on some of the standard distributions in the exponential family,  e.g., Gaussian, Poisson and Bernoulli. 
We collectively denote the natural parameters of $\Y$  by $\bs\Theta^* = [\theta_{ik}^*]_{n \times q}  \in \mathbb{R}^{n \times q}$ and the  dispersion parameters by $\bPhi^* = \diag[a_1(\phi_1^*), \ldots, a_q(\phi_q^*)]$. Let $g_k=(b_k')^{-1}$ be the canonical link function. Consequently,  $\mathbb{E}(y_{ik})= b_k'(\theta_{ik}^*)= g_k^{-1}(\theta_{ik}^*)$, where $b_k^'(\cdot)$ denotes the derivative function of $b_k(\cdot)$.

We model the natural parameter matrix $\bs\Theta^*$ as
\begin{align}
\bs \Theta(\C^*,\bbeta^*, \bO) = \bO + \Z\bbeta^* + \X\C^*, \label{eq:ThetaDef}
\end{align}
where $ \bO  = [o_{ik}]_{n \times q} \in \mathbb{R}^{n \times q}$ is a fixed offset term, $\C^*=[\c_1^*,\ldots,\c_q^*] \in \mathbb{R}^{p\times q}$ is the coefficient matrix corresponding to the predictors, and $\bbeta=[\bbeta_1^*,\ldots,\bbeta_q^*]  \in \mathbb{R}^{p_z\times q}$ is the coefficient matrix corresponding to the control variables. 
The intercept is included by taking the first column of $\Z$ to be $\1_n$, the $n\times 1$ vector of ones. For simplicity, we may write $\bs\Theta(\C^*,\bbeta^*, \bO)$ as $\bs\Theta^*$ if no confusion arises.

To proceed further, we define some notations. The $k$th column of $\bs\Theta^*$ is denoted  $\bs\Theta_{.k}^*$, and consequently $\b_k(\bs\Theta_{.k}^*) = [b_k(\theta_{ik}^*), \ldots, b_k(\theta_{nk}^*)]\trans$. The element-wise derivative vector of $\b_k(\bs\Theta_{.k}^*)$ is $\b_k^'(\bs\Theta_{.k}^*) = [b_k^'(\theta_{ik}^*), \ldots,b_k^' (\theta_{nk}^*)]\trans$. We then define
\begin{align}
\bB(\bs\Theta^*) = [\b_1(\bs\Theta_{.1}^*),\ldots,\b_q(\bs\Theta_{.q}^*)], \quad \bB^'(\bs\Theta^*) = [\b_1^'(\bs\Theta_{.1}^*),\ldots,\b_q^'(\bs\Theta_{.q}^*) ]. \label{eq:defbtheta}
\end{align}
Similarly, $\bB''(\bs\Theta^*)$ denotes the second-order derivative of $\bB(\bs\Theta^*)$.

We assume the outcomes are conditionally independent given $\X$ and $\Z$. Then the joint negative log-likelihood function is given by 
\begin{align}
\bbL(\bs\Theta^*, \bPhi^*) =  -\sum_{i=1}^n\sum_{k=1}^q \ell_k(\theta_{ik}^*,\phi_k^*)  ,\label{eq:negLogLi}
\end{align}
where $\ell_k(\theta_{ik}^*,\phi_k^*) = \log{f(y_{ik};\theta_{ik}^*,\phi_k^*)}$. Using the definition from \eqref{eq:defbtheta}, a convenient  representation of \eqref{eq:negLogLi} is given by 
\begin{align}
\bbL(\bs\Theta^*,\bPhi^*)  = - \Tr(\Y\trans \bs\Theta^* \bPhi^{-1}^* ) + \Tr({\J}\trans\bB(\bs\Theta^*)\bPhi^{-1}^*),\label{eq:negLmat}
\end{align}
where $\J = \1_{n \times q}$ and $ \Tr( \A)$ is the \textit{trace} of a square  matrix $\A$. In the presence of missing entries in $\Y$, let us define an index set of the observed outcomes as
$$
\bs\Omega = \{(i,k); y_{ik} \mbox{ is observed}, i=1,\ldots,n, k=1,\ldots,q\},
$$
and denote the projection of $\Y$ onto $\bs\Omega$ by $\widetilde{\Y}= \mathcal{P}_{\bs\Omega}(\Y)$, where $\tilde{y}_{ik} = y_{ik}$ for any $(i,k)\in\bs\Omega$ and  $\tilde{y}_{ik}=0$ otherwise. 
Accordingly, the negative log-likelihood function with incomplete data is given by 
\begin{align*}
\bbL(\bs\Theta^*,\bPhi^*)  = - \Tr( \widetilde{\Y}\trans\bs\Theta^* \bPhi^{-1}^* ) + \Tr( \widetilde{\J}\trans\bB(\bs\Theta^*)\bPhi^{-1}^* ) , 
\end{align*}
where $\widetilde{\J} = \mathcal{P}_{\bs\Omega}(\J)$. 
Henceforth, we mainly focus on the complete data case \eqref{eq:negLmat} when presenting our proposed model, as the extension to the missing data case by and large only requires replacing $\Y$ by $\widetilde{\Y}$ and $\J$ by $\widetilde{\J}$.

Without imposing additional structural assumptions on the parameters, maximum likelihood estimation, i.e.,  minimizing $\bbL(\bs\Theta,\bPhi)$ with respect to  $\{\C, \bbeta, \bPhi\}$ for $\bs\Theta = \bO + \X\C + \Z\bbeta$, does not work in high-dimensional settings. The marginal modeling approach, i.e., the fitting of a univariate generalized linear model (uGLM) (or its regularized version) for each individual response, would ignore the dependency among the outcomes. The mixed reduced rank regression (mRRR) \citep{chenandluo2017} imposes a rank constraint on $\C$, but its usage is limited as it does not explore variable selection.

We assume that the regression association is driven by a few latent factors, each of which is constructed from a possibly different subset of the predictors, and, moreover, that each response may be associated with a possibly different subset of the latent factors. To be specific, this amounts to assuming a \emph{co-sparse} SVD of $\C^*$ \citep{mishra2017sequential}, i.e., we decompose $\C^*$ as 
\begin{align}
\C^* = \U^* \D^*\V^*\trans , \qquad \M{s.t.} \qquad \U^*\trans\X\trans\X\U^*/n = \V^*\trans\V^* = \I_{r^*},\label{sec2:intro:cdef}
\end{align}
where both the left singular vector matrix $\U^* = [\u_1^*,\ldots,\u^*_{r^*}] \in \bbR^{p \times {r^*}}$ and the right singular vector matrix $\V^* = [\v_1^*,\ldots,\v^*_{r^*}] \in \bbR^{q \times r}$ are assumed to be \emph{sparse}, and $\D = \M{diag}\{d_1^*,\ldots,d_{r^*} \} \in \bbR^{{r^*} \times {r^*}}$ is the diagonal matrix with the nonzero singular values on its diagonal. 
 The orthogonality constraints ensuring identifiability suggest that the sample latent factors, i.e., $(1/\sqrt{n})\X\u_k^*$ for $k=1,\ldots,r^*$, are uncorrelated with each other, and the strength of the association between the latent factors and the multivariate response $\Y$ is denoted by the singular values $\{d_1^*,\ldots,d_{r^*} \}$.
Figure \ref{fig:gofar-model} shows a diagram of the proposed model structure. 
We thus term the proposed model  {\bf G}eneralized c{\bf o}-sparse {\bf fa}ctor {\bf r}egression (GOFAR).\\

\tikzstyle{xu}=[draw, fill=blue!20, text width=2.5em, 
    text centered, minimum height=7em]

\tikzstyle{xv}=[draw, fill=red!20, text width=6em, 
    text centered, minimum height=2.5em,drop shadow]

\tikzstyle{ann} = [text width=5em, text centered]

\tikzstyle{by} = [xu, text width=5em, fill=red!20, 
    minimum height=7em, rounded corners, drop shadow]

\tikzstyle{bx} = [xu, text width=6em, fill=blue!20, 
    minimum height=10em, rounded corners, drop shadow]
    
\tikzstyle{bg} = [xu, text width=4em, fill=gray!20, 
    minimum height=7em, rounded corners, drop shadow]
    
\tikzstyle{bg1} = [xu, text width=5em, fill=gray!20, 
    minimum height=7em, rounded corners, drop shadow]
    
\tikzstyle{bxx} = [xu, text width=10em, fill=gray!20, 
    minimum height=7em, rounded corners, drop shadow]
    
\tikzstyle{bxd} = [xu, text width=1em, fill=gray!30, 
    minimum height=1em, rounded corners, drop shadow]
    
\tikzstyle{buu} = [xu, text width=1em, fill=white!20, 
    minimum height=10em, rounded corners, drop shadow]

\tikzstyle{bvv}=[xu, text width=4em, fill=white!20, 
    minimum height=1em, rounded corners, drop shadow]
    
\tikzstyle{spsv} = [xu, text width=1em, fill=gray!30, 
    minimum height=1em]
    
\tikzstyle{spsv1} = [xu, text width=0.1em, fill=gray!30, 
    minimum height=1.5em]

\tikzstyle{myarrows}=[line width=0.3mm,draw=black,-triangle 45,postaction={draw, line width=1.5mm, shorten >=4mm, -}]

\usetikzlibrary{arrows, decorations.markings}
\tikzstyle{vecArrow} = [thick, decoration={markings,mark=at position
   1 with {\arrow[semithick]{open triangle 60}}},
   double distance=1.4pt, shorten >= 5.5pt,
   preaction = {decorate},
   postaction = {draw,line width=1.4pt, white,shorten >= 4.5pt}]
\tikzstyle{innerWhite} = [semithick, white, line width=1.4pt, shorten >= 4.5pt]

\begin{figure}
\centering
\resizebox{0.8\textwidth}{!}{
\begin{tikzpicture}[remember picture]
    \node at (2,-2)  (by) [by, label = below:Response]{$\textbf{Y}$};
    \path (by.east)+(2.5,0) node (th) [bg1, label = below:Natural parameter] {$\boldsymbol\Theta^*$};
    \path (th.east)+ (0.5,0) node (sgeq) [ann] {$=$};
    \path (sgeq.east)+(0.5,0) node (of) [bg1,label = below:Offset] {$\textbf{O}$};
     \path (of.east)+ (0.5,0) node (sgp) [ann] {$+$};
        \path (sgp.east)+(0.5,0) node (xc1) [bg1, label = below:Control effect ] {$\textbf{Z}\bbeta^*$};
        \path (xc1.east)+ (0.5,0) node (sgp1) [ann] {$+$};
                \path (sgp1.east)+(0.5,0) node (xc2) [bg1, label = below:Predictor effect ] {$ \textbf{XC}^*$};
    \path (by.south)+(0,-2.5) node (xc) [bg1, label = below:Predictor effect] {$\textbf{XC}^*$};
       \path (xc.east)+ (0.3,0) node (sgp1) [ann] {$=$};
     \path (sgp1.east)+(1.2,0) node (xpr) [bxx, label = below:Predictors] {$\textbf{X}$}; 
     \path (xpr.east)+(0.6,-0.5) node (u1) [buu] {$\textbf{u}_1^*$}; 
     \path (xpr.east)+(0.6,1.0) node (m) [spsv]{}; 
      \path (xpr.east)+(0.6,0.3) node (m) [spsv]{}; 
       \path (xpr.east)+(0.6,-1.3) node (m) [spsv]{}; 
     \path (u1.east)+(0.5,1.5) node (d1) [bxd] {$d_1^*$}; 
     \path (d1.east)+(1.1,0) node (v1) [bvv] {$\textbf{v}_1^*$}; 
        \path (d1.east)+(.5,0) node (m) [spsv1]{}; 
      \path (d1.east)+(1.6,0) node (m) [spsv1]{};

     \path (u1.east)+ (3.7,0) node (dots1) [ann] {$+\,\,\cdots\cdots\,\,+$};
     
     \path (xpr.east)+(6.7,-0.5) node (ur) [buu] {$\textbf{u}_{r^*}^*$}; 
          \path (xpr.east)+(6.7,-1.9) node (m) [spsv]{}; 
      \path (xpr.east)+(6.7,1.0) node (m) [spsv]{}; 
     \path (ur.east)+(0.5,1.5) node (dr) [bxd] {$d_{r^*}^*$}; 
     \path (dr.east)+(1.1,0) node (vr) [bvv] {$\textbf{v}_{r^*}^*$}; 
     
      \path (dr.east)+(1.7,0) node (m) [spsv1]{};

     \node [draw=blue, fit= (u1) (vr),inner sep=0.15cm, line width=1.6pt, label=below:Sparse factorization of $\textbf{C}^*$] {};

     \draw [vecArrow](th.west)+(-0.1,0)-- node[below] {GLM} (by.east);
\end{tikzpicture}
}
\caption{GOFAR: Generalized co-sparse factor regression, modeling a multivariate mixed response matrix $\Y$ using a predictor matrix $\X$ with sparse singular vector components of the low-rank coefficient matrix $\C^*$. } \label{fig:gofar-model}
\end{figure}
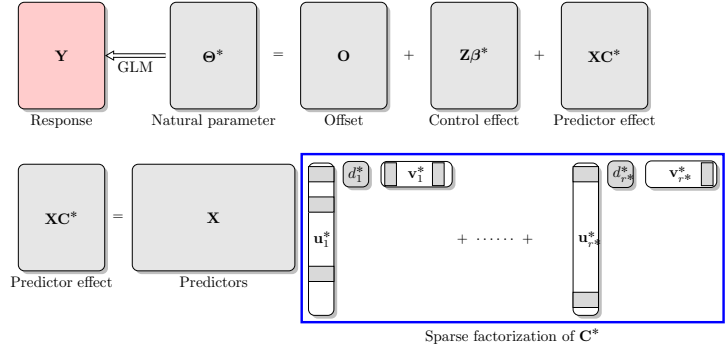

\section{Divide-and-Conquer Estimation Procedures}\label{sec:dac}

Rather than jointly estimating all the sparse singular vectors simultaneously, which may necessarily involve identifiability constraints such as orthogonality in optimization \citep{Uematsu2019}, we take a divide-and-conquer approach. The main idea is to extract the unit-rank components of $\X\C$ one by one, in either a sequential or a parallel way. In this way, we are able to divide the main task into a set of simpler unit-rank problems, which we then conquer in Section \ref{sec2:computation}.

\subsection{Sequential Approach}

Motivated by \citet{mishra2017sequential}, we propose to sequentially extract the unit-rank components of $\C$, i.e., $(d_k,\u_k,\v_k)$, for $k=1,\ldots, r$.
The resulting method is termed  generalized co-sparse factor regression
via sequential extraction (GOFAR(S)).

Algorithm \ref{sec2:alg:1} and Figure \ref{fig:gofarsp}  summarize the computation procedure. In step $k=1$, we conduct the following generalized co-sparse unit-rank estimation (G-CURE),
\begin{align}
(\hat{d}_1, \what{\u}_1, \what{\v}_1, \what{\bbeta},\what{\bPhi}) & \equiv \argmin_{\u,\d,\v, \bbeta, \bPhi}\,\,\,  \bbL(\bs\Theta,\bPhi) +  \rho(\C;\lambda),\label{eq:prock1} \\ &\M{s.t.} \quad \C= d\u\v\trans, \, \u\trans\X\trans\X\u/n = \v\trans\v = 1, \bs\Theta = \bs\Theta(\C, \bbeta, \bO^{(1)}), \notag 
\end{align}
where $\bO^{(1)} = \bO$ (the original offset matrix), and $\rho(\C;\lambda)$ is a sparsity-inducing penalty function with tuning parameter $\lambda$. {
We discuss the formulation of $\rho(\C;\lambda)$ in Section \ref{subsec:penfun} and the selection of tuning parameter $\lambda$ in Section     \ref{sec2:TUNING}}. To streamline the presentation, for now let us assume that we are able to solve G-CURE and select the tuning parameter $\lambda$ suitably. Denote the produced unit-rank solution of $\C$ as $\what{\C}_1 = \what{d}_1\what{\u}_1\what{\v}_1\trans$.

In the subsequent steps, i.e., for $k=2,\ldots r$, we repeat G-CURE each time with an updated offset term,
\begin{align}
	\bO^{(k)} = \bO+ \X\sum_{i=2}^k\what{\C}_{i-1}. \label{eq:gcure:ofset}	
\end{align}
In general, the G-CURE problem in the $k$th step, for $k=1,\ldots, r$, can be expressed as 
\begin{align}
(\hat{d}_k, \what{\u}_k, \what{\v}_k, \what{\bbeta},\what{\bPhi}) & \equiv  \argmin_{\u,\d,\v, \bbeta, \bPhi}\,\,\,  \bbL(\bs\Theta,\bPhi) +  \rho(\C; \lambda),\label{eq:prockk} \\ &\M{s.t.} \quad \C= d\u\v\trans, \, \u\trans\X\trans\X\u/n = \v\trans\v = 1, \bs\Theta = \bs\Theta(\C, \bbeta, \bO^{(k)} ). \notag 
\end{align}
We remark that the low-dimensional parameters $\bbeta$ and $\bPhi$ are re-estimated at the intermediate steps,  and their final estimates are obtained from the last step.

The rationale of the proposed procedure can be traced back to the power method for computing SVD. In each step, through the construction of the offset term, the regression effects from the previous steps are adjusted or ``deflated'' in order to enable G-CURE  to target  a new unit-rank component. The procedure terminates after a pre-specified number of steps or when $\what{d}_k$ is estimated to be zero.

\begin{figure}[htp]
 \centering
  \includegraphics[width=1.0\textwidth]{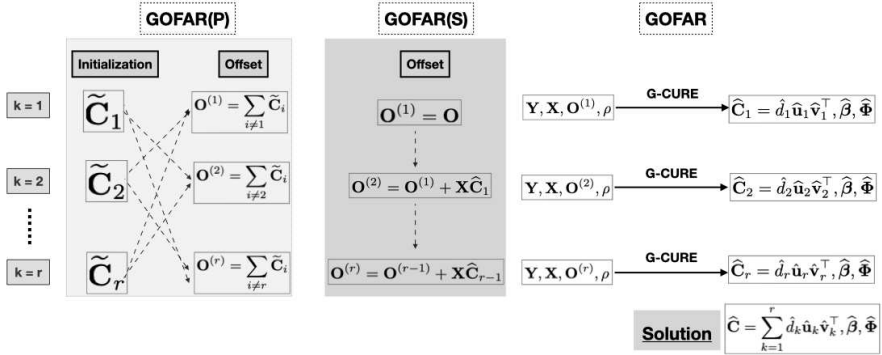}
		\caption{Estimation procedure for the generalized co-sparse factor regression (GOFAR)
           via sequential (GOFAR(S)) and parallel (GOFAR(P)) extraction.} 
     \label{fig:gofarsp}%
\end{figure}

\begin{algorithm}[htp]
	\caption{Generalized Co-Sparse Factor Regression via Sequential Extraction}
	\begin{algorithmic}\label{sec2:alg:1}
          \STATE Initialize: $\bbeta^{(0)}$, $\bPhi^{(0)}$, and set the maximum number of steps $r \geq 1$, e.g., an upper bound of $\M{rank}(\C)$.
		\FOR{$k \gets 1 \, \M{ to }\, r$}
		\STATE (1) Update offset: $\bO^{(k)} = \bO+ \X\sum_{i=2}^k\what{\C}_{i-1}$
		\STATE (2) G-CURE with tuning (see Section \ref{sec2:computation}):\\
                       $(\what{d}_k,\what{\u}_k,\what{\v}_k,\what{\bbeta},\what{\bPhi})  =  \mbox{G-CURE}(\C,\bbeta,\bPhi; \Y, \X,\bO^{(k)}, \rho)$, and $\what{\C}_k  =  \what{d}_k\what{\u}_k\what{\v}_k\trans$.                
		\IF{$\what{d}_k=0$} 
		\STATE{Set $\what{r} = k$; $k \leftarrow r$;} 
		\ENDIF
		\ENDFOR
				\RETURN{$\what{\C} = \sum_{k=1}^{\what{r}} \what{\C}_k$,  $\what{\bbeta}$, $\what{\bPhi}$.}
	\end{algorithmic}
\end{algorithm}

\subsection{Parallel extraction}\label{sssec:eea}

When the true rank of $\C$ is moderate or high, the above sequential extraction procedure may be time consuming. This motivates us to also consider generalized co-sparse factor regression via parallel extraction (GOFAR(P)), in which the construction of the offset terms for targeting different unit-rank components is based on some computationally efficient initial estimator of $\C$.

We summarize the GOFAR(P) procedure in Algorithm \ref{alg:geecure} and Figure \ref{fig:gofarsp}. Given a desired rank $r$, we first obtain an initial estimate of $\C$, denoted  $\widetilde{\C}$, by solving an initialization problem denoted  $ \M{G-INIT}(\C,\bbeta,\bPhi; \Y, \X,\bO,r)$; see Section 1.3 of Supplementary Materials for more details. The initial estimates of the unit-rank components are then computed from the SVD of the regression components $\X\widetilde{\C}$, 
$$
\widetilde{\C} = \sum_{k=1}^r \widetilde{\C}_k= \widetilde{\U}\widetilde{\D}\widetilde{\V}\trans, \qquad \M{s.t. } \widetilde{\U}\trans \X\trans\X \widetilde{\U}/n = \widetilde{\V}\trans\widetilde{\V} = \I_r,
$$
where $\widetilde{\U} = [\widetilde{\u}_1, \ldots, \widetilde{\u}_r] \in \mathbb{R}^{p \times r}$, $\widetilde{\V} = [\widetilde{\v}_1, \ldots, \widetilde{\v}_r] \in \mathbb{R}^{q \times r}$, $\widetilde{\D} = \M{diag}[\widetilde{d}_1, \ldots, \widetilde{d}_r] \in \mathbb{R}^{r \times r}$, and  $\widetilde{\C}_k =  \widetilde{d}_k\widetilde{\u}_k\widetilde{\v}_k\trans$.

The required offset terms for targeting different components are computed based on $\widetilde{\C}$ as
\begin{align}
\widetilde{\bO}^{(k)} =\bO + \X \sum_{i \neq k}\widetilde{\C}_i, \qquad k =1,\ldots, r. \label{eq:offset-gofarp}
\end{align}
Then, the problems $\mbox{G-CURE}(\C,\bbeta,\bPhi; \Y, \X, \widetilde{\bO}^{(k)})$, $k=1,\ldots, r$, can be solved in parallel. 
{ GOFAR(P) obtains the final estimate of  ${\bbeta}$ and ${\bPhi}$ from the output of the $r$th (last) parallel procedure.}

It is clear that the quality of the initial estimator directly affects both the computational efficiency and the model accuracy of GOFAR(P). In practice, we recommend using either the mixed-outcome reduced-rank estimator proposed by \citet{chenandluo2017} when the model dimension is moderate or the lasso estimator when the model dimension is very high.

\begin{algorithm}[htp]
	\caption{Generalized Co-sparse Factor Regression via Parallel Extraction}
	\begin{algorithmic}\label{alg:geecure}
          \STATE Initialization:\\
          \begin{itemize}
          \item[(1)] Solve $\{\widetilde{\D},\widetilde{\U},\widetilde{\V},\widetilde{\bbeta},\widetilde{\bPhi} \}= \M{G-INIT}(\C,\bbeta,\bPhi; \Y, \X,\bO,r)$ and obtain $\widetilde{\C}_k$ (Section 1.3 of Supplementary Materials).
          \item[(2)] Compute offsets:  $\widetilde{\bO}^{(k)} = \bO + \X \sum_{i \neq k}\what{\C}_i$, for $k = 1,\ldots, r$.
         \end{itemize}   
		\FOR{$k \gets 1 \, \M{ to }\, r$}
	\STATE G-CURE with tuning (see Section \ref{sec2:computation}):\\
        $(\what{d}_k,\what{\u}_k,\what{\v}_k,\what{\bbeta},\what{\bPhi})  =  \mbox{G-CURE}(\C,\bbeta,\bPhi; \Y, \X,\widetilde{\bO}^{(k)}, \rho)$;\\ $\what{\C}_k  =  \what{d}_k\what{\u}_k\what{\v}_k\trans$.        \hspace{17em}\raisebox{.5\baselineskip}[0pt][0pt]{$\left.\rule{0pt}{2.2\baselineskip}\right\}\ \mbox{in parallel}$}
		\ENDFOR
                \RETURN{$\what{\C} = \sum_{k=1}^{r} \what{\C}_k$, $\what{\bbeta}$, $\what{\bPhi}$.}
	\end{algorithmic}
\end{algorithm}

\subsection{Generalized Co-Sparse Unit-Rank Estimation}\label{sec2:computation}

\subsubsection{Choice of Penalty Function}\label{subsec:penfun}

We denote the generic G-CURE problem as 
$$
\mbox{G-CURE}(\C,\bbeta,\bPhi; \Y, \X,\bO, \rho).
$$
First, we discuss the choice of the penalty function. In this work we use the elastic net penalty and its adaptive version \citep{zou2005,zou2009adaptive,mishra2017sequential}, i.e., for the $k$th step, 
\begin{align}
\rho(\C;\lambda) = \rho(\C;\W,\lambda,\alpha)& =\alpha\lambda\|\W\circ\C\|_1 + (1-\alpha)\lambda\|\C\|_F^2. \label{sec2:eq:enet} 
\end{align}
Here $\|\cdot\|_1$ denotes the $\ell_1$ norm, the operator ``$\circ$'' stands for the Hadamard product, $\W=[w_{ij}]_{p\times q}$ is a pre-specified weighting matrix, $\lambda$ is a tuning parameter controlling the overall amount of regularization, and $\alpha \in (0,1)$ controls the relative weights between the two penalty terms.  {
Several other penalties, such as the lasso $(\alpha = 1, \gamma = 0)$, the adaptive lasso $(\alpha = 1,  \gamma > 0)$, and the elastic net $(0 < \alpha < 1,\, \gamma = 0)$, are its special cases. }

In the $k$th step of GOFAR(S) or GOFAR(P), we let $\W_k = |\widetilde{\C}_k|^{-\gamma}$, where $\gamma =1$ and $\widetilde{\C}_k= \widetilde{d}_k\widetilde{\bu}_k\widetilde{\bv}_k\trans$ is an initial estimate of $\C_k$. As such, $w_{ijk} = w_k^{(d)} w_{ik}^{(u)} w_{jk}^{(v)}$, with
\begin{align}
w_k^{(d)} = |\widetilde{d}_k|^{-\gamma},
\bw_k^{(u)} = [w_{1k}^{(u)},..., w_{pk}^{(u)}]\trans= |\widetilde{\bu}_k|^{-\gamma},
\bw_k^{(v)} = [w_{1k}^{(v)},..., w_{qk}^{(v)}]\trans= |\widetilde{\bv}_k|^{-\gamma}. \label{sec2:weights}
\end{align}

Compared to lasso, a small amount of ridge penalty in the elastic net allows correlated predictors to be in or out of the model together, thereby improving the convexity of the problem and enhancing the stability of optimization \citep{zou2005,mishra2017sequential}; in our work, we fix $\alpha=0.95$ and write $\rho(\C;\W,\lambda,\alpha) = \rho(\C;\W,\lambda)$ for simplicity. Now we express $\mbox{G-CURE}(\C,\bbeta,\bPhi; \Y, \X,\bO,\rho)$ as
\begin{align}
(\hat{d}, \what{\u}, \what{\v}, \what{\bbeta},\what{\bPhi}) & \equiv \argmin_{\u,\d,\v, \bbeta, \bPhi}\,\,\,    \Big\{ F_{\lambda}(d,\u,\v,\bbeta,\bPhi) = \bbL(\bs\Theta,\bPhi) +  \rho(\C;\W, \lambda) \Big\},\label{eq:gcure:objT} \\ &\M{s.t.} \quad \C= d\u\v\trans, \, \u\trans\X\trans\X\u/n = \v\trans\v = 1, \bs\Theta = \bs\Theta(\C, \bbeta, \bO ). \notag 
\end{align}

\subsubsection{A Blockwise Coordinate Descent Algorithm}
To solve the problem in \eqref{eq:gcure:objT}, we propose an iterative algorithm  that cycles through a $\u$-step, a $\v$-step, a $\bbeta$-step and a $\bPhi$-step to update the unknown parameters in blocks of $(\u, d)$, $(\v,d)$, $\bbeta$ and $\bPhi$, respectively, until convergence. 
Below we describe each of these steps in detail.

\paragraph{$\u$-step}\label{subsec:ustep}
For fixed $\{\v, \bbeta, \bPhi\}$ with $\v\trans\v = 1$, we rewrite the objective function \eqref{eq:gcure:objT} in terms of the product variable $\check{\u} = d\u$  to avoid the quadratic constraints. For simplicity, we write $\bs\Theta(\C,\bbeta,\bO)$ as $\bs\Theta(\C)$. Motivated by \citet{she2012iterative} and \citet{chenandluo2017}, we construct a \emph{convex}  surrogate of the  the objective function \eqref{eq:gcure:objT} with respect to $\check{\u}$ as follows, 
{
\begin{align}
&G_{\lambda}(\a; \check{\u})=  \bbL(\bs\Theta(\a\v\trans),\bPhi)     + \Tr( \{\bB^'(\bs\Theta(\check{\u}\v\trans )) \}\trans \X(\a-\check{\u})\v\trans \bPhi^{-1} ) - \notag\\
&  \Tr(\J\trans [\bB(\bs\Theta(\a\v\trans))-\bB( \bs\Theta(\check{\u}\v\trans )  )]\bPhi^{-1})  + \frac{s_u}{2} \|\a-\check{\u}\|_2^2  +\rho(\a\v\trans;\W,\lambda) \notag \\ 
& = \frac{s_u}{2} \|\a-\check{\u} - \frac{\X\trans}{s_u}[\Y - \bB'( \bs\Theta(\check{\u}\v\trans )  )] \bPhi^{-1}\v  \|_2^2 + \rho(\a\v\trans;\W,\lambda) + \M{const}, 
 \label{eq:surrogateGmrrr} 
\end{align}
}where $s_u$ is a scaling factor for  the $\u$-step and ``const'' represents any { remaining} term that does not depend on the optimization variables; in this case, it is $\a \in \mathbb{R}^p$. 
It is easy to verify that $G_{\lambda}(\check{\u};\check{\u}) = F_{\lambda}(d,\u,\v,\bbeta,\bPhi)$. We show in the convergence analysis (see Section \ref{subsec:convanalysis}) that $F$ is majorized by $G_{\lambda}(\a;\check{\u})$ with appropriate scaling factor $s_u$. 
The problem of  minimizing $G_{\lambda}(\a;\check{\u})$ is separable in each entry of the vector $\a$. Hence, following \citet{zou2005},
the unique optimal solution 
is given by
\begin{align}
\what{\a} = \mathcal{\S}(\check{\u} + \X\trans[\Y & - \bB'( \bs\Theta(\check{\u}\v\trans/s_u )  )] \bPhi^{-1}\v; \notag \\ 
&\alpha \lambda\v\trans\w^{(v)} w^{(d)} \w^{(u)}/s_u )/ \{1+2\lambda(1-\alpha)\|\v\|_2^2/s_u\}, \label{eq:ustepupdate}
\end{align}
where  $\mathcal{\S}(\t;\tilde{\lambda}) = \M{sign}(\t)(|\t|-\tilde{\lambda})_+$ is the elementwise soft-thresholding operator on any $\t \in \mathbb{R}^p$. Now, using the equality constraint, i.e., $\|\X\u\|_2 = \sqrt{n}$, we can retrieve the individual estimates of $(d,\u)$ from $\what{\a}$.

\paragraph{$\v$-step}\label{subsec:vstep}
As in the $\u$-step, we rewrite the objective function \eqref{eq:gcure:objT} in terms of the product $\check{\v} = d\v$. A \emph{convex} surrogate that majorizes the objective function \eqref{eq:gcure:objT} with respect to $\check{\v}$ is constructed as 
\begin{align}
H_{\lambda}&(\b;\check{\v})=  \bbL(\bs\Theta(\u\b\trans),\bPhi)      + \Tr( \{ \bB^'(\bs\Theta(\u\check{\v}\trans )) \}\trans \X\u(\b-\check{\v})\trans \bPhi^{-1} ) - \notag\\
&  \Tr(\J\trans [\bB(\bs\Theta(\u\b\trans))-\bB( \bs\Theta(\u\check{\v}\trans )  )]\bPhi^{-1}) + \frac{s_v}{2} \|\b-\check{\v}\|_2^2   +\rho(\u\b\trans;\W,\lambda) \notag \\ 
 =& \frac{s_v}{2} \|\b-\check{\v} - \bPhi^{-1} [\Y - \bB'( \bs\Theta(\u\check{\v}\trans )  )]\trans \frac{\X}{s_v}\u  \|_2^2 + \rho(\u\b\trans;\W,\lambda) + \M{const},
 \label{eq:surrogateGmrrr} 
\end{align}
where $s_v$ is a scaling factor for  the $\v$-step and   $\b \in \mathbb{R}^q$ is the optimization variable. 
Following the $\u$-step, the unique optimal solution  minimizing $H_{\lambda}(\b;\check{\v})$ is given by 
\begin{align}
\what{\b} = \mathcal{\S}(\check{\v} + \bPhi^{-1} [\Y &- \bB'( \bs\Theta(\u\check{\v}\trans )  )]\trans \X\u/s_v; \notag \\ &\alpha \lambda\u\trans\w^{(u)} w^{(d)} \w^{(v)}/s_v )/ \{1+2\lambda_(1-\alpha)\|\u\|_2^2/s_v\}. \label{eq:vupdatestep}
\end{align}
Again, we retrieve the estimates of $(d,\v)$ from the equality constraint $\v\trans\v = 1$.

\paragraph{$\bbeta$-step}\label{subsec:betastep}
For fixed  $\C$ and $\bPhi$, denote $\bs\Theta(\bbeta) = \bs\Theta(\C,\bbeta,\bO)$. We construct a \emph{convex} surrogate that majorizes the objective function \eqref{eq:gcure:objT} with respect to $\bbeta$ as
{
\begin{align}
K(\balpha;\bbeta) = & \bbL(\bs\Theta(\balpha),\bPhi)   + \frac{s_{\beta}}{2} \|\balpha-\bbeta\|_2^2     +  \Tr( \{ \bB^'(\bs\Theta(\bbeta))\}\trans \Z(\balpha-\bbeta)\bPhi^{-1}) - \notag\\
&  \Tr(\J\trans [\bB(\bs\Theta(\balpha))-\bB( \bs\Theta( \bbeta )  )]\bPhi^{-1}) \notag \\ 
 =& \frac{s_{\beta}}{2} \|\balpha - \bbeta - \frac{\Z\trans}{s_{\beta}} \{\Y -  \bB^'(\bs\Theta(\bbeta))  \}\bPhi^{-1}\|_F^2   + \M{const},
 \label{eq:surrogateGmrrr} 
\end{align}
}where $s_{\beta}$ is a scaling factor  for  the $\bbeta$-step.

A globally optimal solution  minimizing $K(\balpha;\bbeta)$ is given by
\begin{align}
\what{\balpha} = \bbeta + \Z\trans \{\Y -  \bB^'(\bs\Theta(\bbeta))  \}\bPhi^{-1}/s_{\beta}.\label{eq:zupdaatestep}
\end{align}

\paragraph{$\bPhi$-step}\label{subsec:phistep}
For fixed $\C$ and $\bbeta$, we update $\bPhi$ by minimizing the negative log-likelihood function with respect to $\bPhi$, which can be obtained by a standard algorithm such as Newton-Raphson \citep{citer2019}.

The proposed G-CURE algorithm is summarized in Algorithm \ref{alg:spmrrr}.

\begin{algorithm}[!h]
  \caption{Generalized Co-Sparse Unit-Rank Estimation}
\begin{algorithmic}\label{alg:spmrrr}
\STATE Given: $\X$, $\Y$, $\Z$, $\W$, $\bO$, $\kappa_0$, $\lambda$, $\alpha$.
\STATE Initialize $\u^{(0)} = \widetilde{\u}$, $\v^{(0)} = \widetilde{\v}$, $d^{(0)} = \widetilde{d}$, $\bbeta^{(0)} = \widetilde{\bbeta}$, $\bPhi^{(0)} =  \widetilde{\bPhi}$. Set $t\gets0$.
\REPEAT
\STATE Set $s_u  =  \kappa_0\|\X\|^2/\varphi$, $s_{\beta} = \kappa_0\|\Z\|^2/\varphi$,  $s_v =   n\kappa_0/\varphi$ where $\varphi = \min(\bPhi^{(t)} )$. 
\vspace{0.1cm}
\STATE (1) $\u$-step:   Set $\check{\u} = d^{(t)}\u^{(t)}$ and $\v =\v^{(t)}$. Update $\check{\u}^{(t+1)}$ using  \eqref{eq:ustepupdate}. 
 Recover block variable  ($\tilde{d}^{(t+1)},\u^{(t+1)}$) using equality constraint in \eqref{eq:gcure:objT}.
\vspace{0.1cm}
\STATE (2) $\v$-step:  Set $\check{\v} = \tilde{d}^{(t+1)}\v^{(t)}$ and $\u =\u^{(t+1)}$. Update $\check{\v}^{(t+1)}$ using  \eqref{eq:vupdatestep}.
Recover block variable  ($d^{(t+1)},\v^{(t+1)}$) using equality constraint in \eqref{eq:gcure:objT}.
\vspace{0.1cm}
\STATE (3) $\bbeta$-step: Update $\bbeta^{(t+1)} $ using \eqref{eq:zupdaatestep}.
\vspace{0.1cm}
\STATE (4) $\bPhi$-step:
$\bPhi^{(t+1)} = \arg\max_{\bPhi} \bbL( \bs\Theta(\C^{(t+1)}, \bbeta^{(t+1)}),\bPhi)$. 
\vspace{0.1cm}
\STATE $t\gets t+1$.
\UNTIL {convergence, e.g., the relative $\ell_2$ change in parameters is less than $\epsilon = 10^{-6}$.}\\
\RETURN{$\widehat{\u}$, $\widehat{d}$, $\widehat{\v}$,  $\widehat{\bbeta}$, $\widehat{\bPhi}$.}
\end{algorithmic}
\end{algorithm}

\subsubsection{Convergence Analysis}\label{subsec:convanalysis}
In Algorithm \ref{alg:spmrrr}, we use several \emph{convex} surrogates of the objective function in order to deal with the general form of the loss function. { We show that  the procedure can ensure that the objective function is monotone descending with  the  scaling factors $s_u, \,\, s_v \M{ and } s_{\beta}$.} 

We mainly consider mixed outcomes of Gaussian, Bernoulli, and Poisson distributions as examples.
To conduct a formal convergence analysis, let us denote the parameter estimates  in the $t$th step as $\{ \u^{(t)}, d^{(t)},\v^{(t)},\bbeta^{(t)},\bPhi^{(t)}\}$. 
From Algorithm \ref{alg:spmrrr},  the $\u$-step produces ($ {\td}^{(t+1)}, \u^{(t+1)}$), the $\v$-step produces ($ d^{(t+1)}, \v^{(t+1)}$),  the $\bbeta$-step produces $\bbeta^{(t+1)}$, and the $\bPhi$-step produces $\bPhi^{(t+1)}$.

Now, denote $\check{\u}^{(t+1)} =\u^{(t+1)} {\td}^{(t+1)} $. For $\bs\xi_u^{(t+1)} \in \{a\check{\u}^{(t)}\v^{(t)}\trans +(1-a) \check{\u}^{(t+1)}\v^{(t)}\trans ; 0<a<1\}$ and $\bs\zeta(\bs \Theta_{.k}({\bs\xi}_u^{(t+1)}, \bbeta^{(t)} ),a_k(\phi_k^{(t)}) ) = \M{diag}[ \bB_{.k}^{''}(\bs\Theta_{.k}({\bs\xi}_u^{(t+1)}, \bbeta^{(t)} ))]/a_k(\phi_k^{(t)})$, we define
\begin{align*}
	\gamma_1^{(t)} = \sup_{a \in (0,1)} \| \X\trans \sum_{k=1}^q v_k^{(t)}^2 \bs\zeta( \bs\Theta_{.k}({\bs\xi}_u^{(t+1)}, \bbeta^{(t)} ),a_k(\phi_k^{(t)}) ) \X \|.  
	\end{align*}
Similarly, denote $\check{\v}^{(t+1)} =\v^{(t+1)} {d}^{(t+1)} $. Then, for $\bs\xi_v^{(t+1)} \in \{a{\u}^{(t)}\check{\v}^{(t)}\trans +(1-a) {\u}^{(t)}\check{\v}^{(t+1)}\trans ; 0<a<1\}$ and $\bs\zeta(\bs \Theta_{.k}({\bs\xi}_v^{(t+1)}, \bbeta^{(t)} ),a_k(\phi_k^{(t)}) ) = \M{diag}[ \bB_{.k}^{''}(\bs\Theta_{.k}({\bs\xi}_v^{(t+1)}, \bbeta^{(t)} ))]/a_k(\phi_k^{(t)})$, we define
\begin{align*}
	\gamma_2^{(t)} =   \max_{1\leq k \leq q} \sup_{a \in (0,1)}   \|\u^{(t)}\trans \X\trans   \bs\zeta( \bs\Theta_{.k}({\bs\xi}_v^{(t+1)}, \bbeta^{(t)} ),a_k(\phi_k^{(t)}) ) \X\u^{(t)}\|   
	\end{align*}
Finally, for $\bs\xi_{\beta}^{(t+1)} \in \{a\bbeta^{(t)} +(1-a) \bbeta^{(t+1)} ; 0<a<1\}$ and $\C^{(t+1)} = d^{(t+1)}\u^{(t+1)}\v^{(t+1)}\trans$, we define
	\begin{align*}
	\gamma_3^{(t)} =   \max_{1\leq k \leq q} \sup_{a \in (0,1)}\| \Z\trans \bs\zeta(\bs \Theta_{.k}({\C^{(t+1)}, \bs\xi}_{\beta}^{(t+1)}  ),a_k(\phi_k^{(t)}) )  \Z \|.
	\end{align*}

\begin{theorem}\label{th:convergence}
The sequence $\{d^{(t)},\u^{(t)},\v^{(t)},\bbeta^{(t)},\bPhi^{(t)}\}_{t \in \mathbb{N}}$ produced by Algorithm \ref{alg:spmrrr} satisfies 
	\begin{align*}
	F_{\lambda}(d^{(t)},& \u^{(t)},\v^{(t)},\bbeta^{(t)},\bPhi^{(t)}) \geq F_{\lambda}(d^{(t+1)}, \u^{(t+1)},\v^{(t+1)},\bbeta^{(t+1)},\bPhi^{(t+1)}),
	\end{align*}
	for the scaling factors   $ s_u \geq \gamma_1^{(t)}$, $s_v \geq \gamma_2^{(t)} $ and $s_{\beta} \geq  \gamma_3^{(t)} $.
\end{theorem}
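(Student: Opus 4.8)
The claim is the standard monotone-descent property of a blockwise majorization--minimization scheme, so the plan is to show that each of the four inner updates of Algorithm~\ref{alg:spmrrr} does not increase the objective $F_\lambda$ of \eqref{eq:gcure:objT} and then to concatenate the resulting inequalities. Writing $F_\lambda^{u}$, $F_\lambda^{v}$, $F_\lambda^{\beta}$ for the value of $F_\lambda$ after the $\u$-, $\v$- and $\bbeta$-steps of iteration $t$, the target reads
\begin{align*}
F_\lambda\big(d^{(t)},\u^{(t)},\v^{(t)},\bbeta^{(t)},\bPhi^{(t)}\big) &\ge F_\lambda^{u} \ge F_\lambda^{v} \ge F_\lambda^{\beta} \\ &\ge F_\lambda\big(d^{(t+1)},\u^{(t+1)},\v^{(t+1)},\bbeta^{(t+1)},\bPhi^{(t+1)}\big).
\end{align*}
A reduction I would make at the outset: since $\bs\Theta=\bO+\Z\bbeta+\X\C$ with $\C=d\u\v\trans$, both $\bbL$ and $\rho$ depend on $(d,\u)$ only through the product $d\u$, so the norm-based recovery of $(d,\u)$ from $\check{\u}=d\u$ (and of $(d,\v)$ from $\check{\v}=d\v$) in Algorithm~\ref{alg:spmrrr} merely rescales the factorization and leaves $F_\lambda$ unchanged; it thus suffices to track $(\check{\u},\check{\v},\bbeta,\bPhi)$.

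For the $\u$-step, hold $(\v,\bbeta,\bPhi)=(\v^{(t)},\bbeta^{(t)},\bPhi^{(t)})$ fixed and let $\check{\u}$ vary. Since $G_\lambda(\,\cdot\,;\check{\u}^{(t)})$ is strongly convex in $\a$ and $\what{\a}$ of \eqref{eq:ustepupdate} is its \emph{global} minimizer --- a separable soft-thresholding problem, following \citet{zou2005} --- and since $G_\lambda(\check{\u}^{(t)};\check{\u}^{(t)})=F_\lambda(d^{(t)},\u^{(t)},\v^{(t)},\bbeta^{(t)},\bPhi^{(t)})$ by direct substitution (the linearization and $\bB$-difference terms vanish at $\a=\check{\u}^{(t)}$), the step descends as soon as the surrogate dominates $F_\lambda$ \emph{at the update point}, i.e. $G_\lambda(\what{\a};\check{\u}^{(t)})\ge F_\lambda^{u}$: then $F_\lambda^{u}\le G_\lambda(\what{\a};\check{\u}^{(t)})\le G_\lambda(\check{\u}^{(t)};\check{\u}^{(t)})=F_\lambda(d^{(t)},\u^{(t)},\v^{(t)},\bbeta^{(t)},\bPhi^{(t)})$. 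The one genuine computation is this domination inequality. Differentiating $\a\mapsto\bbL(\bs\Theta(\a\v\trans),\bPhi)$ by the chain rule, the affine offset $\bO+\Z\bbeta$ contributes nothing to the second derivative, the gradient at $\check{\u}^{(t)}$ equals $-\X\trans[\Y-\bB'(\bs\Theta(\check{\u}^{(t)}\v\trans))]\bPhi^{-1}\v$ (the vector entering the soft-thresholding update \eqref{eq:ustepupdate}), and the Hessian equals $\X\trans\big(\sum_k v_k^2\,\bs\zeta(\bs\Theta_{.k},a_k(\phi_k))\big)\X$ with $\bs\zeta(\bs\Theta_{.k},a_k(\phi_k))=\diag[b_k''(\bs\Theta_{.k})]/a_k(\phi_k)$ positive semidefinite. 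Hence the Lagrange form of Taylor's theorem on the segment from $\check{\u}^{(t)}$ to $\what{\a}=\check{\u}^{(t+1)}$ shows that $G_\lambda(\what{\a};\check{\u}^{(t)})-F_\lambda^{u}$ equals $\tfrac{s_u}{2}\|\what{\a}-\check{\u}^{(t)}\|_2^2$ minus $\tfrac12(\what{\a}-\check{\u}^{(t)})\trans\X\trans\big(\sum_k v_k^2\,\bs\zeta(\bs\Theta_{.k}(\bs\xi_u^{(t+1)},\bbeta^{(t)}),a_k(\phi_k^{(t)}))\big)\X(\what{\a}-\check{\u}^{(t)})$ for some $\bs\xi_u^{(t+1)}$ lying in precisely the segment over which $\gamma_1^{(t)}$ takes its supremum; since that curvature form is at most $\tfrac12\gamma_1^{(t)}\|\what{\a}-\check{\u}^{(t)}\|_2^2$, the difference is $\ge0$ whenever $s_u\ge\gamma_1^{(t)}$, the penalty $\rho(\a\v\trans;\W,\lambda)$ appearing identically on both sides and needing no majorization.

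The $\v$-step is handled verbatim with $H_\lambda$ in place of $G_\lambda$: its Hessian in $\check{\v}$ is the diagonal matrix with $k$th entry $\u\trans\X\trans\bs\zeta(\bs\Theta_{.k},a_k(\phi_k))\X\u$, of operator norm at most $\gamma_2^{(t)}\le s_v$ along the update segment, and $\what{\b}$ of \eqref{eq:vupdatestep} globally minimizes $H_\lambda$. The $\bbeta$-step is the same but easier, as the penalty is constant in $\bbeta$: $K(\,\cdot\,;\bbeta^{(t)})$ is an exact quadratic, minimized globally by $\what{\balpha}$ of \eqref{eq:zupdaatestep}, and the Hessian of $\bbeta\mapsto\bbL(\bs\Theta(\bbeta),\bPhi)$ is block-diagonal with blocks $\Z\trans\bs\zeta(\bs\Theta_{.k},a_k(\phi_k))\Z$ of norm at most $\gamma_3^{(t)}\le s_\beta$. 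Finally the $\bPhi$-step sets $\bPhi^{(t+1)}=\argmin_{\bPhi}\bbL(\bs\Theta(\C^{(t+1)},\bbeta^{(t+1)}),\bPhi)$ exactly and leaves $\rho$ untouched, so $F_\lambda^{\beta}\ge F_\lambda(d^{(t+1)},\u^{(t+1)},\v^{(t+1)},\bbeta^{(t+1)},\bPhi^{(t+1)})$. Concatenating the four inequalities proves the theorem.

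The hard part will be the domination step of each majorization --- verifying that the second-order Taylor remainder of the log-likelihood along the update direction is absorbed by $\tfrac{s}{2}\|\cdot\|_2^2$. This rests on (i) differentiating $\bbL(\bs\Theta(\,\cdot\,),\bPhi)$ twice through the chain rule so that the Hessian is recognized as the $\X$- or $\Z$-sandwiched, per-outcome curvature $b_k''$ rescaled by $1/a_k(\phi_k)$, evaluated exactly at the interior segment point that $\gamma_1^{(t)},\gamma_2^{(t)},\gamma_3^{(t)}$ are defined to bound; and (ii) a routine check that the ``completed-square'' form of each surrogate coincides with its expanded form, so that the closed-form updates \eqref{eq:ustepupdate}, \eqref{eq:vupdatestep} and \eqref{eq:zupdaatestep} are genuine global minimizers. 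The remaining ingredients --- tangency of the surrogates, their strong convexity, invariance of $F_\lambda$ under the normalization recovery, and the triviality of the $\bPhi$-step --- are routine bookkeeping.
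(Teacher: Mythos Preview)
Your proposal is correct and follows essentially the same route as the paper: a blockwise MM argument where, for each of the $\u$-, $\v$-, $\bbeta$-steps, one chains tangency of the surrogate, global minimization of the (strongly convex) surrogate, and majorization at the update point, then invokes the exact $\bPhi$-minimization. The only noteworthy difference is that you prove the domination inequality via the Lagrange remainder on the segment from the old to the new iterate --- which is exactly what the segment-based constants $\gamma_1^{(t)},\gamma_2^{(t)},\gamma_3^{(t)}$ in the theorem statement are tailored to bound --- whereas the paper's supporting lemma (Lemma~\ref{th:lemma_majorize}) establishes majorization globally, taking the supremum of the Hessian over all of $\mathbb{R}^p$; your version is thus slightly sharper and in fact better matched to the theorem as stated.
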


 The proof of Theorem \ref{th:convergence} is relegated to Section 1.5 of Supplementary Materials. 
Further, we  follow \citet{she2012iterative} to obtain the  scaling factors $s_u, \, s_v \,\M{and}\, s_{\beta}$  that ensure that the objective function will be monotone decreasing  along the iterations. 
  The key is to find a good upper bound of $b_k''(x)$. It is known that for Gaussian responses, $b_k''(x) = 1$ and $a_k(\phi_k)=\sigma_k^2$, and for Bernoulli responses,  $b_k''(x) = e^x/(1+e^x)^2\leq 1/4$ and  $a_k(\phi_k)=1$. But for Poisson responses, $b_k''(x) = e^x$ is unbounded  and $a_k(\phi_k)=1$. Hence, in practice we  choose a large enough upper bound $\alpha_p$ of $b_k''(x)$ empirically (default $\alpha_p = 10$).

Now, based on the above discussion, define the upper bound $\kappa_0$ for $q$ outcomes such that $ b_k''(x) \leq \kappa_0 $ for all $k = 1,\ldots, q$. {   Then, at the $t$th step, we have $\gamma_1^{(t)} \leq \kappa_0\|\X\|^2/\min(a_k(\phi_k^{(t)}))$; $\gamma_2^{(t)} \leq \kappa_0\u^{(t)}\trans\X\trans\X\u^{(t)}\trans/\min(a_k(\phi_k^{(t)})) = n\kappa_0/\min(a_k(\phi_k^{(t)})) $; and $\gamma_3^{(t)} \leq \kappa_0\|\Z\|^2/\min(a_k(\phi_k^{(t)}))$. 
Hence, we set the scaling factors $s_u   = \kappa_0\|\X\|^2/\varphi$ for the $\u$-step, $s_v =  n\kappa_0/\varphi$ for the $\v$-step and $s_{\beta} = \kappa_0\|\Z\|^2/\varphi$ for the $\bbeta$-step where $\varphi = \min(a_k(\phi_k^{(t)}))$.}

Algorithm 3 is a block coordinate descent optimization procedure for minimizing the nonsmooth and nonconvex objective functions $F_{\lambda}(d,\u,\v,\bbeta,\bPhi)$.
This type of problem has been studied in, e.g., \citet{gorski2007}, \citet{razaviyayn2013unified} and \citet{mishra2017sequential}. In each of the sub-problems, the algorithm minimizes a \emph{convex} surrogate that majorizes the objective function, which results in a unique and bounded solution when the elastic net penalty is used  \citep{mishra2017sequential}. Thus, using Theorem/Corollary 2(a) of \citet{razaviyayn2013unified}, we can conclude that any limit point of the sequence of solutions generated by the algorithm is a coordinate-wise minimum of the objective function.
Algorithm \ref{alg:spmrrr} always converges in our extensive numerical studies. 
Both of the estimation procedures for GOFAR are implemented, tested, validated, and made  publicly available in a user-friendly R package,  \textsf{gofar}.

\subsubsection{Tuning and A Toy Example}\label{sec2:TUNING}

Using Algorithm \ref{alg:spmrrr}, we minimize $F_{\lambda}(d,\u,\v,\bbeta,\bPhi)$ over a range of $\lambda$ values while fixing $\alpha = 0.95$ and $\gamma = 1$.  The range of $\lambda$ (equispaced on the log-scale), i.e., $\lambda_{max}$ to $\lambda_{min}$, is chosen in order to produce a spectrum of possible sparsity patterns in $\u$ and $\v$. Specifically, $\lambda_{\max}$ is the smallest $\lambda$ at which the singular value estimate is zero. In practice, we choose $\lambda_{\max}=\|\X\trans(\Y-\bmu(\0)) \|_{\infty}$, and set $\lambda_{\min}$ as the fraction of $\lambda_{\max}$, i.e., $\lambda_{\min}=\lambda_{\max}\times 10^{-6}$,  at which the estimated singular vectors have larger support, i.e.,  nonzero entries,   than expected. 
The optimal $\lambda$ can then be selected by $K$-fold cross-validation \citep{stone1974cross}.

Figure \ref{fig:gsecure-pathv1} shows the solution paths in  simulation setup I with Gaussian-Binary responses; see Table \ref{tab:simsetup} for details.
The models on the solution paths are compared by the cross-validated negative log-likelihood. As with the implementation of \textsf{glmnet}, we suggest using the one-standard-deviation rule to select the final solution.

\begin{figure}[htp]
	\centering
		\subfloat[ $\C$ ]{{\includegraphics[width=0.33\textwidth]{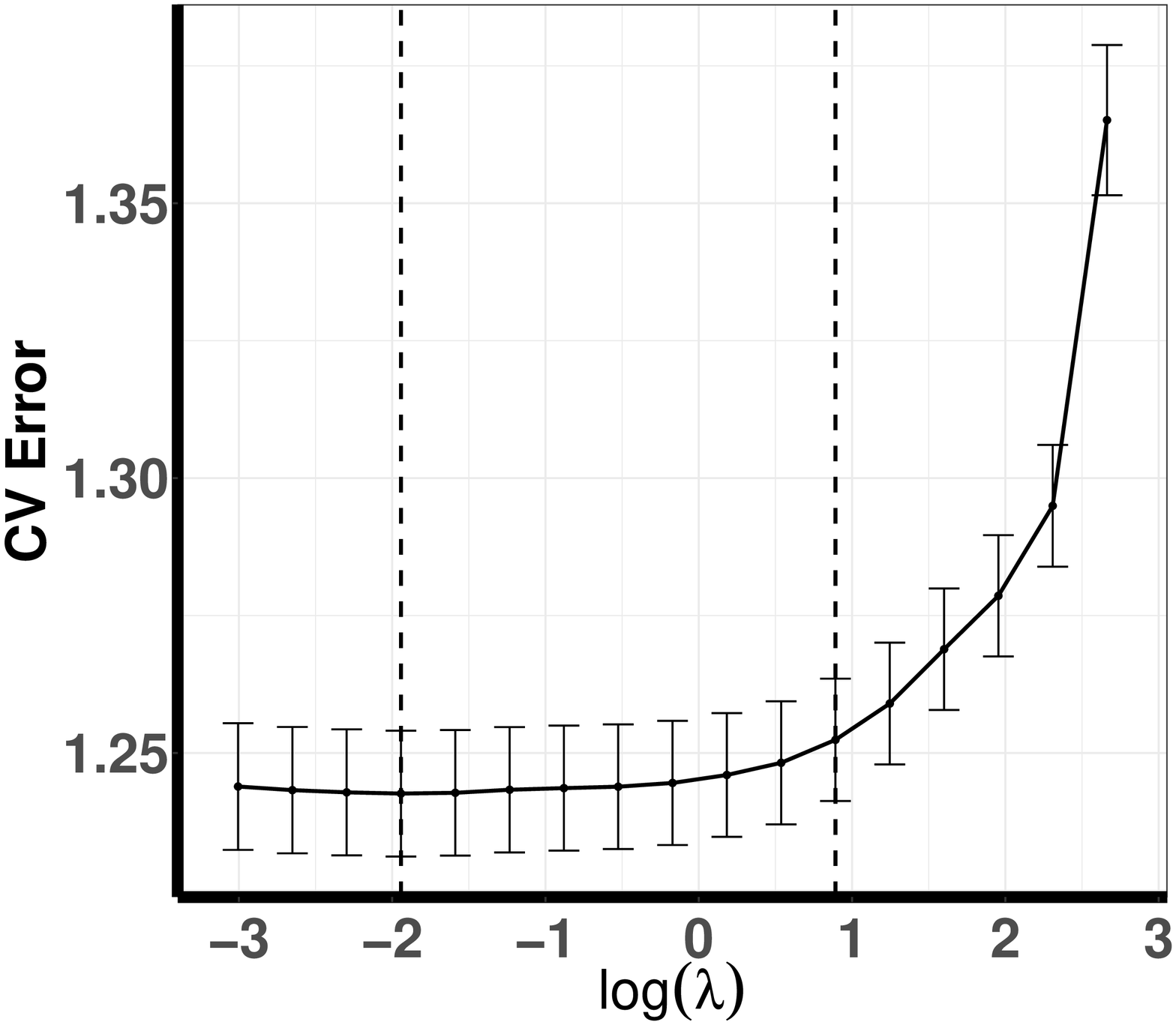} }}
		\subfloat[ $\C_1$]{{\includegraphics[width=0.33\textwidth]{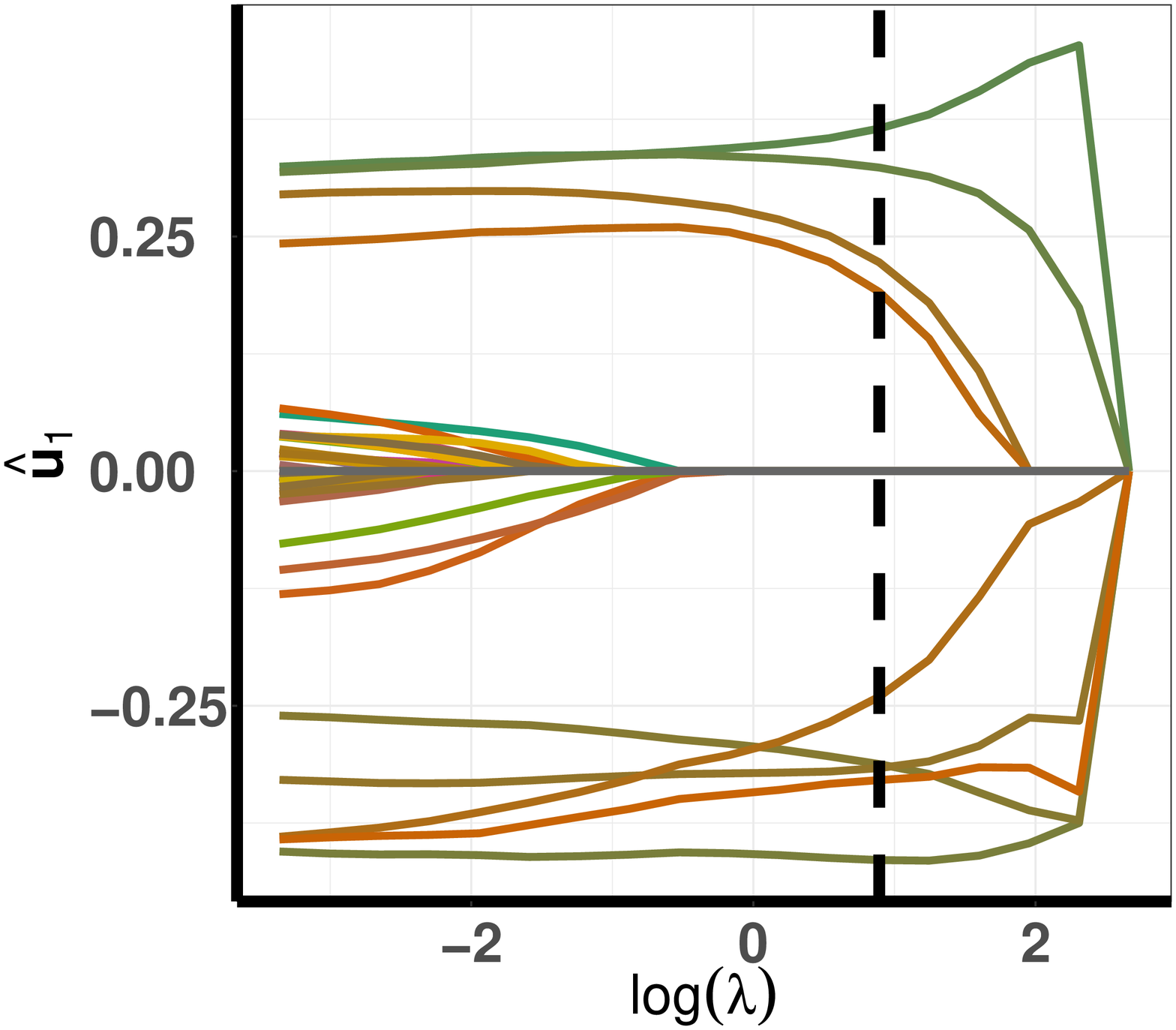} }}
			\subfloat[ $\C_2$]{{\includegraphics[width=0.33\textwidth]{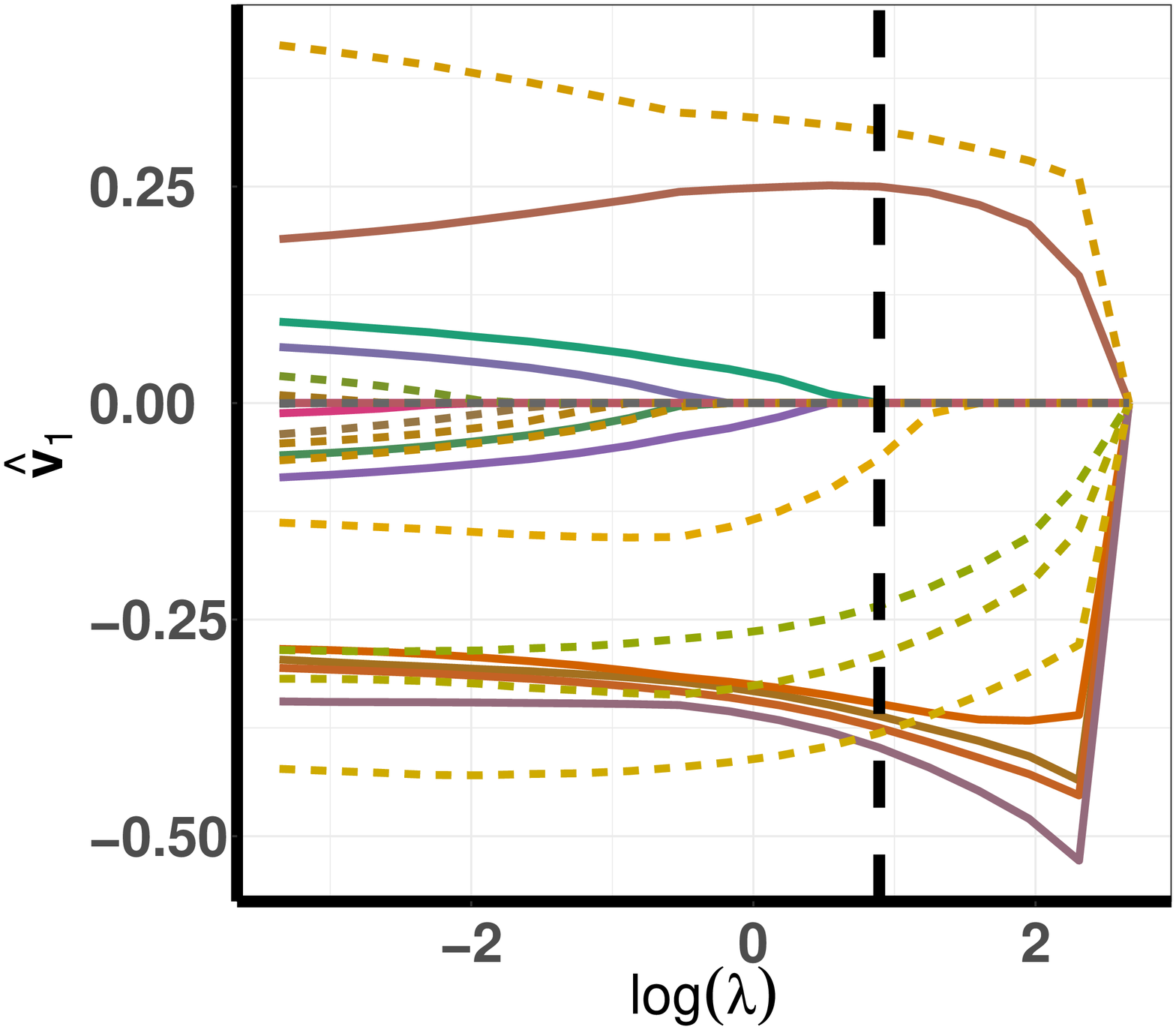} }}
		\caption{G-CURE: (a) cross-validation plot for 
		selecting the tuning parameter $\lambda$; (b)-(c) solution paths of $d\u$ and $d\v$, respectively, in case of simulation setup I with Gaussian-Binary responses; see Table \ref{tab:simsetup} for details.  The dashed and continuous lines in (c) differentiate between the two types of responses.}
	\label{fig:gsecure-pathv1}
\end{figure}

\section{Theoretical properties}\label{sec2:theory}
{ In order to focus on the large sample properties of the estimate of the unit-rank components of $\C$, we assume that the dispersion parameters $\Phi$ are known. Now, without loss of generality, we set $\Phi=\I$ and $\bO = \0$.}
Using the natural parameter $\bs\Theta^*$ formulated in equation \eqref{eq:ThetaDef} and the notations defined in equation \eqref{eq:defbtheta}, we  represent the multivariate model  \eqref{eq:glmmodel} for mixed outcomes as 
\begin{align}
\Y = \bB^'(\bs\Theta^*) + \E, \label{main:eq11}
\end{align}
where
\begin{itemize}
\item[\textbf{A1}.] the entries of the  error $\E = [e_{ik}]$  are independent $(\sigma^2 , b)\M{-sub-exponential}$ random variables with expectation $\mathbb{E}(e_{ij}) = 0$. 
\end{itemize}
In  large sample theory, we let $n$ tend to infinity with ($p,q$) fixed.
To ensure identifiability of the parameters, we make the following  assumptions on the covariates ($\X ,\Z$) and the true coefficient matrix $\C^*$. 
\begin{itemize}
	\item[\textbf{A2}.] $(1/n)\X\trans\X \xrightarrow{a.s} \bGamma_1$, $(1/n)\Z\trans\Z \xrightarrow{a.s} \bGamma_2$  and $(1/n)\X\trans\Z \xrightarrow{a.s} \0$ as $n \to \infty$, where $\bGamma_1$ and $\bGamma_2$ are fixed, positive definite matrices.
\end{itemize}
\begin{itemize}
	\item[\textbf{A3}.] $d_1^* > \ldots > d_{r^*}^*>0$.
\end{itemize}

To conveniently present our analysis, we allow each of the singular values $d_k^*$ to be absorbed into the pair $(\u_k^*,\v_k^*)$ of the decomposition \eqref{sec2:intro:cdef} \citep{chen2012jrssb}. Specifically, let $\ell_k$ denote the index of any nonzero entry  $\v_k^*$. Then, a uniquely identifiable reparameterization $\C_k^*$ is given by 
$$
\C_k^* = \u_k^*{\v_k^*}\trans, \qquad \mbox{s.t. } \qquad v_{\ell_k k}^* = 1.
$$
This results in $({\u_k^*}\trans\bGamma\u_k^*)({\v_k^*}\trans\v_k^*) = d_k^*$. 
Consequently, 
\begin{align}
\C^* =\U^*{\V^*}\trans, \quad \mbox{ s.t. } &{\U^*}\trans\bGamma\U^* \mbox{ and } {\V^*}\trans\V^* \mbox{ are both diagonal matrices},\notag\\
&  v_{\ell_k k}^* = 1, k=1,\ldots,r^*.\label{sec2:eq:cstar}
\end{align}

In terms of the new parameterization, the objective function of the G-CURE optimization problem \eqref{eq:gcure:objT} is given by 
\begin{align}
F_k^{(n)}(\u,\v ,\bbeta) &= \bbL(\C,\bbeta;\bO_k) + \rho(\C;\W_k, \lambda_k^{(n)}),  \label{objmsureKelpen}
\end{align}
where $\u \in \mathbb{R}^p$, $\v \in \mathbb{R}^q$ with $v_{\ell_k}=1$, $\C = \u\v\trans$, and  the offset matrix $\bO_k$ depends on the choice of the estimation procedure, i.e., GOFAR(S) or GOFAR(P), and mainly follows from  equations \eqref{eq:gcure:ofset} and \eqref{eq:offset-gofarp}. Here $\W_k = [ w_{ijk}]_{p \times q} = [w_{ik}w_{jk}]_{p \times q}$,  where $w_{ik} = |\widetilde{u}_{ik}|^{-\gamma}$ and $w_{jk} = |\widetilde{v}_{jk}|^{-\gamma}$ for some $\gamma >0$. The regularization parameter $\lambda_k^{(n)}$ is  a function of the sample size, but $0<\alpha\leq 1$ is considered as a fixed constant. 
In our model formulation,  $b_k''(\theta_{ik})$ corresponds to the variance of the estimate of the ${ik}$th outcome for $\theta_{ik}$.  
 Motivated by \citet{chenandluo2017}, we assume that 
\begin{itemize}
	\item[\textbf{A4}.] $b_k(\cdot)$ is a  continuously differentiable, real-valued and strictly convex function, and the entries of the natural parameter $\bs\Theta$ defined in \eqref{eq:ThetaDef}  satisfy
	$$ \min_{\substack{1\leq i \leq n \\ 1\leq k \leq q}} \quad \inf_{\{\bbeta , \C\}} \,\, |b_k''(\theta_{ik})| \geq \gammaL, $$
\end{itemize}
for some constant $\gammaL > 0$.

Moreover, GOFAR(P) requires an initial estimate  of the unit-rank components  of the rank-$r$ coefficient matrix $\C$, given by  $\widetilde{\C}_i$ for $i = 1,\ldots,r$. We require the initial estimators to be  $\sqrt{n}$-consistent, i.e., 
\begin{itemize}
	\item[\textbf{A5}.] 
	$ \| \widetilde{\C}_i - \C_i^* \| =O_p(n^{-1/2}) $ for $i = 1,\ldots,r$.
\end{itemize}
This can be achieved by the unpenalized GLM estimators or the reduced-rank estimator \citep{reinsel1998,chenandluo2017}, although these estimators do not have the desired sparse SVD structure.

\begin{theorem}\label{Sec2:TH:localminima}
Assume \textbf{A1}--\textbf{A5} hold and  $\lambda_k^{(n)} /\sqrt{n} \to \lambda_k \geq 0$ as $n \to \infty$. 
Then the estimator $(\what{\u}_k,\what{\v}_k, \what{\bbeta})$, from either the sequential or the parallel estimation, is $\sqrt{n}$-consistent, i.e., 
	\begin{enumerate}[i.]
		\item $\|\what{\u}_k - \u_k^* \| = O_p(n^{-1/2})$, $\|\what{\v}_{k} - \v_{k}^* \| = O_p(n^{-1/2})$, and $\|\what{\bbeta} - \bbeta^* \| = O_p(n^{-1/2})$  for $k=1,\ldots,r^*$.
		\item $|\what{d}_k|=O_p(n^{-1/2})$ where $\what{d}_k = (1/n)(\what{\u}_k\trans\X\trans\X\what{\u}_k)(\what{\v}_k\trans\what{\v}_k)$, for $k=r^*+1,\ldots,r$.
	\end{enumerate}
\end{theorem}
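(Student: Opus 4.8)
The plan is to adapt the classical argument for $\sqrt n$-consistency of nonconvex penalized M-estimators (as in \citet{mishra2017sequential}). Fix a step $k$ and, respecting the identification constraint $v_{\ell_k}=1$, reparametrize a shrinking neighborhood of the truth by $\u=\u_k^*+\ba/\sqrt n$, $\v=\v_k^*+\bb/\sqrt n$ with the $\ell_k$th entry of $\bb$ equal to $0$, and $\bbeta=\bbeta^*+\bg/\sqrt n$. Setting $V_n(\ba,\bb,\bg)=F_k^{(n)}(\u,\v,\bbeta)-F_k^{(n)}(\u_k^*,\v_k^*,\bbeta^*)$, it suffices to show that for every $\epsilon>0$ there is a constant $C$ with $\liminf_{n}\Pr\{\inf_{\|(\ba,\bb,\bg)\|=C}V_n(\ba,\bb,\bg)>0\}\ge 1-\epsilon$; since $V_n(0,0,0)=0$ and $V_n$ is continuous, this guarantees a local minimizer of $F_k^{(n)}$ inside the ball of radius $C/\sqrt n$, which is exactly claim (i).

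Next I would Taylor-expand the likelihood part $\bbL(\C,\bbeta;\bO_k)$, viewed as a function of $(\u,\v,\bbeta)$, about $(\u_k^*,\v_k^*,\bbeta^*)$. Using the representation $\Y=\bB'(\bs\Theta^*)+\E$ from \eqref{main:eq11} and observing that $(\u_k^*,\v_k^*,\bbeta^*)$ yields the natural parameter $\bs\Theta^*+\bs\Delta_k$, where $\bs\Delta_k=\X\bigl(\sum_{i\neq k}\widetilde\C_i-\sum_{i\neq k}\C_i^*\bigr)$ for GOFAR(P) and the corresponding partial-sum deflation error for GOFAR(S), the gradient at the truth separates into a stochastic piece built from $\X\trans\E$ and an offset-error piece built from $\X\trans\bigl(\bB'(\bs\Theta^*+\bs\Delta_k)-\bB'(\bs\Theta^*)\bigr)$. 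By A1 and A2 the first is $O_p(\sqrt n)$; by A5, which gives $\|\sum_{i\neq k}(\widetilde\C_i-\C_i^*)\|=O_p(n^{-1/2})$, together with A2 and the local boundedness of $b_k''$ in a neighborhood of the fixed matrix $\bs\Theta^*$, the second is also $O_p(\sqrt n)$. Hence the linear term of $V_n$ is $O_p(1)\cdot\|(\ba,\bb,\bg)\|$. For the second-order term, $n^{-1}\nabla^2\bbL$ converges by A2 and A4 to a fixed block matrix assembled from $\bGamma_1$, $\bGamma_2$ and $\{b_k''(\theta_{ik}^*)\}$; because $d_k^*>0$ forces $\u_k^*\neq 0$ and $\v_k^*\neq 0$ and because fixing $v_{\ell_k}=1$ eliminates the rescaling degeneracy of the $\u\v\trans$ parametrization, this limit is positive definite on the constrained tangent space, so the quadratic term is at least $c_0\|(\ba,\bb,\bg)\|^2(1+o_p(1))$ for some $c_0>0$.

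It remains to control the penalty difference $\rho(\C;\W_k,\lambda_k^{(n)})-\rho(\C_k^*;\W_k,\lambda_k^{(n)})$. Splitting the $\ell_1$ term over the support of $\C_k^*$ and its complement, on the support $\W_k=O_p(1)$ by A5 so that contribution is at least $-\lambda_k^{(n)}O_p(n^{-1/2})\|(\ba,\bb,\bg)\|$, while on the complement the difference is nonnegative; the ridge term equals $2(1-\alpha)\lambda_k^{(n)}\langle\C_k^*,\C-\C_k^*\rangle+(1-\alpha)\lambda_k^{(n)}\|\C-\C_k^*\|_F^2\ge -\lambda_k^{(n)}O(n^{-1/2})\|(\ba,\bb,\bg)\|$. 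Since $\lambda_k^{(n)}/\sqrt n\to\lambda_k$, the penalty difference is $O_p(1)\cdot\|(\ba,\bb,\bg)\|$, of lower order than the quadratic term. Combining the three pieces gives $V_n(\ba,\bb,\bg)\ge c_0C^2-O_p(1)\,C$ on the sphere $\|(\ba,\bb,\bg)\|=C$, which is positive with probability at least $1-\epsilon$ once $C$ is chosen large, establishing (i); for GOFAR(S) this is carried out inductively in $k$, the base step $k=1$ having the exact offset $\bO$ and each later step using the $O_p(n^{-1/2})$ bound on $\sum_{i<k}(\what\C_i-\C_i^*)$ supplied by the previous steps. For (ii), when $k>r^*$ the consistency of the first $r^*$ components makes the step-$k$ offset deflate the entire signal up to the $O_p(n^{-1/2})$-type error above, so the same localization bound applied around $\C_k^*=0$ yields $\|\what\u_k\what\v_k\trans\|_F=O_p(n^{-1/2})$, whence $|\what d_k|\le\|n^{-1}\X\trans\X\|_{\mathrm{op}}\,\|\what\u_k\|_2^2\|\what\v_k\|_2^2=O_p(n^{-1})$, which is in particular $O_p(n^{-1/2})$.

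The main obstacle I anticipate is handling two things at once: (a) verifying that the limiting Hessian is positive definite on the correct constrained subspace despite the bilinear, partially degenerate $\u\v\trans$ parametrization, and making the associated $o_p(1)$ remainder uniform over the sphere $\|(\ba,\bb,\bg)\|=C$; and (b) propagating the deflation error through all $r$ steps in a controlled way — for GOFAR(S) this is the inductive bookkeeping on the accumulated $\sum_{i<k}(\what\C_i-\C_i^*)$, and for GOFAR(P) it is making A5 interact cleanly with the lower bound on $b_k''$ in A4 so that $\bs\Delta_k$ enters only at the $O_p(\sqrt n)$ gradient level and never disturbs the quadratic term.
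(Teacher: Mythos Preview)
Your localization strategy and penalty analysis match the paper's, and for GOFAR(P) your argument is essentially correct: there A5 indeed makes the offset error $\bs\Delta_k=\X\sum_{i\neq k}(\widetilde{\C}_i-\C_i^*)$ small, so the gradient at the truth is $O_p(\sqrt n)$ and the standard linear-versus-quadratic comparison goes through.

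For GOFAR(S), however, there is a genuine gap. You write that at step $k$ the deflation error is ``the $O_p(n^{-1/2})$ bound on $\sum_{i<k}(\what\C_i-\C_i^*)$ supplied by the previous steps,'' and that the base step $k=1$ has ``the exact offset $\bO$.'' But in the sequential scheme the step-$k$ offset only removes components $1,\ldots,k-1$; the later components $\C_{k+1}^*,\ldots,\C_{r^*}^*$ are \emph{never} deflated at step $k$. Hence the natural parameter at $(\u_k^*,\v_k^*,\bbeta^*)$ differs from $\bs\Theta^*$ by $-\X\sum_{l>k}\C_l^*+O_p(n^{-1/2})$, an $O(1)$ discrepancy, so the gradient of $\bbL$ at your expansion point is $O(n)$, not $O_p(\sqrt n)$. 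Your linear term is then of order $\sqrt n\cdot h$ and swamps the quadratic term $c_0h^2$; the comparison $c_0C^2-O_p(1)\,C>0$ fails. Note too that your argument never uses assumption \textbf{A3} (strictly ordered singular values), which is a warning sign.

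The paper's fix is not a Taylor expansion but a direct use of the strong-convexity bound in \textbf{A4}: it lower-bounds $\Tr\bigl(\{\bB'(\bs\Theta_k^*)-\bB'(\bs\Theta^*)\}^{\rm T}\{\what{\bs\Theta}_k-\bs\Theta_k^*\}\bigr)$ by $\gamma^l\Tr\bigl(\{\bs\Theta_k^*-\bs\Theta^*\}^{\rm T}\{\what{\bs\Theta}_k-\bs\Theta_k^*\}\bigr)$, i.e.\ it linearizes the undeflated piece back to $-\X\sum_{l>k}\C_l^*$. Orthogonality of the SVD components (\textbf{A2} and the parametrization \eqref{sec2:eq:cstar}) then kills the $O(h)$ parts of this cross term, leaving only the bilinear piece $-\sum_{l>k}(\a^{\rm T}\bGamma_1\u_l^*)(\b^{\rm T}\v_l^*)$, which is $O(h^2)$. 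The final and essential step, imported from \citet{mishra2017sequential}, is that
\[
\tfrac{1}{2n}\|\X(\u_k^*\b^{\rm T}+\a\v_k^{*{\rm T}})\|_F^2-\sum_{l>k}(\a^{\rm T}\bGamma_1\u_l^*)(\b^{\rm T}\v_l^*)\ \ge\ 0,
\]
which is exactly where the strict ordering $d_k^*>d_{k+1}^*>\cdots$ of \textbf{A3} is used. To repair your proof for GOFAR(S) you need this ingredient (or an equivalent), not merely positive definiteness of a Hessian on the constrained tangent space.
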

Here, we have mainly followed the setup of \citet{mishra2017sequential} to prove the required results,  the details of which are relegated to Supplementary Materials,  Section  1.6. Similarly, by following \citet{mishra2017sequential}, we can establish the selection consistency of  GOFAR(S) and GOFAR(P) under assumptions \textbf{A1 - A5}.


\section{Simulation}\label{sec2:simulation}
\subsection{Setup}\label{subsec2:simsetup}
We compare the estimation performance, prediction accuracy and sparsity recovery of GOFAR(S) and  GOFAR(P) to those of the following modeling strategies: (a) uGLM: fit each response by the univariate sparse GLM implemented in the R package \textsf{glmnet}  \citep{friedman2010regularization}; and (b) mRRR: fit by mixed-outcome reduced-rank regression \citep{chenandluo2017}. In addition, to show the merit of jointly learning from mixed outcomes, we also use GOFAR(S) to fit each type of responses  separately;  the resulting method is labeled GOFAR(S,S).

We have summarized all the simulation settings in Table \ref{tab:simsetup}.  The  setup covers scenarios with the same type of outcomes and with mixed types of outcomes. In the first scenario, the outcomes are  either Gaussian (G), Bernoulli (B) or Poisson (P), whereas in the second scenario, the outcomes consist of an equal number of (a) Gaussian and Bernoulli (G-B) or (b) Gaussian and Poisson (G-P) outcomes. Moreover, setup I and setup II refer to the low-dimensional and high-dimensional simulation examples, respectively.

\begin{table}[htp]\label{tab:simsetup}
\centering
\caption{Simulation: model dimensions of all the simulation settings, including the sample size $n$, the number of predictors $p$, and the numbers \{$q_1$, $q_2$, $q_3$\} of Gaussian (G), Bernoulli (B) and Poisson (P) outcomes, respectively.} \label{tab:simsetup}
\scalebox{0.9}{
\begin{tabular}{cccccccc}
  \hline
\multicolumn{1}{c}{} & \multicolumn{2}{c}{} & \multicolumn{3}{c}{Single-Type Scenario} & \multicolumn{2}{c}{Mixed-Type Scenario}\\
\hline
Setup & $n$ & $p$ & G  & B & P & G-B &  G-P\\
\hline
I &  200 & 100 & (30,0,0)  & (0,30,0) & (0,0,30)  & (15,15,0)  & (15,0,15)\\  
II &  200 & 300 & (30,0,0) &  (0,30,0) & (0,0,30)  & (15,15,0)  & (15,0,15)\\
\hline
\end{tabular}}
\end{table}

We set the true rank as $r^* =3$. Denote the true coefficient matrix as $\C^* = \U^*\D^*\V^*\trans$, with $\U^* = [\u_1^*,\u_2^*,\u_3^*]$, $\V^* = [\v_1^*,\v_2^*,\v_3^*]$ and $\D^* = s \times \M{diag}[d_1^*,d_2^*,d_{3}^*]$. We set $d_1^*=6$, $d_2^*=5$, $d_3^*=4$ and $s=1$, except that when Poisson outcomes are present we set $s = 0.4$. 
The particular choice of the default value of $\alpha_p = 10$ for Poisson outcomes ensures a  monotone descending objective function for the G-CURE optimization problem \eqref{eq:gcure:objT}. Let $\M{unif}(\mA,b)$ denote a vector of length $b$ whose entries are 
uniformly distributed on the set $\mA$, and $\M{rep}(a,b)$ denote the vector of length $b$ with all entries  equal to $a$. For the single-type response scenario, we generate $\u_k^*$ as $\B{u}_k^*=\check{\B{u}}_k/\|\check{\B{u}}_k\|$, where $\check{\B{u}}_1=[\M{unif}(\mA_u,8), \M{rep} (0, p-8)]\trans$, $\check{\B{u}}_2=[\M{rep}(0,5), \M{unif}(\mA_u,9),\M{rep}(0,p-14)]\trans$, and $\check{\B{u}}_3=[\M{rep}(0,11),\M{unif}(\mA_u,9),\M{rep} (0,p-20)]\trans$; and we generate $\v_k^*$ as $\B{v}_k^*=\check{\B{v}}_k/\|\check{\B{v}}_k\|$, where $\check{\B{v}}_1=[\M{unif}(\mA_v,5),\M{rep}(0,q-5)]\trans$, $\check{\B{v}}_2=[\M{rep}(0,5), \M{unif}(\mA_v,5),$ $\M{rep}(0,q-10)]\trans$, and $\check{\B{v}}_3=[\M{rep}(0,10)$, $\M{unif}(\mA_v,5),\M{rep}(0,q-15)]\trans$. Here we set $\mA_u=\pm 1$ and $\mA_v=[-1,-0.3]\cup[0.3,1]$. For the mixed-type scenario, while the $\u_k^*$s are generated in the same way, we set  the $\v_k^*$s to make sure there is sufficient sharing of information among the different types of responses. Specifically, we generate $\v_k^*$ as $\check{\B{v}}_k = [\widebar{\v}_{k},\widebar{\v}_{k}]\trans$ for $k=1,2,3$, where $\widebar{\v}_1=[\M{unif}(\mA_u,5),\M{rep}(0,q/2-5)]$, $\widebar{\v}_2=[\M{rep}(0,3),\widebar{v}_{14},-\widebar{v}_{15}, \M{unif}(\mA_u,3), \M{rep}(0,q/2-8)]$, and $\widebar{\v}_3=[\widebar{v}_{11},-\widebar{v}_{12},\M{rep}(0,4),\widebar{v}_{27},-\widebar{v}_{28},\M{unif}(\mA_u,2),\M{rep}(0,q-10)]$. In all the settings, we set $\Z = \1_n$ with $\bbeta^* = [\M{rep}(0.5,q)]\trans$, to include an intercept term.

The predictor matrix $\X \in\mathbb{R}^{n\times p}$ is generated from a multivariate normal distribution with some rotations to make sure that the latent factors $\X\U^*/\sqrt{n}$ are orthogonal according to the proposed GOFAR model; the details can be found in \citet{mishra2017sequential}. The dispersion parameter $a_k(\phi_k^*) = \sigma^2$ for the Gaussian outcomes is set to make the signal-to-noise ratio (SNR) equal to 0.5. (For the Binary and Poisson outcomes, $a_k(\phi_k^*) = 1$). Finally, $\Y$ is generated according to model \eqref{eq:glmmodel} with $\bs\Theta^* = \Z\bbeta^* + \X\C^*$.  We also consider the incomplete data setup by randomly deleting 20\% of the entries in $\Y$ ($\mbox{M}\% = 20$). The experiment under each setting is replicated 100 times.

The model estimation performance is measured by $\M{Er}({\C}) = \| \what{\C} - \C^*\|_F/(pq)$ and $\M{Er}(\bs\Theta) = \| \what{\bs\Theta} - \bs\Theta^*\|_F /(nq)$. 
The sparsity recovery is evaluated by the false positive rate (FPR) and the false negative rate (FNR), calculated by comparing the support of $(\hat{\u}_k,\hat{\v}_k)$ to that of $(\u_k^*,\v_k^*)$ for $k=1,\ldots,r^*$. For rank recovery, we report the mean of the rank estimates and the relative percentage of signal in the $(r^*+1)$th component and beyond, i.e., $\M{R}\% = 100(\sum_{i = r^* +1}^{\what{r}}\hat{d}_i^2) /  (\sum_{i = 1}^{{\what{r}}}\hat{d}_i^2) $; as such, $\M{R}\% = 0$ if the rank is not over-estimated. Finally, we depict the  computational complexity in terms of  mean execution time.

\subsection{Simulation Results}\label{sec::sim-res}

Tables \ref{table:respTypeBernoulliII}--\ref{table:respTypeGaussian-PoissonII} report the results for the high-dimensional models in Setup II (Table \ref{tab:simsetup}). Figures \ref{fig:similartype:analysis}--\ref{fig:mixedtype:analysis} show the boxplots of the estimation errors for Setups I and II. The detailed results under Setup I are relegated to  Supplementary Materials, as the results under the two setups convey similar messages. 

\FloatBarrier 
\begin{table}[!h]
\centering
\parbox{1.00\textwidth}{\caption{Simulation: model evaluation based on 100 replications using various performance measures (standard deviations are shown in parentheses) in Setup II with Bernoulli responses. \label{table:respTypeBernoulliII}}} 
\begingroup\scriptsize
\scalebox{0.75}{
\begin{tabular}{llllllll}
  \hline
 & Er($\C$) & Er($\bs\Theta$) & FPR & FNR & R\% & r & time (s) \\  
   \hline 
& \multicolumn{7}{c}{ M\% = 0} \\
 \hline
GOFAR(S) & 22.59 (5.15) & 41.29 (6.59) & 2.40 (0.98) & 4.00 (2.95) & 0.00 (0.00) & 3.00 (0.00) & 247.21 (13.82) \\ 
  GOFAR(P) & 26.08 (6.94) & 55.10 (14.65) & 6.26 (2.70) & 3.30 (2.98) & 0.00 (0.00) & 3.00 (0.00) & 49.34 (6.96) \\ 
  mRRR & 149.30 (12.27) & 272.49 (27.92) & 100.00 (0.00) & 0.00 (0.00) & 0.00 (0.00) & 3.00 (0.00) & 51.17 (0.80) \\ 
  uGLM & 58.11 (2.92) & 120.95 (6.69) & 72.56 (5.68) & 1.40 (1.47) & 18.17 (2.74) & 25.14 (1.51) & 8.05 (0.19) \\ 
   \hline 
& \multicolumn{7}{c}{ M\% = 20} \\
GOFAR(S) & 28.69 (4.96) & 54.95 (7.36) & 2.74 (0.95) & 7.66 (4.19) & 0.00 (0.00) & 3.00 (0.00) & 274.09 (15.10) \\ 
  GOFAR(P) & 38.03 (9.65) & 88.77 (25.48) & 8.81 (3.37) & 5.63 (4.15) & 0.00 (0.00) & 3.00 (0.00) & 54.54 (7.14) \\ 
  mRRR & 150.53 (24.42) & 307.31 (48.92) & 81.65 (20.33) & 17.96 (19.89) & 0.00 (0.00) & 2.45 (0.61) & 50.96 (0.93) \\ 
  uGLM & 63.93 (2.90) & 140.12 (8.66) & 67.57 (6.81) & 2.94 (2.71) & 28.70 (12.66) & 24.72 (1.62) & 5.90 (0.19) \\ 
   \hline
\end{tabular}
}
\endgroup
\end{table}

\FloatBarrier 
\begin{table}[!h]
\centering
\parbox{1.00\textwidth}{\caption{Simulation: model evaluation based on 100 replications using various performance measures (standard deviations are shown in parentheses) in Setup II with Poisson responses. \label{table:respTypePoissonII}}} 
\begingroup\scriptsize
\scalebox{0.75}{
\begin{tabular}{llllllll}
  \hline
 & Er($\C$) & Er($\bs\Theta$) & FPR & FNR & R\% & r & time (s) \\  
   \hline 
& \multicolumn{7}{c}{ M\% = 0} \\
 \hline
GOFAR(S) & 2.22 (0.60) & 3.86 (0.73) & 0.53 (0.50) & 1.85 (2.39) & 0.00 (0.00) & 3.00 (0.00) & 815.40 (37.25) \\ 
  GOFAR(P) & 2.22 (0.59) & 3.97 (0.72) & 6.80 (3.46) & 0.91 (1.38) & 0.07 (0.16) & 3.59 (0.77) & 188.18 (7.01) \\ 
  mRRR & 12.10 (0.39) & 10.64 (0.59) & 100.00 (0.00) & 0.00 (0.00) & 11.74 (2.36) & 4.00 (0.00) & 54.26 (0.98) \\ 
  uGLM & 5.93 (0.69) & 10.28 (0.79) & 84.65 (4.44) & 0.00 (0.00) & 10.46 (1.66) & 25.57 (1.46) & 17.03 (0.71) \\ 
   \hline 
& \multicolumn{7}{c}{ M\% = 20} \\
GOFAR(S) & 2.74 (0.67) & 4.84 (0.96) & 0.67 (0.51) & 3.51 (2.93) & 0.00 (0.00) & 3.00 (0.00) & 846.06 (47.99) \\ 
  GOFAR(P) & 3.00 (0.77) & 5.18 (0.98) & 9.10 (4.22) & 1.21 (1.41) & 1.37 (1.73) & 3.69 (0.77) & 197.56 (6.14) \\ 
  mRRR & 13.04 (0.53) & 14.95 (2.47) & 100.00 (0.00) & 0.00 (0.00) & 8.63 (6.13) & 3.67 (0.61) & 54.85 (1.25) \\ 
  uGLM & 7.22 (0.72) & 13.04 (0.94) & 81.50 (4.86) & 1.18 (1.49) & 13.14 (2.07) & 25.34 (1.46) & 12.32 (0.44) \\ 
   \hline
\end{tabular}
}
\endgroup
\end{table}

Both GOFAR(S) and GOFAR(P) consistently outperform the other methods in terms of estimation accuracy, sparsity recovery, and rank identification at the expense of reasonably manageable execution time.  In particular, we observe that GOFAR methods maintain their superiority over the other competing methods for handling incomplete data; compared to the complete data counterpart, there is only a mild deterioration in the model estimator evaluation statistics. GOFAR(P) tends to have slightly better performance, which may be owing to the use of an offset that accounts for all the information of the non-targeted unit-rank components. So depending on the computational resources, one can use either of the approaches.

\FloatBarrier 
\begin{table}[!h]
\centering
\parbox{1.00\textwidth}{\caption{Simulation: model evaluation based on 100 replications using various performance measures (standard deviations are shown in parentheses) in Setup II with Gaussian-Bernoulli responses. \label{table:respTypeGaussian-BernoulliII}}} 
\begingroup\scriptsize
\scalebox{0.75}{
\begin{tabular}{llllllll}
  \hline
 & Er($\C$) & Er($\bs\Theta$) & FPR & FNR & R\% & r & time (s) \\  
   \hline 
& \multicolumn{7}{c}{ M\% = 0} \\
 \hline
GOFAR(S) & 20.49 (2.97) & 43.34 (5.23) & 0.31 (0.29) & 0.88 (1.21) & 0.00 (0.00) & 3.00 (0.00) & 129.46 (16.65) \\ 
  GOFAR(P) & 14.64 (4.59) & 29.60 (7.83) & 6.25 (3.06) & 1.54 (1.96) & 0.00 (0.00) & 3.00 (0.00) & 29.82 (4.13) \\ 
  mRRR & 76.17 (8.10) & 164.50 (31.49) & 33.37 (0.00) & 67.24 (0.00) & 0.00 (0.00) & 1.00 (0.00) & 53.74 (1.00) \\ 
  uGLM & 45.57 (2.54) & 86.09 (5.65) & 81.40 (4.66) & 0.00 (0.00) & 12.69 (1.52) & 23.94 (1.50) & 7.74 (0.16) \\ 
  GOFAR(S,S) & 34.14 (5.44) & 80.96 (14.23) & 24.15 (5.75) & 6.33 (4.83) & 7.27 (9.94) & 6.66 (1.37) & 211.37 (45.36) \\ 
   \hline 
& \multicolumn{7}{c}{ M\% = 20} \\
GOFAR(S) & 24.64 (4.11) & 51.56 (6.75) & 0.38 (0.27) & 2.25 (1.97) & 0.00 (0.00) & 3.00 (0.00) & 156.40 (19.51) \\ 
  GOFAR(P) & 20.50 (6.32) & 42.40 (11.35) & 11.67 (5.01) & 3.61 (3.65) & 0.67 (1.33) & 3.39 (0.65) & 34.80 (3.25) \\ 
  mRRR & 79.19 (6.96) & 171.80 (37.15) & 39.26 (12.75) & 60.80 (13.92) & 0.00 (0.00) & 1.18 (0.38) & 54.77 (1.01) \\ 
  uGLM & 51.89 (2.54) & 102.61 (7.21) & 79.26 (5.37) & 0.00 (0.00) & 15.85 (2.09) & 24.06 (1.64) & 5.64 (0.14) \\ 
  GOFAR(S,S) & 38.30 (4.61) & 90.68 (14.64) & 20.80 (5.80) & 7.73 (5.53) & 3.73 (1.83) & 6.00 (1.36) & 193.92 (46.18) \\ 
   \hline
\end{tabular}
}
\endgroup
\end{table}

\FloatBarrier 
\begin{table}[!h]
\centering
\parbox{1.00\textwidth}{\caption{Simulation: model evaluation based on 100 replications using various performance measures (standard deviations are shown in parentheses) in Setup II with Gaussian-Poisson responses. \label{table:respTypeGaussian-PoissonII}}} 
\begingroup\scriptsize
\scalebox{0.75}{
\begin{tabular}{llllllll}
  \hline
 & Er($\C$) & Er($\bs\Theta$) & FPR & FNR & R\% & r & time (s) \\  
   \hline 
& \multicolumn{7}{c}{ M\% = 0} \\
 \hline
GOFAR(S) & 2.26 (0.50) & 3.66 (0.64) & 0.32 (0.28) & 0.71 (1.06) & 0.00 (0.00) & 3.00 (0.00) & 686.87 (23.35) \\ 
  GOFAR(P) & 2.00 (0.54) & 3.08 (0.55) & 7.75 (4.83) & 0.00 (0.00) & 0.00 (0.00) & 3.00 (0.00) & 148.44 (5.12) \\ 
  mRRR & 13.88 (0.63) & 32.68 (2.52) & 33.37 (0.00) & 67.24 (0.00) & 0.00 (0.00) & 1.00 (0.00) & 57.01 (1.19) \\ 
  uGLM & 5.93 (0.57) & 9.31 (0.60) & 87.09 (3.12) & 0.00 (0.00) & 10.65 (1.33) & 24.03 (1.66) & 12.38 (0.38) \\ 
  GOFAR(S,S) & 3.69 (1.28) & 8.01 (3.60) & 14.16 (4.76) & 0.56 (1.05) & 38.72 (38.17) & 5.34 (0.89) & 493.27 (50.71) \\ 
   \hline 
& \multicolumn{7}{c}{ M\% = 20} \\
GOFAR(S) & 2.77 (0.58) & 4.58 (0.88) & 0.55 (0.46) & 1.39 (1.51) & 0.00 (0.00) & 3.00 (0.00) & 678.94 (24.96) \\ 
  GOFAR(P) & 3.02 (0.72) & 4.21 (0.78) & 15.02 (6.13) & 0.00 (0.00) & 0.00 (0.00) & 3.00 (0.00) & 147.07 (5.04) \\ 
  mRRR & 14.81 (0.53) & 35.89 (2.47) & 33.37 (0.00) & 67.24 (0.00) & 0.00 (0.00) & 1.00 (0.00) & 57.14 (1.16) \\ 
  uGLM & 7.08 (0.52) & 11.84 (0.77) & 83.87 (3.61) & 0.00 (0.00) & 13.51 (1.73) & 24.09 (1.68) & 8.90 (0.29) \\ 
  GOFAR(S,S) & 3.64 (1.02) & 6.29 (1.92) & 16.46 (4.72) & 0.59 (1.15) & 37.48 (37.06) & 5.98 (0.90) & 533.82 (47.11) \\ 
   \hline
\end{tabular}
}
\endgroup
\end{table}

\vspace{1cm}
\newpage
\begin{figure}[htp]
 \centering
  \includegraphics[width=1.0\textwidth]{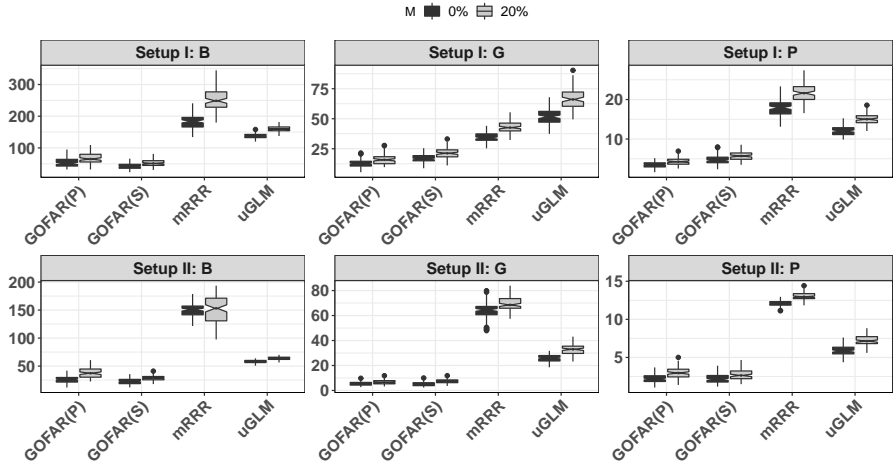}
      \caption{Simulation: notched boxplots of the estimation error $\M{Er}({\C})$ for the single-type scenario under Setups I and II based on 100 replications} 
     \label{fig:similartype:analysis}
\end{figure}

The superior performance of GOFAR  is due to its ability to model  the underlying association between multivariate responses and high-dimensional predictors through the low-rank and sparse coefficient matrix. On the other hand, the mRRR is only equipped to handle dependency through the low-rank structure. Because of this, the noise variables are all used in the estimated factors, thereby compromising the performance of the model; it may fail to identify important factors due to this limitation, which may cause rank underestimation, particularly in the mixed-type scenario.
The uGLM does not explore the shared information among the outcomes, while GOFAR(S,S) does not explore the shared information among the different types of responses; so the superior performance of GOFAR over these two models further showcases the merit of integrative multivariate learning.

\begin{figure}[htp]
 \centering
  \includegraphics[width=1.0\textwidth]{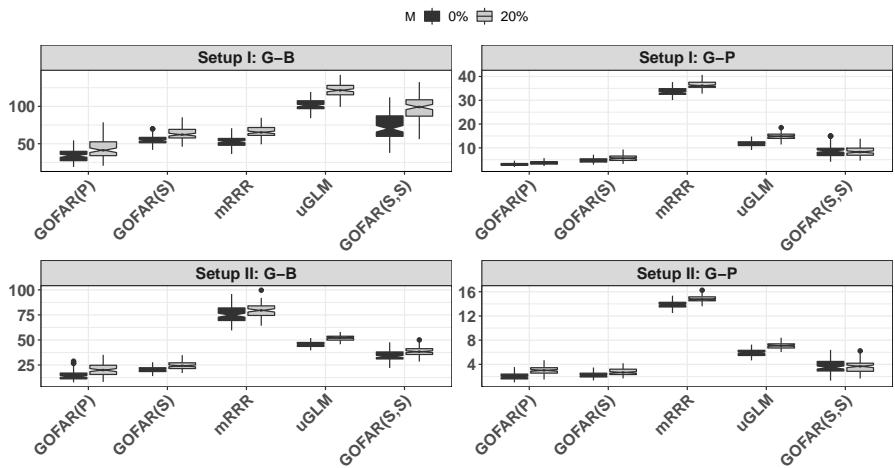}
      \caption{Simulation: notched boxplots of the estimation error $\M{Er}({\C})$ for the mixed-type scenario under  Setups I and II based on 100 replications.} 
     \label{fig:mixedtype:analysis}%
\end{figure}

\section{Application}\label{sec:gsecure-application}
\subsection{Modeling of mixed outcomes from LSOA}
The Longitudinal Study of Aging (LSOA) \citep{stanziano2010review}, a joint project of the National Center for Health Statistics and  the National Institute on Aging,  was designed to collect data measuring medical conditions, functional status, experiences and other socioeconomic dimensions of health in an aging population  (70 years of age and over). The study collected data from a large cohort of senior people in the period of  $1997-1998$. They were  studied again between $1999-2000$. 
Our goal is to understand the association between health-related events in the future (denoted as outcome $\Y$) and health status in the past (denoted as predictor $\X$) using data from $n = 3988$ subjects from this study.

The multivariate responses in $\Y$ include: a) $q_1 = 3$ continuous outcomes related to overall health status, memory status and depression status; and b) $q_2 =41$ binary/Bernoulli  outcomes related to physical conditions, medical issues, memory status, vision and hearing status, and social behavior. 
Our analysis considers a total of $p=294$ predictors, constructed from the variables related to demography, family structure, daily personal care, medical history, social activity, health opinion, behavior, nutrition, health insurance, income, and assets, a majority of which are binary measurements. For simplicity, we impute missing entries in the predictors with the sample mean. 
GOFAR(S)/GOFAR(P) can efficiently handle missing entries in the multivariate response, so such imputations are not required for the  20.2\% of   entries in $\Y$ that are missing.
Now, to determine the  association between $\X$ and $\Y$, we model the mixed outcomes jointly and apply  GOFAR(S)/GOFAR(P) to obtain a low-rank and sparse estimate of the coefficient matrix. The model specifies   gender and age as control variables $\Z$ (not penalized in the model). The parameter estimates  then relate a subset of future health outcomes to a subset of past health conditions via latent factors (constructed from the subset of predictors).

On the LSOA data,  {GOFAR(S)}/{GOFAR(P)}  demonstrates comparable prediction  performance with the advantage of producing the most parsimonious model when compared with the non-sparse method mRRR and the marginal approach uGLM. Table  \ref{tab:App:LSOA:PECI} summarizes the results from 100 replications with 75\%  of data selected using random sampling without replacement for training and the remaining 25\% for testing. On the test data, the metric Er(G) computes the mean square error for Gaussian outcomes and the metric  Er(B) computes the area under curve for binary outcomes. With the lesser number of latent factors (from r) and sufficiently sparse left and right singular vectors, GOFAR(S) produces the most parsimonious model, thus facilitating better interpretation.

\begin{table}[htp]
\centering
\caption{Application -- LSOA: Model evaluation (standard deviations are shown in parentheses) based on prediction error of Gaussian and binary outcomes, rank estimation $r$ and support recovery \{\textsf{supp}($\U$) and \textsf{supp}($\V$)\}.}\label{tab:App:LSOA:PECI}
\begingroup\scriptsize
\scalebox{1.10}{
\begin{tabular}{llllll}
  \hline
Method & Er(G) & Er(B) & r & \textsf{supp}(U)\{\%\} & \textsf{supp}(V)\{\%\} \\ 
  \hline
GOFAR(S) & 0.69(0.06) & 0.76(0.10) & 4.45(0.65) & 20(3) & 43(5) \\ 
  GOFAR(P) & 0.72(0.06) & 0.76(0.10) & 4.46(0.50) & 25(3) & 51(5) \\ 
  mRRR & 0.70(0.06) & 0.74(0.10) & 13.15(1.75) & 100(0) & 100(0) \\ 
  uGLM & 0.68(0.06) & 0.78(0.08) & 41.51(0.61) & 72(2) & 99(0) \\ 
   \hline
\end{tabular}
}
\endgroup
\end{table}

Owing to the superior performance of GOFAR(S) in terms of producing the most interpretable model, we apply the procedure to the full data and obtain the parameter estimates. The 
GOFAR(S) approach identifies $r=5$ subsets of outcome variables (inferred from the sparse $\V$) that are associated with the predictor $\X$ via an equivalent number of latent factors (constructed from a subset of predictors using the sparse $\U$). 
Figure  \ref{fig:mixedtype:lsoa-coef}  displays the sparse estimate of the coefficient matrix $\C$ and its $r$ unit-rank components. 
Support of the estimate of the singular vectors  is given by $\M{supp}(\U) = \{ 16\% , 30\%, 34\% , 54\%, 6\%\}$ and $\M{supp}(\V) = \{ 86\% , 72\%, 34\% , 14\%, 9\%\}$. 
The block structure of the unit-rank components facilitates a similar interpretation, as expected from biclustering. First, latent factors constructed from a subset of predictors, mainly in the category of daily activity and prior medical conditions, determine all outcomes except cognitive ability.
The latent factor clearly distinguishes social involvement outcomes from others. Apart from identifying the subset of predictors in each category, the approach finds a subgroup of the prior medical conditions affecting the outcome in the opposite way. The second latent factor helps us to identify a subgroup of the fundamental daily activity outcomes. 
One of the subgroups is similar to the group of  outcomes related to social involvement and medical conditions. The third and fourth latent factors clearly distinguish outcomes related to social involvement from all others.

\begin{figure}[htp]%
\centering
\subfloat[ $\C$ ]{{\includegraphics[width=0.29\textwidth, frame]{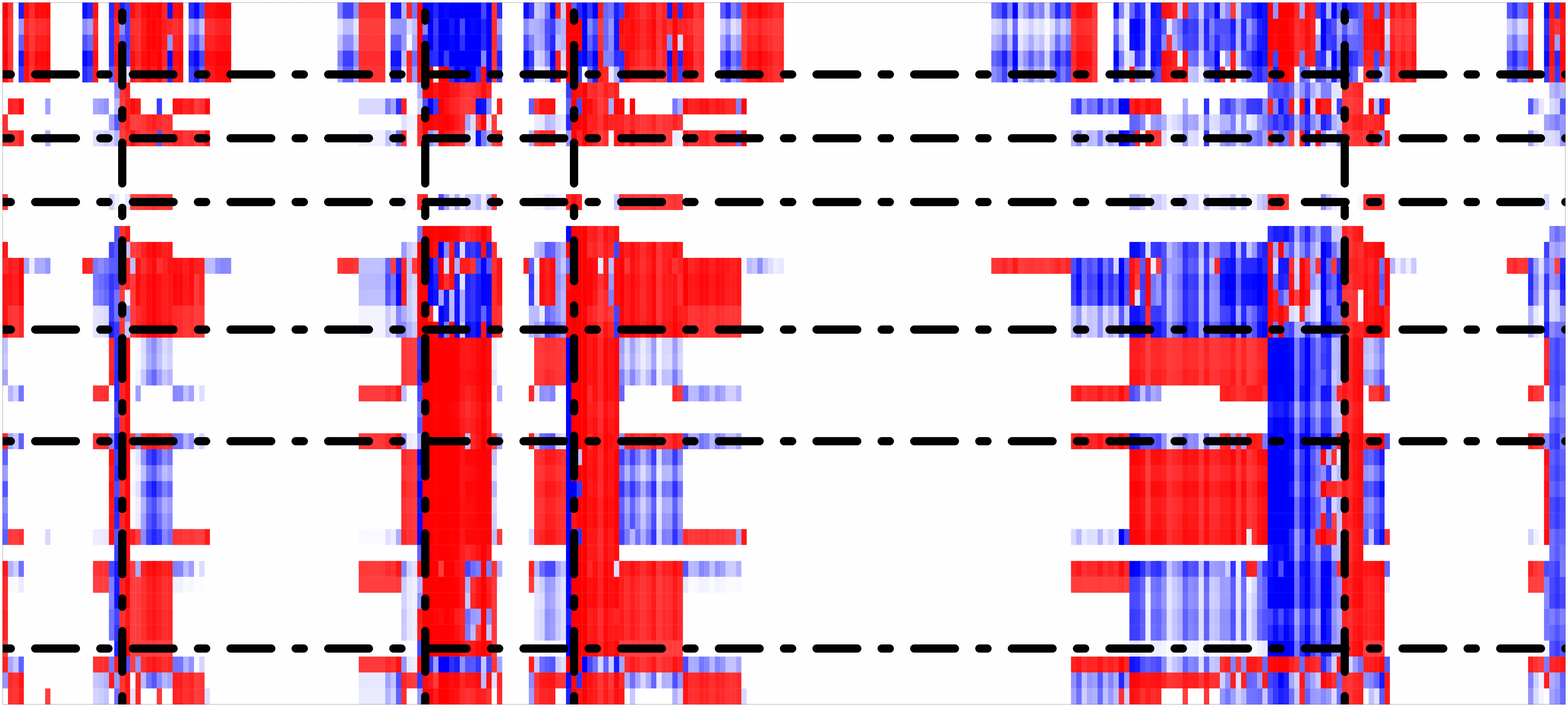} \quad }}
\subfloat[ $\C_1$]{{\includegraphics[width=0.29\textwidth, frame]{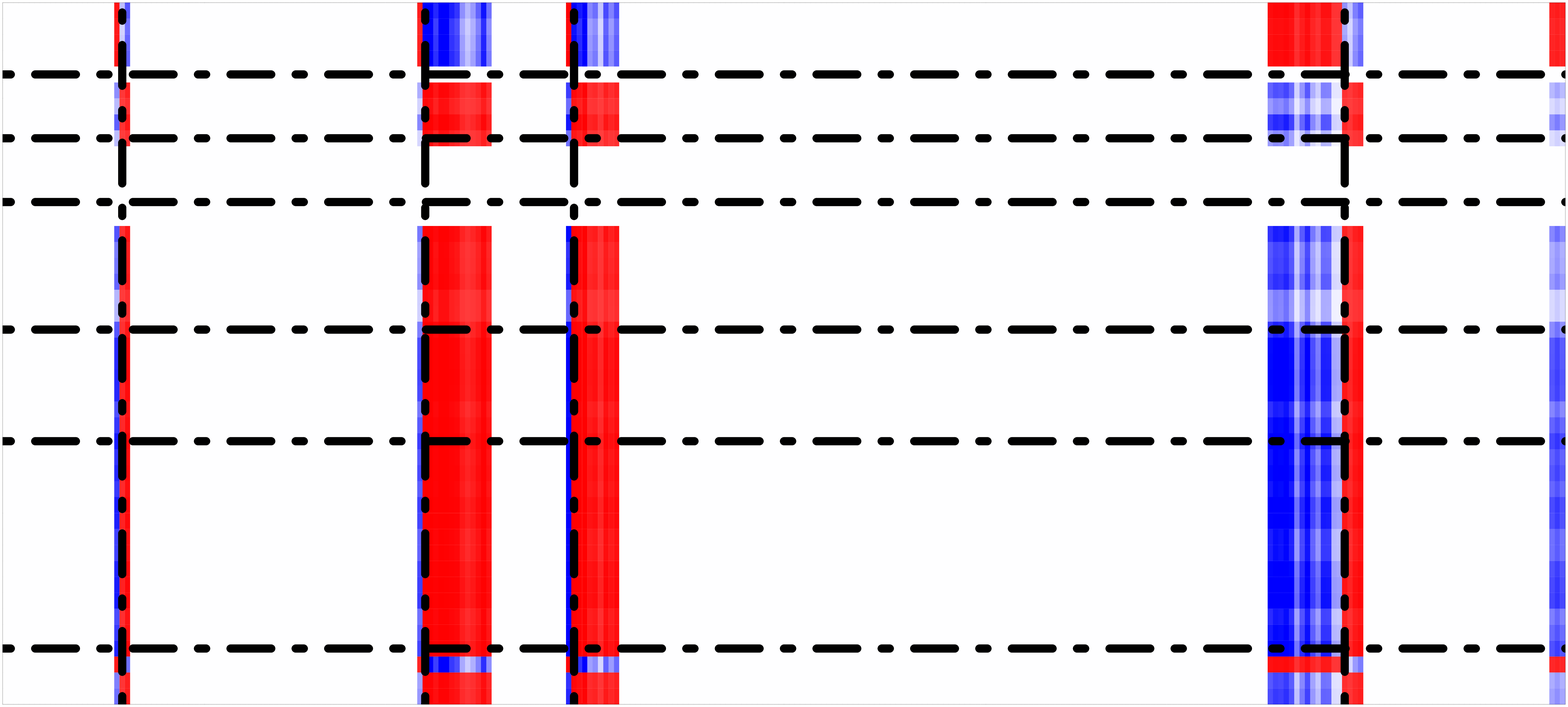} \quad }}
\subfloat[ $\C_2$]{{\includegraphics[width=0.29\textwidth, frame]{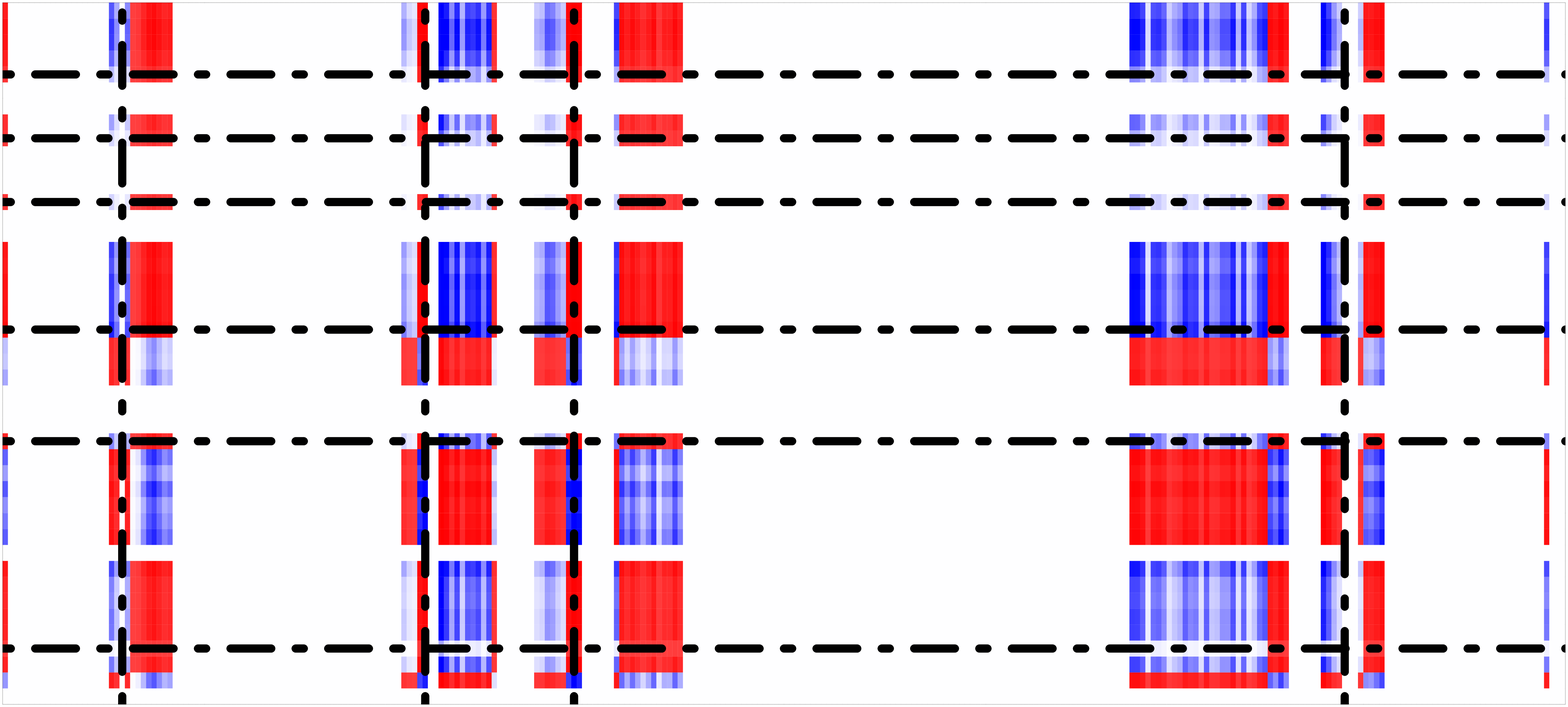}}} \\
\subfloat[ $\C_3$]{{\includegraphics[width=0.29\textwidth, frame]{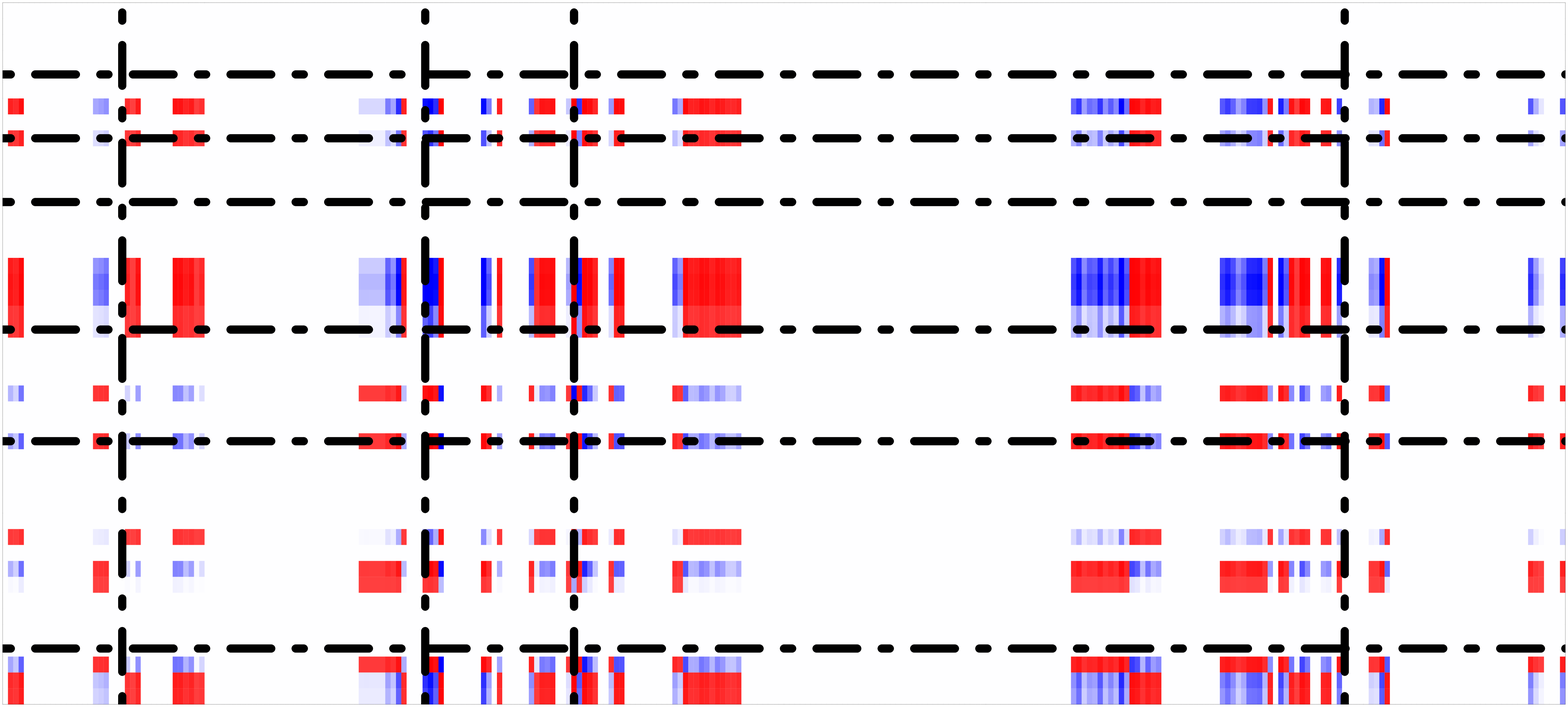} \quad }}
\subfloat[ $\C_4$]{{\includegraphics[width=0.29\textwidth, frame]{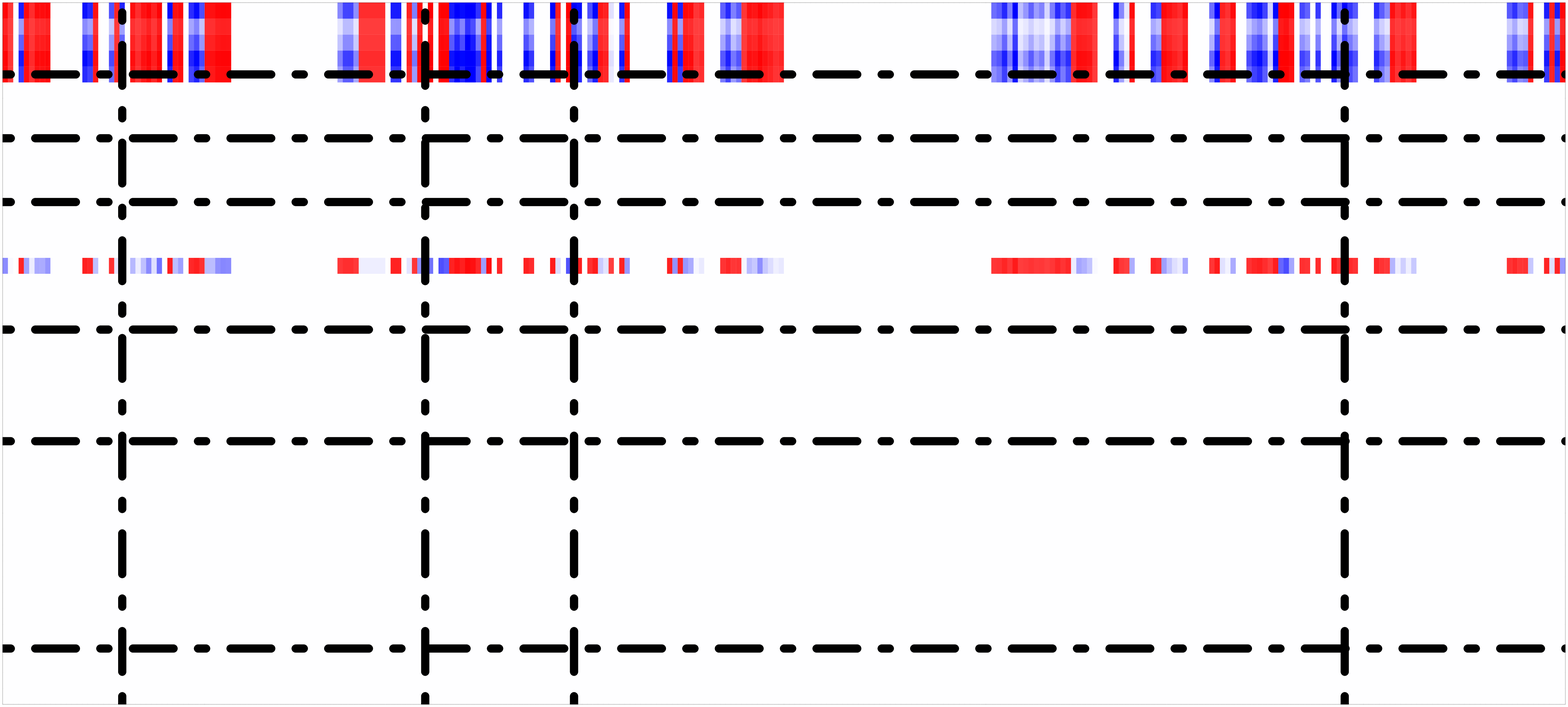} \quad }}	
\subfloat[ $\C_5$]{{\includegraphics[width=0.29\textwidth, frame]{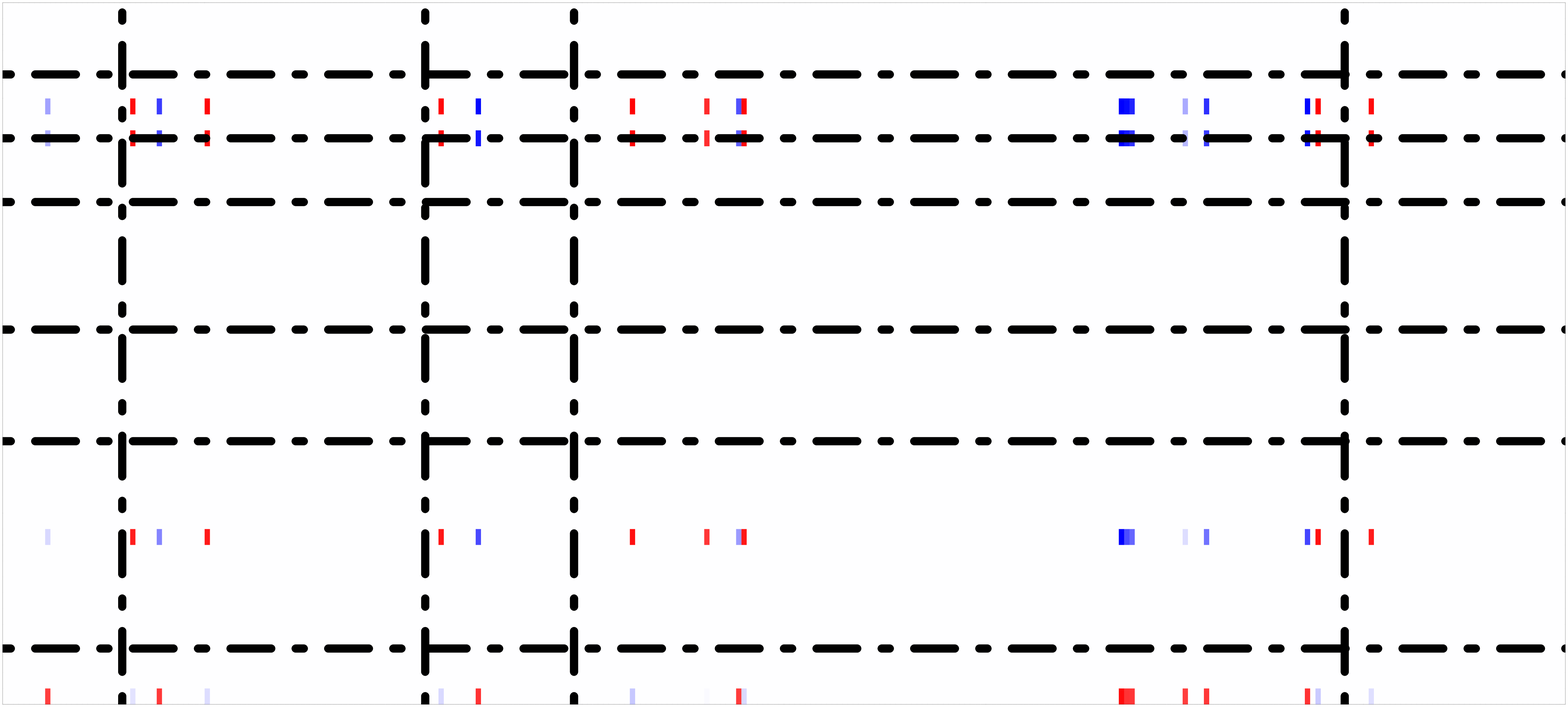}}}
\caption{Application -- LSOA Data: The sparse estimate of the coefficient matrix $\what{\C}$ with its unit-rank components using GOFAR(S). Horizontal lines separate the response into 7 categories  given by self-evaluation, fundamental daily activity, extended daily activity, medical condition, cognitive ability, sensation condition and social involvement (bottom to top). Vertical lines (left to right) separate the 294 predictors into five categories: namely, change in medical procedure since the last interview, daily activity, family status,  housing condition, and prior medical condition. 
	}
	\label{fig:mixedtype:lsoa-coef}%
\end{figure}

\subsection{Modeling of binary outcomes from CAL500}\label{subsec:cal500}
In the second application, we consider the Computer Audition Lab 500-song (CAL500) dataset \citep{turnbull2007towards} and apply the proposed procedure to explore the underlying associations. The data set consists of 68 audio signal characteristics  from signal processing as the predictor $\X$, and 174 annotations of songs by a human after listening as outcomes. The 
song features are mainly related to zero crossings, spectral centroid, spectral rolloff, spectral flux and Mel-Frequency Cepstral Coefficients (MFCC). On the other hand, the 174 binary outcomes from song annotations are categorized into emotions, genre, instrument, usage, vocals and song features. 
Some songs are annotated fewer  than 20 times. We merge the disjoint sets of  outcomes in a given category into one. After preprocessing, we are left with 107 binary outcomes ($\Y$). 
Since the underlying distribution of outcomes is Bernoulli, we model the song annotations using acoustic features and apply the proposed procedure to estimate the low-rank and sparse coefficient matrix. This allows us to find subsets of song features that affect only a subset of song annotations.

As in the LSOA data analysis, we compare the parameter estimates from  GOFAR(S),  GOFAR(P), mRRR and uGLM, and summarize the results from 100 replicates (80\% training and 20\% testing) in  Table \ref{tab:App:cal500:PECI}.  
All the rank-constrained approaches demonstrate better prediction error performance than the marginal modeling approach (uGLM), thus proving the merit of  the idea of using joint estimation to determine the underlying dependency. 
The prediction error performance of GOFAR(S),  GOFAR(P) and mRRR are comparable, with a slight edge to mRRR. This can be attributed to the fact that the underlying system is not sufficiently sparse (see the support of $\U$ and $\V$). We have already observed that the simulation results effectively demonstrate the usefulness of  GOFAR(S)/GOFAR(P) in both large and high-dimensional setups  where our underlying system is very sparse. Moreover, compared to the non-sparse model mRRR, GOFAR(S)/GOFAR(P) facilitates better interpretation via sparse singular vector estimates.

\begin{table}[H]
\centering
\caption{Application -- CAL500: Model evaluation (standard deviations are shown in parentheses) based on prediction error (PE), rank estimation $r$ and support recovery  \{\textsf{supp}($\U$) and \textsf{supp}($\V$)\}.}\label{tab:App:cal500:PECI}
\begingroup\scriptsize
\scalebox{1.0}{
\begin{tabular}{lcccc}
  \hline
Method & PE & r &  supp(U)\{\%\} & supp(V)\{\%\}  \\ 
  \hline
GOFAR(S) & 0.57(0.09) & 3.00(0.00) & 77(4) & 72(4) \\ 
  GOFAR(P) & 0.55(0.08) & 2.71(0.65) & 43(6) & 46(6) \\ 
  mRRR & 0.58(0.10) & 3.38(0.52) & 100(0) & 100(0) \\ 
  uGLM & 0.52(0.04) & 20.00(0.00) & 96(3) & 55(5) \\ 
   \hline
\end{tabular}
}
\endgroup
\end{table}

 \begin{figure}[htp]%
 \centering
\subfloat[ $\C$ ]{{\includegraphics[width=0.21\textwidth, frame]{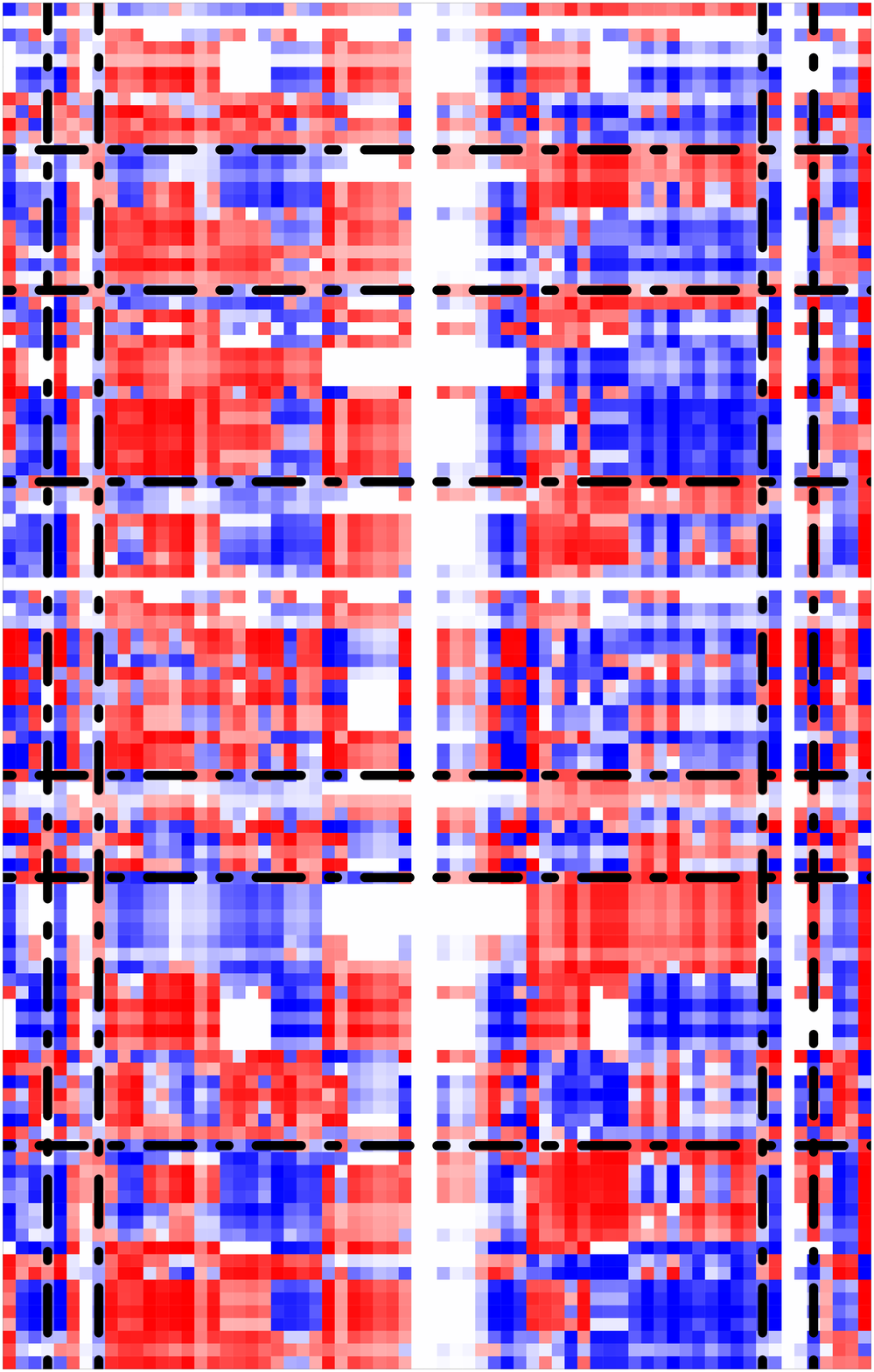} \quad }}
\subfloat[ $\C_1$]{{\includegraphics[width=0.21\textwidth, frame]{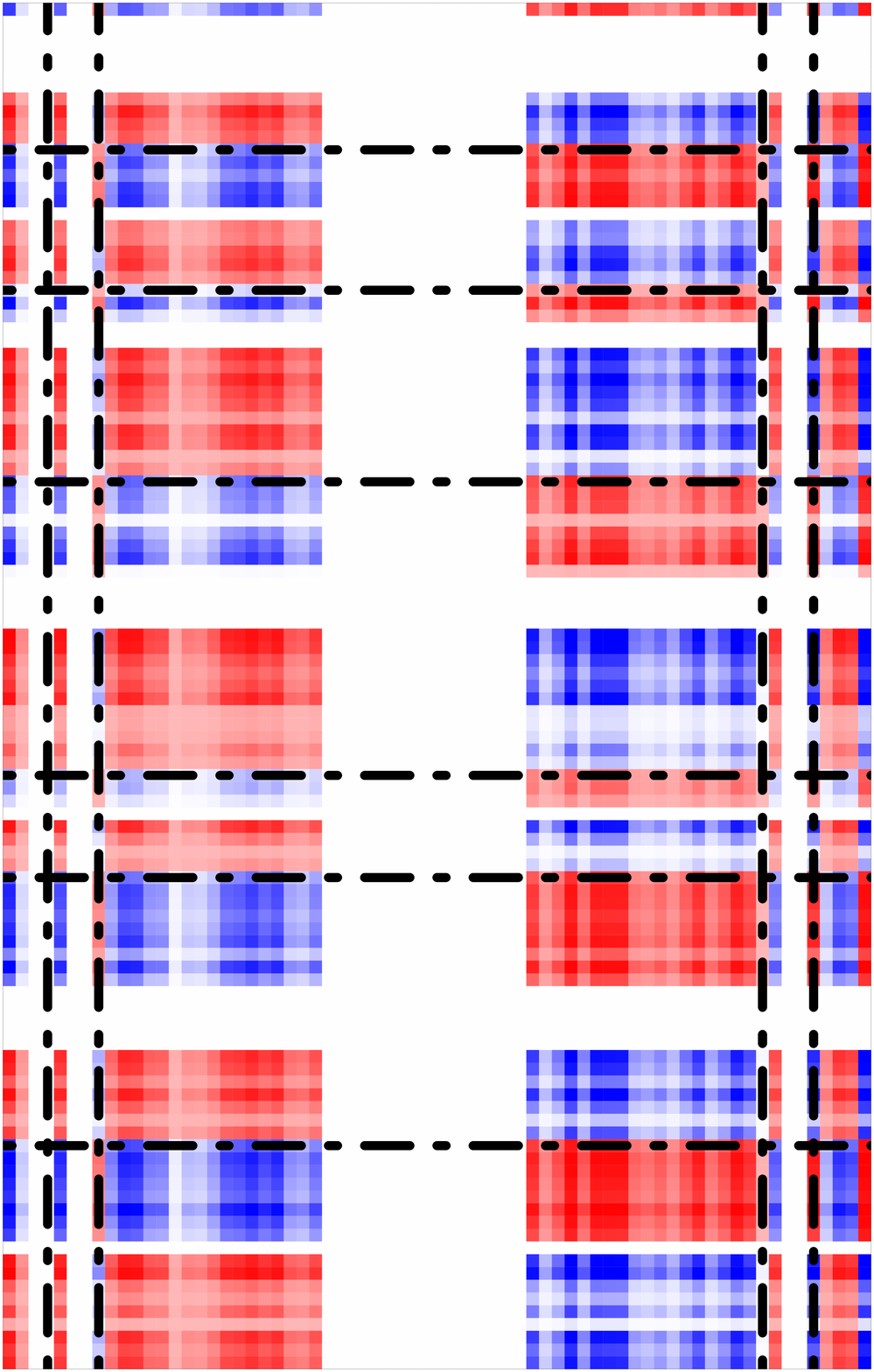} \quad }}
\subfloat[ $\C_2$]{{\includegraphics[width=0.21\textwidth, frame]{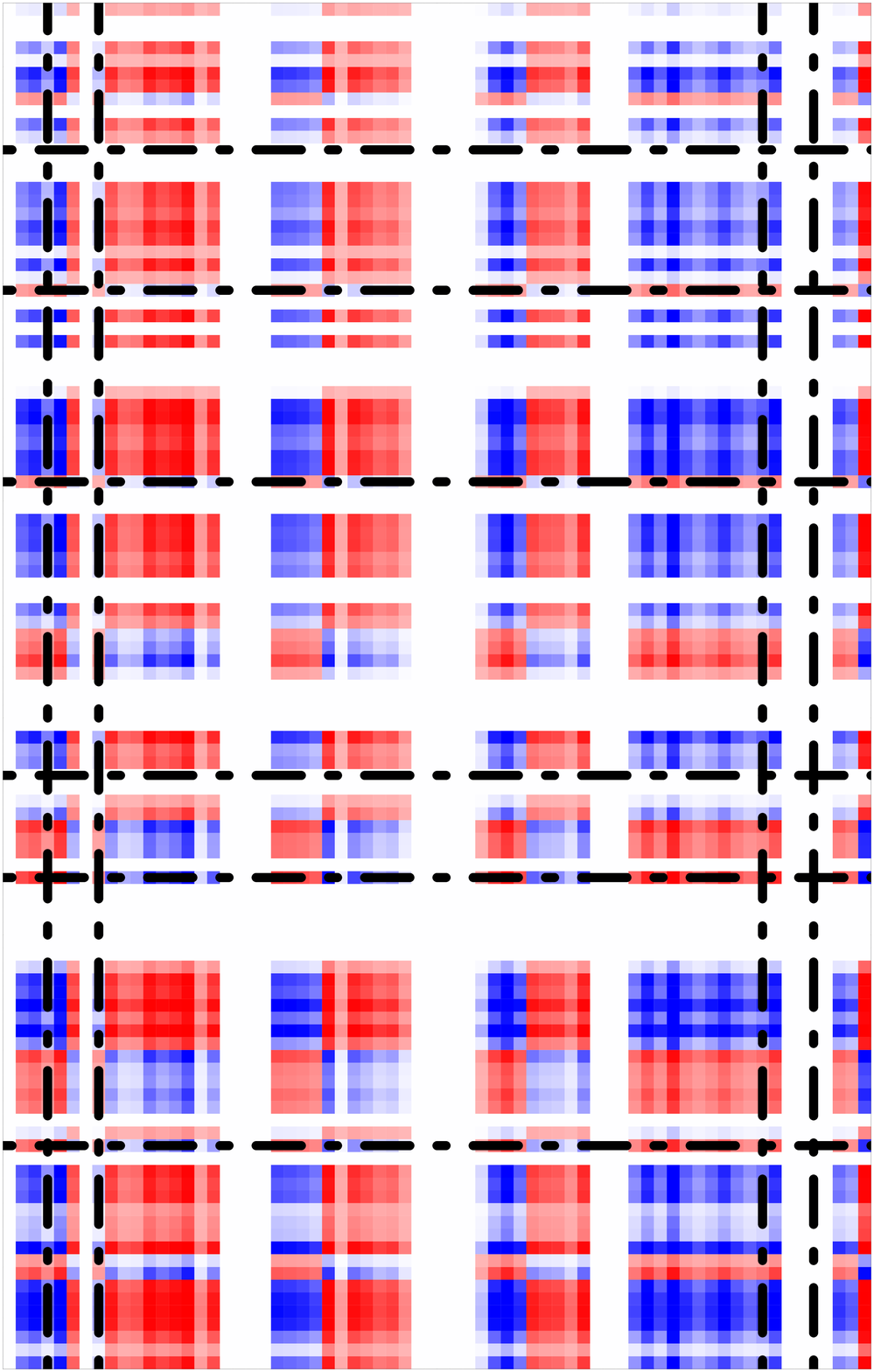} \quad }}
\subfloat[ $\C_3$]{{\includegraphics[width=0.21\textwidth, frame]{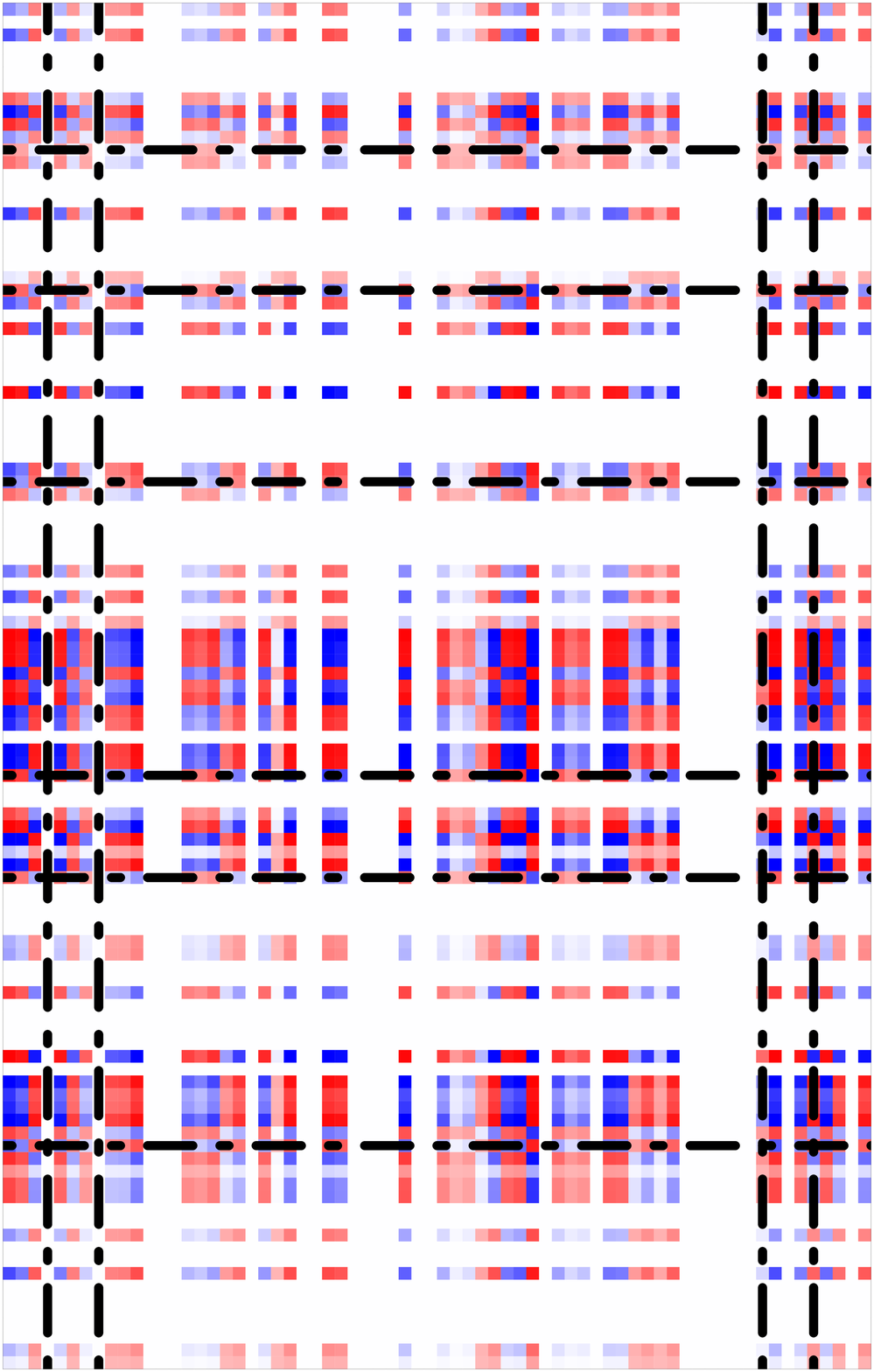} \quad }}		
	\caption{Application -- CAL500 Data: The sparse estimate of the coefficient matrix $\what{\C}$ with its unit-rank components using GOFAR(S).  Vertical lines (left to right) separate 68 predictors into five categories: namely,   spectral centroid, spectral flux, MFCC, spectral rolloff, and zero crossings. Horizontal lines separate the 102 response variables into seven groups: namely, emotion, genre, genre best, instrument, song, usage, and vocals (bottom to top). }
	\label{fig:mixedtype:cal500-coef}%
\end{figure}

Again, following the LSOA data analysis, because of the better support recovery of GOFAR(S), we apply this method to analyze the full data. 
Figure \ref{fig:mixedtype:cal500-coef} represents the low-rank and sparse coefficient matrix $\C$ and its  unit-rank components $\C_i$ for $i = 1,2,3$.  Through the row-wise sparsity of $\U$, the model discards  16 predictors overall, facilitating variable selection. 
Support of the unit-rank components is given by $\M{supp}(\U) = \{ 68\% , 74\%, 65\%\}$ and $\M{supp}(\V) = \{ 81\% , 72\%, 51\%\}$.  
Sparsity results in block cluster representation of the unit-rank components, so we interpret it accordingly.  
From the first unit-rank component, we clearly identify new subgroups in the song annotation category associated with the MFCC covariates.  The sign of the entries in the block matrix accordingly denotes the positive/negative associations. Among the MFCC covariates, we clearly find two separate subgroups.
Blocks resulting from the second unit-rank component estimate suggest the second set of covariates (mostly disjoint from first one) that are associated with a subset of song annotations. 
The third unit-rank component identifies features associated with a subgroup in the outcomes related to the instrument category.

In summary, as we demonstrated in two real-world examples, the proposed GOFAR is parsimonious and effective in recovering the underlying associations through the  sparse unit-rank components of the low-rank coefficient matrix.

\section{Discussion}\label{sec:dis}
In this article, we model the mixed type of outcomes via a multivariate extension of the GLM, with each response following a distribution in the exponential dispersion family. The model encodes the response-predictor dependency through an appealing co-sparse SVD of the nature parameter matrix. We develop two estimation procedures, i.e., a sequential method,  GOFAR(S) and a parallel method, GOFAR(P), to avoid the  notoriously difficult joint estimation alternative. 

There are many future research directions. Our model formulation \eqref{eq:glmmodel} is restricted to outcomes from the exponential dispersion family with canonical link; it would be interesting to consider more flexible link functions and other distributional families. Theoretically, we are interested in performing  non-asymptotic analysis to understand the finite sample behavior of the proposed estimators. Our approach handles missing entries in the response matrix using the same idea of matrix completion \citep{candes2009}; however, it may be fruitful to further explore the role that the type of missing entry plays in parameter estimation. Moreover, it is pressing to extend our method to handle missing entries in the predictor matrix. Finally, the proposed algorithms are still computationally intensive for large-scale problems; we will make the computation more scalable either by utilizing acceleration techniques in the current algorithms or by  developing path-following algorithms and stagewise learning procedures \citep{HeChen2018,chen2020statistically}.

\section*{References}

\setcounter{section}{0}
\section{Supplementary Material}\label{sec:appendix}
\subsection{Exponential dispersion family}
\begin{table}[htp]
	\centering
	\caption{Some common distributions in the exponential dispersion family as specified in \citet{chenandluo2017}.}\label{tab:dist}
	\scalebox{0.8}{
		\begin{tabular}{lccccccc}
		\hline
		Distribution & Mean & Variance & $\theta$ & $\phi$ & $a(\phi)$ & $b(\theta)$ & $c(y;\phi)$\\
		\mbox{Bernoulli}($p$) & $p$ & $p(1-p)$ & $\log \{p(1-p)^{-1}\}$ & 1 & 1 & $\log(1+e^\theta)$ & 0 \\
		\mbox{Poisson}($\lambda$) & $\lambda$ & $\lambda$ & $\log\lambda$ & 1 & 1 & $e^\theta$ & $-\log y!$\\
		\mbox{Normal}($\mu$, $\sigma^2$) & $\mu$ & $\sigma^2$ & $\mu$ & $\sigma^2$ & $\phi$ & $\theta^2/2$ & $-(y^2\phi^{-1}+\log 2\pi)/2$ \\
		\hline
	\end{tabular}
}
\end{table}
\subsection{Simulation results}
\FloatBarrier 
\begin{table}[!h]
\centering
\parbox{1.00\textwidth}{\caption{Simulation: model evaluation based on 100 replications using various performance measures (standard deviations are shown in parentheses) in Setup I with Gaussian responses. \label{table:respTypeGaussianI}}} 
\begingroup\scriptsize
\scalebox{0.8}{
\begin{tabular}{llllllll}
  \hline
 & Er($\C$) & Er($\bs\Theta$) & FPR & FNR & R\% & r & time (s) \\  
   \hline 
& \multicolumn{7}{c}{ M\% = 0} \\
 \hline
GOFAR(S) & 16.95 (3.43) & 10.76 (1.99) & 0.08 (0.17) & 0.00 (0.00) & 0.00 (0.00) & 3.00 (0.00) & 6.75 (0.47) \\ 
  GOFAR(P) & 12.91 (3.24) & 9.38 (2.36) & 6.86 (4.54) & 0.00 (0.00) & 0.00 (0.00) & 3.00 (0.00) & 1.35 (0.14) \\ 
  mRRR & 34.77 (3.84) & 14.84 (1.72) & 100.00 (0.00) & 0.00 (0.00) & 0.00 (0.00) & 3.00 (0.00) & 27.73 (1.35) \\ 
  uGLM & 51.94 (6.67) & 29.85 (3.46) & 96.90 (0.87) & 0.00 (0.00) & 6.40 (0.85) & 25.72 (1.16) & 2.77 (0.10) \\ 
   \hline 
& \multicolumn{7}{c}{ M\% = 20} \\
GOFAR(S) & 21.50 (4.41) & 13.56 (2.28) & 0.22 (0.36) & 0.00 (0.00) & 0.00 (0.00) & 3.00 (0.00) & 7.76 (0.66) \\ 
  GOFAR(P) & 16.10 (4.32) & 11.73 (2.98) & 7.23 (4.49) & 0.00 (0.00) & 0.00 (0.00) & 3.00 (0.00) & 1.75 (0.17) \\ 
  mRRR & 43.37 (4.77) & 19.06 (2.18) & 100.00 (0.00) & 0.00 (0.00) & 0.00 (0.00) & 3.00 (0.00) & 30.99 (1.01) \\ 
  uGLM & 66.57 (8.48) & 38.56 (4.39) & 96.79 (1.05) & 0.00 (0.00) & 8.13 (1.21) & 25.78 (1.29) & 3.39 (0.12) \\ 
   \hline
\end{tabular}
}
\endgroup
\end{table}
 
\FloatBarrier 
\begin{table}[!h]
\centering
\parbox{1.00\textwidth}{\caption{Simulation: model evaluation based on 100 replications using various performance measures (standard deviations are shown in parentheses) in Setup I with Bernoulli responses. \label{table:respTypeBernoulliI}}} 
\begingroup\scriptsize
\scalebox{0.8}{
\begin{tabular}{llllllll}
  \hline
 & Er($\C$) & Er($\bs\Theta$) & FPR & FNR & R\% & r & time (s) \\  
   \hline 
& \multicolumn{7}{c}{ M\% = 0} \\
 \hline
GOFAR(S) & 42.50 (9.26) & 29.44 (4.55) & 0.73 (0.57) & 0.81 (1.31) & 0.00 (0.00) & 3.00 (0.00) & 37.71 (5.39) \\ 
  GOFAR(P) & 53.74 (14.53) & 35.83 (7.50) & 7.49 (4.99) & 1.80 (2.51) & 0.00 (0.00) & 3.00 (0.00) & 8.21 (0.96) \\ 
  mRRR & 183.39 (23.04) & 87.94 (13.54) & 100.00 (0.00) & 0.00 (0.00) & 0.00 (0.00) & 3.00 (0.00) & 29.19 (0.84) \\ 
  uGLM & 138.27 (8.15) & 94.83 (6.12) & 95.30 (1.64) & 0.00 (0.00) & 15.41 (2.15) & 25.18 (1.23) & 17.44 (0.92) \\ 
   \hline 
& \multicolumn{7}{c}{ M\% = 20} \\
GOFAR(S) & 52.53 (11.21) & 37.16 (6.37) & 1.45 (0.77) & 1.29 (1.53) & 0.00 (0.00) & 3.00 (0.00) & 54.95 (9.67) \\ 
  GOFAR(P) & 67.78 (16.89) & 47.65 (10.75) & 8.59 (4.99) & 3.73 (3.19) & 0.00 (0.00) & 3.00 (0.00) & 10.85 (1.13) \\ 
  mRRR & 253.03 (35.97) & 127.25 (21.24) & 100.00 (0.00) & 0.00 (0.00) & 0.00 (0.00) & 3.00 (0.00) & 28.64 (0.49) \\ 
  uGLM & 159.70 (9.73) & 113.73 (8.16) & 93.52 (2.67) & 0.67 (1.16) & 18.68 (2.77) & 25.05 (1.51) & 8.73 (0.40) \\ 
   \hline
\end{tabular}
}
\endgroup
\end{table}
 
\FloatBarrier 
\begin{table}[!h]
\centering
\parbox{1.00\textwidth}{\caption{Simulation: model evaluation based on 100 replications using various performance measures (standard deviations are shown in parentheses) in Setup I with Poisson responses. \label{table:respTypePoissonI}}} 
\begingroup\scriptsize
\scalebox{0.8}{
\begin{tabular}{llllllll}
  \hline
 & Er($\C$) & Er($\bs\Theta$) & FPR & FNR & R\% & r & time (s) \\  
   \hline 
& \multicolumn{7}{c}{ M\% = 0} \\
 \hline
GOFAR(S) & 4.83 (1.16) & 3.28 (0.62) & 0.33 (0.49) & 0.95 (1.30) & 0.00 (0.00) & 3.00 (0.00) & 390.72 (26.64) \\ 
  GOFAR(P) & 3.51 (0.74) & 2.48 (0.39) & 7.05 (4.57) & 0.00 (0.00) & 0.00 (0.00) & 3.00 (0.00) & 88.01 (5.52) \\ 
  mRRR & 17.78 (2.03) & 8.22 (0.74) & 100.00 (0.00) & 0.00 (0.00) & 3.06 (2.62) & 3.57 (0.56) & 31.66 (0.49) \\ 
  uGLM & 12.09 (1.28) & 7.31 (0.53) & 96.42 (1.13) & 0.00 (0.00) & 8.55 (1.02) & 25.44 (1.23) & 9.61 (0.29) \\ 
   \hline 
& \multicolumn{7}{c}{ M\% = 20} \\
GOFAR(S) & 5.80 (1.26) & 4.06 (0.80) & 0.49 (0.58) & 2.23 (2.48) & 0.00 (0.00) & 3.00 (0.00) & 462.17 (35.74) \\ 
  GOFAR(P) & 4.31 (0.91) & 3.02 (0.52) & 7.67 (4.62) & 0.00 (0.00) & 0.00 (0.00) & 3.00 (0.00) & 102.44 (5.46) \\ 
  mRRR & 21.62 (2.34) & 11.11 (0.96) & 100.00 (0.00) & 0.00 (0.00) & 2.95 (3.20) & 3.38 (0.70) & 31.66 (0.57) \\ 
  uGLM & 15.04 (1.36) & 9.33 (0.66) & 96.49 (1.27) & 0.00 (0.00) & 10.79 (1.26) & 25.51 (0.97) & 19.50 (0.94) \\ 
   \hline
\end{tabular}
}
\endgroup
\end{table}
 
\FloatBarrier 
\begin{table}[!h]
\centering
\parbox{1.00\textwidth}{\caption{Simulation: model evaluation based on 100 replications using various performance measures (standard deviations are shown in parentheses) in Setup I with Gaussian-Bernoulli responses. \label{table:respTypeGaussian-BernoulliI}}} 
\begingroup\scriptsize
\scalebox{0.8}{
\begin{tabular}{llllllll}
  \hline
 & Er($\C$) & Er($\bs\Theta$) & FPR & FNR & R\% & r & time (s) \\  
   \hline 
& \multicolumn{7}{c}{ M\% = 0} \\
 \hline
GOFAR(S) & 54.67 (6.12) & 43.89 (4.64) & 0.11 (0.21) & 0.00 (0.00) & 0.00 (0.00) & 3.00 (0.00) & 34.49 (6.16) \\ 
  GOFAR(P) & 34.20 (8.01) & 26.17 (5.83) & 7.75 (4.93) & 0.52 (1.12) & 0.00 (0.00) & 3.00 (0.00) & 7.30 (1.14) \\ 
  mRRR & 52.79 (6.39) & 24.60 (2.26) & 100.00 (0.00) & 0.00 (0.00) & 0.00 (0.00) & 3.00 (0.00) & 29.86 (0.99) \\ 
  uGLM & 101.97 (7.61) & 64.44 (4.59) & 94.73 (1.39) & 0.00 (0.00) & 10.73 (1.18) & 23.68 (1.56) & 10.17 (0.51) \\ 
  GOFAR(S,S) & 73.67 (17.81) & 56.31 (16.23) & 26.96 (4.84) & 0.46 (0.97) & 23.97 (27.81) & 6.69 (1.05) & 37.15 (12.35) \\ 
   \hline 
& \multicolumn{7}{c}{ M\% = 20} \\
GOFAR(S) & 63.34 (8.38) & 49.79 (5.89) & 0.35 (0.48) & 0.89 (1.19) & 0.00 (0.00) & 3.00 (0.00) & 40.76 (8.00) \\ 
  GOFAR(P) & 43.78 (13.38) & 33.90 (9.52) & 10.54 (6.29) & 2.38 (3.55) & 0.00 (0.00) & 3.00 (0.00) & 9.42 (1.91) \\ 
  mRRR & 65.97 (7.28) & 31.48 (2.78) & 100.00 (0.00) & 0.00 (0.00) & 0.00 (0.00) & 3.00 (0.00) & 30.76 (0.90) \\ 
  uGLM & 121.51 (8.94) & 79.40 (5.77) & 94.18 (1.68) & 0.00 (0.00) & 12.75 (1.63) & 23.69 (1.56) & 6.01 (0.26) \\ 
  GOFAR(S,S) & 98.05 (16.75) & 77.65 (14.57) & 28.15 (6.16) & 3.03 (3.43) & 24.83 (30.04) & 6.44 (1.14) & 59.19 (16.59) \\ 
   \hline
\end{tabular}
}
\endgroup
\end{table}
 
\FloatBarrier 
\begin{table}[!h]
\centering
\parbox{1.00\textwidth}{\caption{Simulation: model evaluation based on 100 replications using various performance measures (standard deviations are shown in parentheses) in Setup I with Gaussian-Poisson responses. \label{table:respTypeGaussian-PoissonI}}} 
\begingroup\scriptsize
\scalebox{0.8}{
\begin{tabular}{llllllll}
  \hline
 & Er($\C$) & Er($\bs\Theta$) & FPR & FNR & R\% & r & time (s) \\  
   \hline 
& \multicolumn{7}{c}{ M\% = 0} \\
 \hline
GOFAR(S) & 4.67 (0.93) & 2.88 (0.49) & 0.17 (0.33) & 0.00 (0.00) & 0.00 (0.00) & 3.00 (0.00) & 371.36 (14.90) \\ 
  GOFAR(P) & 3.06 (0.59) & 2.07 (0.34) & 4.71 (3.89) & 0.00 (0.00) & 0.00 (0.00) & 3.00 (0.00) & 85.02 (3.11) \\ 
  mRRR & 33.60 (1.88) & 28.48 (2.22) & 33.43 (0.00) & 67.24 (0.00) & 0.00 (0.00) & 1.00 (0.00) & 34.31 (0.58) \\ 
  uGLM & 11.77 (1.25) & 6.48 (0.53) & 94.98 (1.39) & 0.00 (0.00) & 8.62 (0.96) & 23.34 (1.44) & 6.08 (0.16) \\ 
  GOFAR(S,S) & 8.55 (2.46) & 6.13 (2.09) & 20.08 (1.87) & 0.00 (0.00) & 32.59 (36.76) & 5.36 (0.70) & 201.87 (14.32) \\ 
   \hline 
& \multicolumn{7}{c}{ M\% = 20} \\
GOFAR(S) & 5.77 (1.32) & 3.56 (0.69) & 0.25 (0.43) & 0.00 (0.00) & 0.00 (0.00) & 3.00 (0.00) & 378.17 (15.04) \\ 
  GOFAR(P) & 3.72 (0.73) & 2.53 (0.45) & 5.48 (4.48) & 0.00 (0.00) & 0.00 (0.00) & 3.00 (0.00) & 85.69 (2.88) \\ 
  mRRR & 36.40 (1.76) & 31.24 (2.35) & 33.43 (0.00) & 67.24 (0.00) & 0.00 (0.00) & 1.00 (0.00) & 33.77 (0.69) \\ 
  uGLM & 14.94 (1.45) & 8.44 (0.66) & 94.95 (1.39) & 0.00 (0.00) & 10.92 (1.30) & 23.71 (1.55) & 11.50 (0.73) \\ 
  GOFAR(S,S) & 8.47 (2.06) & 5.12 (1.37) & 21.41 (2.16) & 0.00 (0.00) & 36.35 (39.47) & 6.00 (0.00) & 231.92 (12.32) \\ 
   \hline
\end{tabular}
}
\endgroup
\end{table}

\FloatBarrier 
\begin{table}[!h]
\centering
\parbox{1.00\textwidth}{\caption{Simulation: model evaluation based on 100 replications using various performance measures (standard deviations are shown in parentheses) in Setup II with Gaussian responses. \label{table:respTypeGaussianII}}} 
\begingroup\scriptsize
\scalebox{0.8}{
\begin{tabular}{llllllll}
  \hline
 & Er($\C$) & Er($\bs\Theta$) & FPR & FNR & R\% & r & time (s) \\  
   \hline 
& \multicolumn{7}{c}{ M\% = 0} \\
 \hline
GOFAR(S) & 5.14 (1.69) & 8.42 (1.96) & 0.35 (0.21) & 0.00 (0.00) & 0.00 (0.00) & 3.00 (0.00) & 41.32 (3.09) \\ 
  GOFAR(P) & 5.37 (1.45) & 10.06 (2.02) & 3.35 (2.33) & 0.00 (0.00) & 0.00 (0.00) & 3.00 (0.00) & 10.67 (0.90) \\ 
  mRRR & 63.82 (7.30) & 116.25 (44.46) & 51.62 (24.36) & 47.35 (23.84) & 0.00 (0.00) & 1.54 (0.72) & 54.71 (1.34) \\ 
  uGLM & 25.85 (3.15) & 41.02 (4.18) & 87.90 (2.88) & 0.00 (0.00) & 7.82 (1.09) & 25.74 (1.14) & 7.39 (0.32) \\ 
   \hline 
& \multicolumn{7}{c}{ M\% = 20} \\
GOFAR(S) & 7.30 (1.65) & 12.23 (2.23) & 0.41 (0.27) & 0.00 (0.00) & 0.00 (0.00) & 3.00 (0.00) & 49.67 (3.57) \\ 
  GOFAR(P) & 6.80 (1.94) & 12.92 (3.17) & 4.06 (2.68) & 0.00 (0.00) & 0.00 (0.00) & 3.00 (0.00) & 13.20 (1.17) \\ 
  mRRR & 69.16 (5.35) & 145.05 (22.09) & 33.26 (0.00) & 65.31 (0.00) & 0.00 (0.00) & 1.00 (0.00) & 55.45 (1.21) \\ 
  uGLM & 32.78 (3.84) & 53.24 (5.27) & 85.38 (4.16) & 0.00 (0.00) & 10.02 (1.65) & 25.81 (1.35) & 5.46 (0.18) \\ 
   \hline
\end{tabular}
}
\endgroup
\end{table}




\subsection{Initialization}\label{subsec:initgsecure}

For a given rank $r$ and the offset term $\bO$,  consider the  optimization problem from mRRR \citep{chenandluo2017}
\begin{align}
(\widetilde{\C}, \widetilde{\bbeta},\widetilde{\Phi}) \equiv \argmin_{\C, \bbeta, \Phi}  \bbL(\bs\Theta,\Phi)  \qquad \M{s.t.} \qquad \M{rank}(\C) \leq r, \label{eq:gsfar:obj1}
\end{align}
where $\bs\Theta = \bs\Theta(\C, \bbeta, \bO)$. The joint estimation of the unknown parameters $(\C, \bbeta, \Phi)$ is nontrivial. To solve the problem, \citet{chenandluo2017} proposed an iterative procedure which proceeds via  $\C$-step, $\bbeta$-step and $\Phi$-step to update the parameters $\C$, $\bbeta$ and $\Phi$, respectively, until convergence.  
We have summarized the suggested procedure in  Algorithm \ref{alg:semrrr}. For convenience, let us denote the general class of problem by $\M{G-INIT}(\C,\bbeta,\Phi; \Y, \X,\bO,r)$. 

After solving the $\M{G-INIT}$ problem using Algorithm \ref{alg:semrrr}, we denote the parameter estimates by $\{\widetilde{\C}$,  $\widetilde{\bbeta}$, $\widetilde{\Phi} \}$. It is then trivial to retrieve the specific SVD decomposition of the coefficient matrix $\widetilde{\C}$ satisfying the orthogonality constraint $\widetilde{\U}\trans\X\trans\X\widetilde{\U}/n = \1$ and $\widetilde{\V}\trans\widetilde{\V} = \1$, where 
\begin{align}
\widetilde{\U} = [\tilde{\u}_1, \ldots, \tilde{\u}_r] , \,\,\,	\widetilde{\V} = [\tilde{\v}_1, \ldots,\tilde{\v}_r], \,\,\,
\widetilde{\D} = \M{diag}[\tilde{d}_1, \ldots, \tilde{d}_r], \label{eq:initval}
\end{align}
with $\widetilde{\C}_i = \tilde{d}_i\tilde{\u}_i \tilde{\v}_i\trans$. Thus, the $\M{G-INIT}(\C,\bbeta,\Phi; \Y, \X,\bO,r)$ problem outputs $\{\widetilde{\D}, \widetilde{\U},\widetilde{\V},\widetilde{\bbeta},\widetilde{\Phi} \}$. 

In summary, we solve $\M{G-INIT}(\C,\bbeta,\Phi; \Y, \X,\bO^{(k)},1)$ to initialize and construct weights in any $k$th step of the sequential approach \textsf{GOFAR(S)}. For the parallel approach \textsf{GOFAR(P)}, we simply solve $\M{G-INIT}(\C,\bbeta,\Phi; \Y, \X,\bO,r)$  to obtain an  initial estimate of the parameters that are used for  specifying the  offset terms and constructing weights.

\begin{algorithm}[ht]
	\caption{Initialization: $\M{G-INIT}(\C,\bbeta,\Phi; \Y, \X,\bO,r)$}
	\begin{algorithmic}\label{alg:semrrr}
	\STATE Given: $\X, \Y, \bO$ and desirable rank $r\geq 1$.
		\STATE Initialize: $\C^{(0)} = \0$, $\bbeta^{(0)}$, $\Phi^{(0)}$. 
		\REPEAT 
		\STATE (1) $\bC$-step: $\C^{(t+1)} = \mathbb{T}^{(r)} ( \C^{(t)} + \X\trans\{\Y - \bB'(\bs\Theta_c^{(t)})\}\Phi^{(t)-1})$ where $\bs\Theta_c^{(t)} = \bs\Theta(\C^{(t)},\bbeta^{(t)}, \bO)$, and $\mathbb{T}^{(r)}(\bM)$ extracts  $r$ SVD components of matrix $\bM$. 
		\vspace{0.1cm}
		\STATE (2) $\bbeta$-step:
		$\bbeta^{(t+1)} =  \bbeta^{(t)} + \Z\trans\{\Y - \bB'(\bs\Theta_{\beta}^{(t)} )\}\Phi^{(t)-1}$ where $\bs\Theta_{\beta}^{(t)} = \bs\Theta(\C^{(t+1)},\bbeta^{(t)}, \bO)$,\\
		\vspace{0.1cm}
		\STATE (3) $\Phi$-step:
		$\Phi^{(t+1)} = \arg\min_{\Phi}\sum_{i,k}  \bbL(\bs\Theta_{\Phi}^{(t)},\Phi) $ where $\bs\Theta_{\Phi}^{(t)} = \bs\Theta(\C^{(t+1)},\bbeta^{(t+1)}, \bO)$,\\
		\vspace{0.2cm}
		\STATE $t\gets t+1$.
		\UNTIL{convergence, \\
			\quad e.g., $\|[\C^{(t+1)} \,\,\bbeta^{(t+1)}] - [\C^{(t)} \,\,\bbeta^{(t)}]\|_F/ \|[\C^{(t)} \,\,\bbeta^{(t)}]\|_F \leq \epsilon$ with $\epsilon = 10^{-6}$.}
		\RETURN{$\widetilde{\C}$,  $\widetilde{\bbeta}$, $\widetilde{\Phi}$.}
	\end{algorithmic}
\end{algorithm}

\subsection{Analysis of the Convex Surrogate Function}\label{sec2:appendix_mmanalysis}
In the $\u$-step, for fixed $\{\v, \bbeta, \bPhi\}$ with $\v\trans\v = 1$, we rewrite the objective function \eqref{eq:gcure:objT} in terms of the product variable $\check{\u} = d\u$  to avoid the quadratic constraints, and conveniently denote it  as $F_{\lambda}(\check{\u},\v,\bbeta,\bPhi)$.
\begin{lemma}\label{th:lemma_majorize}
For fixed $\{\v, \bbeta, \bPhi\}$ with $\v\trans\v = 1$, $G_{\lambda}(\a;\check{\u}) \geq F_{\lambda}(\a,\v,\bbeta,\bPhi)$ for all $\a \in \mathbb{R}^p$ where $\check{\u} = d\u$ and the  scaling factor 
\begin{align}
s_u \geq \gamma_1 =\sup_{\check{\u} \in \mathbb{R}^p} \|\frac{\partial^2\bbL(\bs\Theta,\bPhi) }{\partial \check{\u} \partial \check{\u}\trans}\| = \sup_{\check{\u} \in \mathbb{R}^p}  \| \X\trans \sum_{k=1}^q v_k^2 \bs\zeta( \bs\Theta_{.k}(\check{\u}\v\trans, \bbeta ),a_k(\phi_k) ) \X \|, \label{eq:def_lux}
\end{align}
such that $\bs\zeta( \bs\Theta_{.k}(\check{\u}\v\trans, \bbeta ),a_k(\phi_k) )  = \M{diag}[ \bB_{.k}^{''}(\bs\Theta_{.k}(\check{\u}\v\trans, \bbeta ))]/a_k(\phi_k)$. 
\end{lemma}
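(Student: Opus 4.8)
The plan is to prove the majorization $G_{\lambda}(\a;\check{\u}) \ge F_{\lambda}(\a,\v,\bbeta,\bPhi)$ by reducing it to a single quadratic upper bound for the smooth part of the objective, and then obtaining that bound from a second-order Taylor expansion whose remainder is controlled by $\gamma_1$.

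First I would strip away the pieces common to both functions. Writing $F_{\lambda}(\a,\v,\bbeta,\bPhi) = \bbL(\bs\Theta(\a\v\trans),\bPhi) + \rho(\a\v\trans;\W,\lambda)$ and inspecting the definition of $G_{\lambda}$, both the penalty $\rho(\a\v\trans;\W,\lambda)$ and the likelihood value $\bbL(\bs\Theta(\a\v\trans),\bPhi)$ appear verbatim in the two functions, so the difference $G_{\lambda}(\a;\check{\u}) - F_{\lambda}(\a,\v,\bbeta,\bPhi)$ collapses to $\Tr(\{\bB'(\bs\Theta(\check{\u}\v\trans))\}\trans\X(\a-\check{\u})\v\trans\bPhi^{-1}) - \big(h(\a)-h(\check{\u})\big) + \tfrac{s_u}{2}\|\a-\check{\u}\|_2^2$, where $h(\a) := \Tr(\J\trans\bB(\bs\Theta(\a\v\trans))\bPhi^{-1})$ is the only part of $\bbL(\bs\Theta(\a\v\trans),\bPhi)$ that fails to be affine in $\a$ (the remainder being $-\Tr(\Y\trans\bs\Theta(\a\v\trans)\bPhi^{-1})$, which is its own first-order expansion). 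Hence it suffices to show that $h$ is majorized at $\check{\u}$ by its own first-order expansion plus $\tfrac{s_u}{2}\|\cdot-\check{\u}\|_2^2$; taking $\a=\check{\u}$ simultaneously recovers the tangency $G_{\lambda}(\check{\u};\check{\u}) = F_{\lambda}(\check{\u},\v,\bbeta,\bPhi)$ noted in the text.

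Next I would differentiate $h$ with the chain rule, using $\bs\Theta_{.k}(\a\v\trans,\bbeta,\bO) = \bO_{.k} + \Z\bbeta_{.k} + v_k\X\a$ together with the exponential-dispersion identities $\partial b_k/\partial\theta = b_k'$ and $\partial^2 b_k/\partial\theta^2 = b_k''$. This gives $\nabla h(\a) = \X\trans\bB'(\bs\Theta(\a\v\trans))\bPhi^{-1}\v$ and $\nabla^2 h(\a) = \X\trans\big(\sum_{k=1}^q v_k^2\,\bs\zeta(\bs\Theta_{.k}(\a\v\trans,\bbeta),a_k(\phi_k))\big)\X$ with $\bs\zeta(\bs\Theta_{.k},a_k(\phi_k)) = \M{diag}[\bB_{.k}''(\bs\Theta_{.k})]/a_k(\phi_k)$. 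Since each $b_k$ is convex, $\bB_{.k}'' \ge 0$ entrywise and hence $\nabla^2 h(\a) \succeq 0$ for every $\a$; moreover one checks directly that $\Tr(\{\bB'(\bs\Theta(\check{\u}\v\trans))\}\trans\X(\a-\check{\u})\v\trans\bPhi^{-1})$ equals $\langle \nabla h(\check{\u}),\a-\check{\u}\rangle$, matching the linear term isolated above.

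Finally, the mean-value form of Taylor's theorem applied to the $C^2$ map $\a\mapsto h(\a)$ gives, for every $\a\in\mathbb{R}^p$, a point $\a_0$ on the segment joining $\check{\u}$ and $\a$ with $h(\a) = h(\check{\u}) + \langle\nabla h(\check{\u}),\a-\check{\u}\rangle + \tfrac12(\a-\check{\u})\trans\nabla^2 h(\a_0)(\a-\check{\u})$. Because $\nabla^2 h(\a_0) \succeq 0$ with operator norm at most $\gamma_1$ by the definition of $\gamma_1$ in \eqref{eq:def_lux} (a supremum over all of $\mathbb{R}^p$), and $\gamma_1 \le s_u$, the quadratic remainder is bounded by $\tfrac{s_u}{2}\|\a-\check{\u}\|_2^2$; substituting back and combining with the first step yields $G_{\lambda}(\a;\check{\u}) \ge F_{\lambda}(\a,\v,\bbeta,\bPhi)$ for all $\a\in\mathbb{R}^p$. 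The step I expect to be most delicate is this last one: carefully carrying the factor $v_k$ through the $k$th column of $\bs\Theta$ in the Hessian so that the remainder is genuinely dominated by the quantity $\gamma_1$ appearing in the statement, and, on a more quantitative level, the finiteness of $\gamma_1$ --- automatic for Gaussian and Bernoulli outcomes since then $b_k'' \le 1$, but for Poisson outcomes requiring either a restriction to a bounded region or the use of the empirical upper bound $\alpha_p$ employed in Algorithm \ref{alg:spmrrr}.
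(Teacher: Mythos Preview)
Your proposal is correct and follows essentially the same route as the paper: both reduce the majorization to a quadratic upper bound (descent lemma) for the smooth part of the objective, obtained from a second-order Taylor/mean-value argument with the Hessian bounded by $\gamma_1\le s_u$. The only cosmetic difference is that you isolate the nonlinear piece $h(\a)=\Tr(\J\trans\bB(\bs\Theta(\a\v\trans))\bPhi^{-1})$ first and apply Taylor to it, whereas the paper applies the Lipschitz-gradient inequality directly to the full $\bbL$; since $\bbL$ and $h$ differ by an affine function of $\a$, the two computations are equivalent.
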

\begin{proof}[Proof of Lemma \ref{th:lemma_majorize}: ]
For fixed $\{\v, \bbeta, \bPhi\}$ with $\v\trans\v = 1$, the continuously differentiable  negative log-likelihood function $\bbL(\bs\Theta(\check{\u}\v\trans),\bPhi)$ satisfy 
$$
\|\frac{\partial\bbL(\bs\Theta(\check{\u}\v\trans),\bPhi) }{\partial \check{\u}} - \frac{\partial\bbL(\bs\Theta(\a\v\trans),\bPhi) }{\partial \a}\| \leq \gamma_1  \|\a - \check{\u}\| \leq s_u   \|\a - \check{\u}\|,
$$
for any $ s_u \geq \gamma_1 =  \sup_{\check{\u} \in \mathbb{R}^p} \|\frac{\partial^2\bbL(\bs\Theta,\bPhi) }{\partial \check{\u} \partial \check{\u}\trans}\|$ (follows from the mean value theorem) given by 
\begin{align*}
\gamma_1= \sup_{\check{\u}}  \| \X\trans \sum_{k=1}^q v_k^2 \bs\zeta( \bs\Theta_{.k}(\check{\u}\v\trans, \bbeta ),a_k(\phi_k) ) \X \|,
\end{align*}
where $\bs\zeta( \bs\Theta_{.k}(\check{\u}\v\trans, \bbeta ),a_k(\phi_k) )  = \M{diag}[ \bB_{.k}^{''}(\bs\Theta_{.k}(\check{\u}\v\trans, \bbeta ))]/a_k(\phi_k)$. 
It is than trivial to show that  
\begin{align}
\bbL(\bs\Theta(\a\v\trans),\bPhi) \leq \bbL(\bs\Theta(\check{\u}\v\trans),\bPhi)   + (\a-\check{\u})\trans\frac{\partial\bbL(\bs\Theta,\bPhi) }{\partial \check{\u}}  +  \frac{s_u}{2} \|\a-\check{\u}\|_2^2. \label{eq:supp_res1}
\end{align}
On simplifying the surrogate function for fixed $\{\v, \bbeta, \bPhi\}$ with $\v\trans\v = 1$, we have 
\begin{align}
&G_{\lambda}(\a; \check{\u})  =  \bbL(\bs\Theta(\check{\u}\v\trans),\bPhi)   + \v\trans \bPhi^{-1} \{\bB^'(\bs\Theta(\check{\u}\v\trans )) -\Y \}\trans \X(\a-\check{\u})  +  \notag \\ 
& \qquad\frac{s_u}{2} \|\a-\check{\u}\|_2^2  +\rho(\a\v\trans;\W,\lambda)  \notag \\
&  =  \bbL(\bs\Theta(\check{\u}\v\trans),\bPhi)   + (\a-\check{\u})\trans\frac{\partial\bbL(\bs\Theta,\bPhi) }{\partial \check{\u}}  +  \frac{s_u}{2} \|\a-\check{\u}\|_2^2  +\rho(\a\v\trans;\W,\lambda).  \notag \\
&  \geq  \bbL(\bs\Theta(\a\v\trans),\bPhi)  + \rho(\a\v\trans;\W,\lambda)  =  F_{\lambda}(\a,\v,\bbeta,\bPhi), \notag
\end{align}
where last inequality follows from the result in \eqref{eq:supp_res1}. 
\end{proof}

Similarly, on extending Lemma \ref{th:lemma_majorize} for  $\v$-step, the surrogate function $H_{\lambda}(\b;\check{\v})$ majorizes the objective function  for 
\begin{align}
s_v \geq \gamma_2 =  \max_{1\leq k \leq q} \sup_{\check{\v} \in \mathbb{R}^q}   \|\u\trans \X\trans   \bs\zeta( \bs\Theta_{.k}(\u\check{\v}\trans, \bbeta ),a_k(\phi_k) ) \X\u\|, \label{eq:def_lvx} 
\end{align}
where $\bs\zeta( \bs\Theta_{.k}(\u\check{\v}\trans, \bbeta ),a_k(\phi_k))  = \M{diag}[ \bB_{.k}^{''}(\bs\Theta_{.k}(\u\check{\v}\trans, \bbeta ))]/a_k(\phi_k)$. 

Again, on extending Lemma \ref{th:lemma_majorize} for the $\bbeta$-step, the surrogate function $K(\balpha;\bbeta)$   majorizes the objective function  for  
\begin{align}
s_{\beta} \geq \gamma_3 =  \max_{1\leq k \leq q} \sup_{\bbeta}\| \Z\trans \bs\zeta(\bs \Theta_{.k}(\C, \bbeta ),a_k(\phi_k) )  \Z \|, \label{eq:def_lzx} 
\end{align}
where  $\bs\zeta( \bs\Theta_{.k}(\C, \bbeta ),a_k(\phi_k))  = \M{diag}[ \bB_{.k}^{''}(\bs\Theta_{.k}(\C, \bbeta ))]/a_k(\phi_k)$.

\subsection{Proof of Theorem \ref{th:convergence}}\label{sec2:motonodec}
	We show that Algorithm \ref{alg:spmrrr} admits desirable convergence properties. Let $\L^{(t)} = (d^{(t)},\u^{(t)}, \v^{(t)},  \bbeta^{(t)},\Phi^{(t)})$ be the parameter estimates in the $t$th iteration. 
	\paragraph{$\u$-step} Here parameters $\{\v^{(t)},  \bbeta^{(t)},\Phi^{(t)}\}$ are fixed. For convenience, we denote the objective function $F_{\lambda}( d^{(t)},\u^{(t)}, \v^{(t)},  \bbeta^{(t)},\Phi^{(t)}) $ by $F_{\lambda}( d^{(t)},\u^{(t)}) $. 
	Then, the unit-rank matrix $\C^{(t)} = d^{(t)} \u^{(t)} \v^{(t)}\trans$. For the  ease of presentation, we represent the natural parameter matrix as 
	\begin{align}
	\bs\Theta(\C^{(t)},\bbeta^{(t)}) = \bs\Theta(d^{(t)},\u^{(t)}, \v^{(t)},\bbeta^{(t)})  = \bO + \X\C^{(t)} + \Z \bbeta^{(t)}. \label{thetadef2}
	\end{align}
	Define $\check{\u} = d\u$ and $\check{\u}^{(t)} = d^{(t)}\u^{(t)}$. In the $\u$-step, the  unique and optimal solution minimizing the surrogate function $G_{\lambda}(\check{\u}; \check{\u}^{(t)})$ is given by $\check{\u}^{(t+1)} =  \td^{(t+1)} \u^{(t+1)} $. Using the result in Lemma \ref{th:lemma_majorize}, the convex surrogate function majorize the objective function in $\u$-step for the scaling factor $s_u$ \eqref{eq:def_lux},  i.e., 
	\begin{align*}
	F_{\lambda}( d, \u ) \leq G_{\lambda}(\check{\u}; \check{\u}^{(t)}) \qquad \forall \qquad \check{\u} \in \mathbb{R}^p.
	\end{align*}
	For the unique optimal solution $\check{\u}^{(t+1)}$, we have 
	\begin{align*}
	F_{\lambda}( \td^{(t+1)}, \u^{(t+1)}  ) \leq G_{\lambda}(\check{\u}^{(t+1)}; \check{\u}^{(t)}) \leq G_{\lambda}(\check{\u}^{(t)}; \check{\u}^{(t)}) = F_{\lambda}( d^{(t)},\u^{(t)}),
	\end{align*}
implies   $F_{\lambda}( \td^{(t+1)},\u^{(t+1)}, \v^{(t)},  \bbeta^{(t)},\Phi^{(t)}) \leq F_{\lambda}( d^{(t)},\u^{(t)}, \v^{(t)},  \bbeta^{(t)},\Phi^{(t)}) $.

		\paragraph{$\v$-step} Since parameters $\{\u^{(t+1)},  \bbeta^{(t)},\Phi^{(t)}\}$ are fixed, for convenience, denote the objective function $F_{\lambda}( d,\u^{(t)}, \v,  \bbeta^{(t)},\Phi^{(t)}) $ by $F_{\lambda}( d, \v) $. 
	Define $\check{\v}= d\v$ and $\check{\v}^{(t)} = \td^{(t+1)}\v^{(t)}$. In the $\v$-step, the unique and optimal solution minimizing the surrogate function $H_{\lambda}(\check{\v}; \check{\v}^{(t)})$ is given by $\check{\v}^{(t+1)} =  d^{(t+1)} \v^{(t+1)} $. Again, for the $\v$-step, we have  
		\begin{align*}
	F_{\lambda}( d, \v ) \leq  H_{\lambda}(\check{\v}; \check{\v}^{(t)}) \qquad \forall \qquad \check{\v} \in \mathbb{R}^q,
	\end{align*}
	for the suitable scaling factor $s_v$  \eqref{eq:def_lvx}. 
		For the unique optimal solution $\check{\v}^{(t+1)}$, we have 
	\begin{align*}
	F_{\lambda}( d^{(t+1)}, \v^{(t+1)}  ) \leq H_{\lambda}(\check{\v}^{(t+1)} ; \check{\v}^{(t)})\leq H_{\lambda}(\check{\v}^{(t)} ; \check{\v}^{(t)})= F_{\lambda}( \td^{(t+1)},\v^{(t)}),
	\end{align*}
implies   $F_{\lambda}( d^{(t+1)},\u^{(t+1)}, \v^{(t+1)},  \bbeta^{(t)},\Phi^{(t)}) \leq F_{\lambda}( \td^{(t+1)},\u^{(t+1)}, \v^{(t)},  \bbeta^{(t)},\Phi^{(t)}) $.

\paragraph{$\bbeta$-step}
Since parameters $\{\u^{(t+1)}, d^{(t+1)}, \v^{(t+1)},\Phi^{(t)}\}$ are fixed, for convenience, denote the objective function $F_{\lambda}( d^{(t+1)},\u^{(t+1)}, \v^{(t+1},  \bbeta^{(t)},\Phi^{(t)}) $ by $F_{\lambda}( \bbeta^{(t)}) $. 
In the $\bbeta$-step, the unique and optimal solution minimizing the surrogate function $K(\bbeta; \bbeta^{(t)})$ is given by $\bbeta^{(t+1)}$. Again, for the $\bbeta$-step, we have  
		\begin{align*}
	F_{\lambda}( \bbeta ) \leq  K(\bbeta; \bbeta^{(t)}) \qquad \forall \qquad \bbeta \in \mathbb{R}^{p_z \times q},
	\end{align*}
	for the suitable scaling factor $s_{\beta}$ (defined in Equation \eqref{eq:def_lzx}). 
			For the unique optimal solution $\bbeta^{(t+1)}$, we have 
	\begin{align*}
	F_{\lambda}( \bbeta^{(t+1)} ) \leq K(\bbeta^{(t+1)}; \bbeta^{(t)}) \leq K(\bbeta^{(t)}; \bbeta^{(t)})= F_{\lambda}( \bbeta^{(t+1)}),
	\end{align*}
implies   $F_{\lambda}( d^{(t+1)},\u^{(t+1)}, \v^{(t+1)},  \bbeta^{(t+1)},\Phi^{(t)}) \leq F_{\lambda}( d^{(t+1)},\u^{(t+1)}, \v^{(t+1)},  \bbeta^{(t)},\Phi^{(t)})$.

	Finally, the unknown dispersion parameters are estimated based on maximizing the log-likelihood, so it is guaranteed to have a non-increasing objective function. Thus, on adding the  results from the $\u$-step, $\v$-step and $\bbeta$-step, the proof of Theorem \ref{th:convergence} easily follows.

\subsection{Proof of Theorem \ref{Sec2:TH:localminima}}\label{supp:asyproof}
Using the SVD in \eqref{sec2:eq:cstar}, we define the set $\Omega_k$ as
	$$
	\Omega_k = \left\{(\u_k, \v_k, \bbeta): \u_k \in \mathbb{R}^p \M{ and  } \v_k \in  \mathbb{R}^q  \M{ with } v_{\ell_k k} = 1 , \bbeta \in \bbR^{p_z \times q} \right\}.
	$$
	Then, $(\what{\u}_k,\what{\v}_k, \what{\bbeta}) \in \Omega_k$. 
	We first prove the result for $k=1$. Consider a neighborhood of ($\u_1^*, \v_1^*, \bbeta^*$) with radius $h>0$,
	\begin{align*}
	\mathcal{N}(\u_1^*, \v_1^*,\bbeta^*,h) = &   \{(\u_1^*+\a/\sqrt{n})(\v_1^*+\b/\sqrt{n})\trans , \bbeta^* + \mcA/\sqrt{n} \} ;\\ 
	&\M{s.t.} \quad \|\bGamma^{1/2}\a\| \leq h, \a \in\mathbb{R}^p, \b \in \mathbb{R}^q, \|\b\|\leq h, b_{\ell_1}=0, \|\mcA\|\leq h .
	\end{align*}
	We claim that for any $\epsilon > 0$, there exists a large enough $h$ such that 
	\begin{align}
	P &\left\{ \underset{\substack{\|\Gamma^{1/2}\a\| = \|\b\| = \\ \|\mcA\|  = h}}{\M{inf}}  F_1^{(n)}(\what{\u}_1,\what{\v}_1, \what{\bbeta} )  >  F_1^{(n)}(\u_1^*,\v_1^*,\bbeta^*)  \right\}  \geq 1-\epsilon \label{sec2:the:consist1}
	\end{align}
	where $\what{\u}_1 = \u_1^*+\a/\sqrt{n}$, $\what{\v}_1 = \v_1^*+\b/\sqrt{n}$, and $\what{\bbeta} = \bbeta^*+\mcA/\sqrt{n}$. 
	The claim implies that with probability at least $1-\epsilon$,  there exists a local minimum $(\what{\u}_1,\what{\v}_1,\what{\bbeta})$ in the interior of $\mathcal{N}(\u_1^*,\v_1^*,\bbeta^*,h)$, resulting in  $\|\what{\u}_1 - \u_1^* \| = O_p(n^{-1/2})$, $\|\what{\v}_{1 } - \v_{1}^* \| = O_p(n^{-1/2})$ and $\|\what{\bbeta} - \bbeta^* \| = O_p(n^{-1/2})$.
	
	Now, let us assume that  $\what{\C}_1 = \what{\u}_1 \what{\v}_1\trans$ and $\C_1^* = \u_1^*\v_1^*\trans$, and write the  true and estimated natural parameter matrix by  $\bs\Theta_1^* = \bO_1 + \X\C_1^* + \Z\bbeta^*$ and  $\what{\bs\Theta}_1 = \bO_1 + \X\what{\C}_1 + \Z\what{\bbeta}$, respectively. 
To prove the result in \eqref{sec2:the:consist1}, we define
	\begin{align}
	\Psi_1^{(n)}(\a,\b,\mcA)  & = F_1^{(n)}(\u_1^*+\a/\sqrt{n},\v_1^*+\b/\sqrt{n}, \bbeta^*+\mcA/\sqrt{n})  -  F_1^{(n)}(\u_1^*,\v_1^*,\bbeta^*) \notag \\
	& = T_1 + T_2 + T_3, \label{sec2:psifun} 
	\end{align}
	where 
	\begin{equation}
	\left .
	\begin{aligned}
	T_1 & = - \Tr(\Y\trans\{\what{\bs\Theta}_1 - \bs\Theta_1^* \} )+ \Tr(\J\trans \{ \bB(\what{\bs\Theta}_1) - \bB(\bs\Theta_1^*) \} )\\
	T_2 & = \alpha\lambda_1^{(n)}  \left\{\| \W_1 \circ  (\u_1^*+\a/\sqrt{n}) (\v_1^*+\b/\sqrt{n})\trans\|_1 - \|\W_1 \circ  \u_1^*\v_1\strans \|_1 \right\}, \\
	T_3 & =(1- \alpha)\lambda_1^{(n)} \{ \|(\u_1^*+\a/\sqrt{n}) (\v_1^*+\b/\sqrt{n})\trans \|_F^2-  \|\u_1^*\v_1\strans \|_F^2 \}.
	\end{aligned}
	\right \} . \label{sec2:TTT}
	\end{equation}
	Using the details  of the proof of Theorem 5.2 in \citet{mishra2017sequential}, it can be verified that the terms $T_2$ and $T_3$ are of $O(h)$. 
	Now, we focus on simplifying $T_1$. Using assumption \textbf{A4} (strictly convex), we have 
	$$
	\Tr(\J\trans\{\bB(\what{\bs\Theta}_1) - \bB(\bs\Theta_1^*)\})\,\, \geq \,\, \Tr( \{\bB^'(\bs\Theta_1^*)\}\trans \{\what{\bs\Theta}_1 - \bs\Theta_1^* \})+ {\gammaL \over 2} \|\what{\bs\Theta}_1 - \bs\Theta_1^* \|_F^2,
	$$
	and write 
		\begin{align}
	T_1 & \geq - \Tr( \Y\trans \{\what{\bs\Theta}_1 - \bs\Theta_1^*\} )   + \Tr( \{ \bB^'(\bs\Theta_1^*)\}\trans \{\what{\bs\Theta}_1 - \bs\Theta_1^*\} ) + {\gammaL \over 2} \|\what{\bs\Theta}_1 - \bs\Theta_1^* \|_F^2 \notag \\
	& = - T_{11} + T_{12} , \label{eq:consistencyT1}
	\end{align}
	where 
	\begin{align*}
	T_{11} & =\Tr( \E\trans \{\bs\Theta_1^* -\what{\bs\Theta}_1\} ) \\
	T_{12} &=\Tr( \{\bB^'(\bs\Theta_1^*) -\bB^'(\bs\Theta^*)\}\trans \{\what{\bs\Theta}_1 - \bs\Theta_1^* \} )+ {\gammaL \over 2} \|\what{\bs\Theta}_1 - \bs\Theta_1^* \|_F^2.
	\end{align*}
{ Then, the bound  $T_{11} \leq  \| \bP_{\widetilde{\X}} \E \| \{\M{rank}(\C^*) + \M{rank}(\what{\C}) + \M{rank}(\Z\bbeta^*) \} \| \bs\Theta_1^* -\what{\bs\Theta}_1\|_F$ implies $T_{11}$ is of $O(h)$ where $\bP_{\widetilde{\X}}$ is the projection matrix of $\widetilde{\X} = [ \Z \,\, \X]$.}
On combining the results obtained so far, we get 
	\begin{align*}
	\Psi_1^{(n)}(\a,\b,\mcA)  &\geq T_{12} + O(h) + O_p(1/\sqrt{n}).
	\end{align*}

	Now, we shift our focus to  simplifying $T_{12}$. Again, using assumption \textbf{A4}, we have  
	$$
	\Tr( \{\bB^'(\bs\Theta_1^*) -\bB^'(\bs\Theta^*) \}\trans \{\what{\bs\Theta}_1 - \bs\Theta_1^*\} )\geq \gammaL\Tr( \{\bs\Theta_1^* -\bs\Theta^* \}\trans \{\what{\bs\Theta}_1 - \bs\Theta_1^*\} ),
	$$
	resulting in 
	\begin{align*}
	T_{12} \geq   \gammaL \Tr(\{\bs\Theta_1^* -\bs\Theta^* \}\trans\{\what{\bs\Theta}_1 - \bs\Theta_1^*\} ) + {\gammaL  \over 2} \|\what{\bs\Theta}_1 - \bs\Theta_1^* \|_F^2 .
	\end{align*}
We simplify $T_{12}$ as per the  choice of the estimation procedure, i.e., GOFAR(S) and GOFAR(P). 

\paragraph{CASE - GOFAR(S)}
On substituting the $\bs\Theta^* = \X\C^* + \Z\bbeta^*, \bs\Theta_1^* = \X\C_1^* + \Z\bbeta^*$ and $\what{\bs\Theta}_1 = \X\what{\C}_1 + \Z\what{\bbeta}$, we get 
	\begin{align}
	T_{12} \geq   T_{12}^a + T_{12}^b  , \label{eq:t12consis} 
	\end{align}
	where $T_{12}^a = \gammaL \Tr(\{-\X \C_{-1}^*\}\trans \{ {\Z\mcA \over \sqrt{n}} +  { \X \over \sqrt{n} } (\u_1^* \b\trans + \a\v_1\strans + {\a\b\trans \over \sqrt{n}} ) \} )$ and $T_{12}^b =  {\gammaL  \over 2n} \| \Z\mcA +  \X (\u_1^* \b\trans + \a\v_1\strans)\|_F^2$
	with $\C_{-1}^* = \sum_{l>1}^{r^*} \C_{l}^* $. Under  assumption \textbf{A1} and the orthogonal decomposition  defined in equation \eqref{sec2:eq:cstar}, we simplify  the term on the  right-hand side  of \eqref{eq:t12consis}  as 
	\begin{align*}
	& T_{12}^b  = {1 \over 2n} \| \Z\mcA +  \X (\u_1^* \b\trans + \a\v_1\strans)\|_F^2 = {1 \over 2n} (\| \Z\mcA\|_F^2 +  \|\X (\u_1^* \b\trans + \a\v_1\strans)\|_F^2),\\
	& T_{12}^a  =\Tr( \{-\X \C_{-1}^* \}\trans { \X \over \sqrt{n} } (\u_1^* \b\trans + \a\v_1\strans + {\a\b\trans \over \sqrt{n}} )  )= -\sum_{l>1}^{r^*} \a\trans \bGamma_1\u_l^* \b\trans \v_l^* ,\\
	&\Tr(\{-\X \C_{-1}^*\}\trans  {\Z\mcA \over \sqrt{n}} )= 0 
	\end{align*}
	Thus, we have 
	\begin{align*}
	T_{12} \geq   \gammaL \Big( {1 \over 2n}\| \Z\mcA\|_F^2 +  {1 \over 2n}\|\X (\u_1^* \b\trans + \a\v_1\strans)\|_F^2  -\sum_{l>1}^{r^*} \a\trans \bGamma_1\u_l^* \b\trans \v_l^* \Big).
	\end{align*}
	Under the assumptions \textbf{A1 - A4}, the proof of Theorem 5.3  in \citet{mishra2017sequential} suggests that the term 
	$$
	{1 \over 2n}\|\X (\u_1^* \b\trans + \a\v_1\strans)\|_F^2  -\sum_{l>1}^{r^*} \a\trans \bGamma_1\u_l^* \b\trans \v_l^* \geq 0, 
	$$
	i.e., it is positive semidefinite  and of order $O(h^2)$. Also,  $ {1 \over 2n}\| \Z\mcA\|_F^2$ is of order $O(h^2)$. From this, we conclude that the quadratic terms $	T_{12} $ involving  $\a$, $\b$ and $\mcA$ are of $O(h^2)$ and positive. It dominate the other terms of order $O(h)$ for a sufficiently large $h$. Hence, $\Psi_1^{(n)}(\a,\b,\mcA) \geq 0$.

	Now, we extend the result for the estimate of the  $k$th unit-rank component, i.e.,  ($\what{\u}_k,\what{\v}_k$). For $l=1,\ldots,k-1$, we have $\|\what{\u}_l - \u_l^* \| = O_p(n^{-1/2})$ and $\|\what{\v}_{l} - \v_{l}^* \| = O_p(n^{-1/2})$. Define	
	\begin{align*}
	\mathcal{N}(\C_k^*,\bbeta^*,h) = &   \{(\u_k^*+\a/\sqrt{n})(\v_k^*+\b/\sqrt{n})\trans , \bbeta^* + \mcA/\sqrt{n} \} ;\\ 
	&\M{s.t.} \quad \|\bGamma^{1/2}\a\| \leq h, \a \in\mathbb{R}^p, \b \in \mathbb{R}^q, \|\b\|\leq h, b_{\ell_k}=0, \|\mcA\|\leq h .
	\end{align*}
	We claim that for any $\epsilon > 0$, there exists a large enough $h$ such that 
	\begin{align}
	P\left\{ \underset{\substack{\|\Gamma^{1/2}\a\| = \|\b\| = \\ \|\mcA\|  = h}}{\M{inf}}  F_k^{(n)}(\what{\u}_k, \what{\v}_k, \what{\bbeta})  >  F_k^{(n)}(\u_k^*,\v_k^*,\bbeta^*)  \right\} \geq 1-\epsilon,  \label{the:consist12}
	\end{align}
	where $\what{\u}_k = \u_k^*+\a/\sqrt{n}, \,\, \what{\v}_k = \v_k^*+\b/\sqrt{n}, \,\,  \what{\bbeta} = \bbeta^*+\mcA/\sqrt{n}$, and the  offset term $\bO_k = \X \sum_{l=1}^{k-1}\what{\C}_l$. 
	For $\what{\C}_k = \what{\u}_k \what{\v}_k\trans$, define the natural parameter matrix estimate $\what{\bs\Theta}_k = \bO_k + \X\what{\C}_k + \Z\what{\bbeta}$. Similarly, define the  corresponding true natural parameter matrix $\bs\Theta_k^* = \bO_k + \X\C_k^* + \Z\bbeta^*$. Following \eqref{sec2:psifun}, for the $k$th step estimate, we formulate
	\begin{align}
	\Psi_k^{(n)}(\a,\b,\mcA)  & = F_k^{(n)}(\u_k^*+\a/\sqrt{n},\v_k^*+\b/\sqrt{n}, \bbeta^*+\mcA/\sqrt{n})  -  F_k^{(n)}(\u_k^*,\v_k^*,\bbeta^*) \notag \\
	& = F_k^{(n)}(\what{\Theta}_k) - F_k^{(n)}(\Theta_k^*) \notag \\
	& = T_1 + T_2 + T_3, \label{psifunk} 
	\end{align}
	where $T_1$, $T_2$, $T_3$ are obtained by  replacing $(\bO_1,\u_1^*,\v_1^*,\W_1,\lambda_1^{(n)})$ with $(\bO_k,\u_k^*,\v_k^*,\W_k,\lambda_k^{(n)})$ in equation \eqref{sec2:TTT}. Again, we write $T_1=T_{11}+T_{12}$, where
	\begin{align*}
	T_{11} & =\Tr(\E\trans \{\bs\Theta_k^* -\what{\bs\Theta}_k\} ) \\
	T_{12} &=\Tr(\{\bB^'(\bs\Theta_k^*) -\bB^'(\bs\Theta^*)\}\trans \{\what{\bs\Theta}_k - \bs\Theta_k^* \})+ {\gammaL \over 2} \|\what{\bs\Theta}_k - \bs\Theta_k^* \|_F^2.
	\end{align*}
	Following the proof  for  $k=1$, we conclude that $T_{11}$, $T_2$ and $T_3$ are of $O(h)$. 
	
	Again, to simplify $T_{12}$, we apply assumption \textbf{A4} 
	and conveniently write
	$$
	\Tr( \{\bB^'(\bs\Theta_k^*) -\bB^'(\bs\Theta^*)\}\trans \{\what{\bs\Theta}_k - \bs\Theta_k^*\} )\geq \gammaL\Tr(\{\bs\Theta_k^* -\bs\Theta^*\}\trans \{\what{\bs\Theta}_k - \bs\Theta_k^*\} )
	$$
	which results in 
	\begin{align*}
	T_{12} \geq   \gammaL (\Tr( \{\bs\Theta_k^* -\bs\Theta^* \}\trans\{\what{\bs\Theta}_k - \bs\Theta_k^*\} ) + {1 \over 2} \|\what{\bs\Theta}_k - \bs\Theta_k^* \|_F^2  ). 
	\end{align*}
	On replacing  $\bs\Theta_k^*, \bs\Theta^*$ and $\what{\bs\Theta}_k$ with their linear forms, we have 
	\begin{align*}
	T_{12} \geq   \gammaL (\Tr( \{ \X \sum_{i=1}^{k-1} (\what{\C}_i - \C_i^*)-\X \C_{-k}^*\}\trans & \{ {\Z\mcA \over \sqrt{n}} +  { \X \over \sqrt{n} } (\u_k^* \b\trans + \a\v_k\strans + {\a\b\trans \over \sqrt{n}} ) \} )\\ & + {1 \over 2n} \| \Z\mcA +  \X (\u_k^* \b\trans + \a\v_k\strans)\|_F^2 ),
	\end{align*}
	where $\C_{-k}^* = \sum_{l>k}^{r^*} \C_{l}^* $. Given  assumption \textbf{A2} and the $\sqrt{n}$ consistency of $\what{\C}_i$ for $i < k$, we have
	\begin{align*}
	&{1 \over 2n} \| \Z\mcA +  \X (\u_k^* \b\trans + \a\v_k\strans)\|_F^2 = {1 \over 2n} (\| \Z\mcA\|_F^2 +  \|\X (\u_k^* \b\trans + \a\v_k\strans)\|_F^2),\\
	&\Tr( \{-\X \C_{-k}^* \}\trans  { \X \over \sqrt{n} } (\u_k^* \b\trans + \a\v_k\strans + {\a\b\trans \over \sqrt{n}} )  )= -\sum_{l>k}^{r^*} \a\trans \bGamma_1\u_l^* \b\trans \v_l^* ,\\
	&\Tr( \{-\X \C_{-k}^*\}\trans {\Z\mcA \over \sqrt{n}})= 0 , \\
	&\Tr( \{\X \sum_{i=1}^{k-1} (\what{\C}_i - \C_i^*) \}\trans  {\Z\mcA \over \sqrt{n}}) = 0 ,\\
	&\Tr(\{ \X \sum_{i=1}^{k-1} (\what{\C}_i - \C_i^*)\}\trans  { \X \over \sqrt{n} } (\u_k^* \b\trans + \a\v_k\strans + {\a\b\trans \over \sqrt{n}} )  )= O(h) .
	\end{align*}	
	Hence,
	\begin{align*}
	T_{12} \geq   \gammaL ( {1 \over 2n}\| \Z\mcA\|_F^2 +  {1 \over 2n}\|\X (\u_k^* \b\trans + \a\v_k\strans)\|_F^2  -\sum_{l>k}^{r^*} \a\trans \bGamma_1\u_l^* \b\trans \v_l^* + O(h)  ).
	\end{align*}		
	The rest of the proof is similar to the case of $k=1$, where we prove the result by following  Theorem 5.2 of \citet{mishra2017sequential}. 	This completes the proof for the case of GOFAR(S). 
	
	\paragraph{CASE - GOFAR(P)} In proving the result for GOFAR(P), we mainly follow the steps of the proof for the case of GOFAR(S). In the $k$th step of GOFAR(P), we set the  offset term $\bO_k = \X \sum_{l \neq k}\widetilde{\C}_l$ and  define $\what{\bs\Theta}_k = \bO_k + \X\what{\C}_k + \Z\what{\bbeta}$ and $\bs\Theta_k^* = \bO_k + \X\C_k^* + \Z\bbeta^*$. The two approaches differ mainly in simplifying the term $T_{12}$, i.e., 
		\begin{align*}
	T_{12} \geq   \gammaL (\Tr( \{\X \sum_{i \neq k} (\widetilde{\C}_i - \C_i^*) \}\trans & \{  {\Z\mcA \over \sqrt{n}} +  { \X \over \sqrt{n} } (\u_k^* \b\trans + \a\v_k\strans + {\a\b\trans \over \sqrt{n}} ) \} )\\ & + {1 \over 2n} \| \Z\mcA +  \X (\u_k^* \b\trans + \a\v_k\strans)\|_F^2 ).
	\end{align*}
Given assumption \textbf{A2} and the  $\sqrt{n}$ consistency of $\widetilde{\C}_i$ for $i \neq  k$    (assumption \textbf{A5}), we have
	\begin{align*}
	&{1 \over 2n} \| \Z\mcA +  \X (\u_k^* \b\trans + \a\v_k\strans)\|_F^2 = {1 \over 2n} (\| \Z\mcA\|_F^2 +  \|\X (\u_k^* \b\trans + \a\v_k\strans)\|_F^2),\\
	&\Tr(\{ \X \sum_{i \neq k} (\widetilde{\C}_i - \C_i^*) \}\trans {\Z\mcA \over \sqrt{n}}) = 0 ,\\
	&\Tr( \{ \X \sum_{i \neq k} (\widetilde{\C}_i - \C_i^*)\}\trans  { \X \over \sqrt{n} } (\u_k^* \b\trans + \a\v_k\strans + {\a\b\trans \over \sqrt{n}} ))= O(h).
	\end{align*}	
	This results in 
	\begin{align*}
	T_{12} \geq   \gammaL ( {1 \over 2n}\| \Z\mcA\|_F^2 +  {1 \over 2n}\|\X (\u_k^* \b\trans + \a\v_k\strans)\|_F^2   + O(h)  ).
	\end{align*}		
The rest of the proof is similar to the case of $k=1$, where we prove the result by following  Theorem 5.2 of \citet{mishra2017sequential}. 	This completes the proof for the case of GOFAR(P).



\end{document}